\newtheorem{question}{Question}
\newcommand{\Hyp}{\mathbb{H}}
\newcommand{\Reals}{\mathbb{R}}
\newcommand{\poly}{\mathrm{poly}}
\newcommand{\Oh}{\mathcal{O}}
\newcommand{\D}{\mathrm{d}}
\newcommand{\conv}{\mathrm{conv}}
\newcommand{\convGK}{G_K}
\newcommand{\dist}{\mathrm{dist}}
\newcommand{\area}{\mathrm{Area}}
\newcommand{\len}{\mathrm{len}}
\renewcommand{\angle}{\sphericalangle}
\newcommand{\interior}{\mathrm{Int}}
\title{Shortest Paths, Convexity, and Treewidth in Regular Hyperbolic Tilings}
\author{S\'andor Kisfaludi-Bak}{Department of Computer Science, Aalto University, Finland}{sandor.kisfaludi-bak@aalto.fi}{https://orcid.org/0000-0002-6856-2902}{Supported by the
Research Council of Finland, Grant 363444.}
\author{Tze-Yang {Poon}}{Department of Computer Science, University of Oxford, United Kingdom}{poontzeyang@gmail.com}{https://orcid.org/0009-0006-2260-9580}{}
\author{Geert {van Wordragen}}{Department of Computer Science, Aalto University, Finland}{geert.vanwordragen@aalto.fi}{https://orcid.org/0000-0002-2650-638X}{}
\authorrunning{S\'andor Kisfaludi-Bak, Tze-Yang {Poon}, Geert {van Wordragen}}
\keywords{Hyperbolic tiling, Geodesic convexity in graphs, Steiner Tree, Subset TSP} %TODO mandatory; please add comma-separated list of keywords
\begin{document}

\maketitle

\begin{abstract}
Hyperbolic tilings are natural infinite planar graphs where each vertex has degree $q$ and each face has $p$ edges for some $\frac1p+\frac1q<\frac12$. We study the structure of shortest paths in such graphs. We show that given a set of $n$ terminals, we can compute a so-called isometric closure (closely related to the geodesic convex hull) of the terminals in near-linear time, using a classic geometric convex hull algorithm as a black box. We show that the size of the convex hull is $\Oh(N)$ where $N$ is the total length of the paths to the terminals from a fixed origin.

Furthermore, we prove that the geodesic convex hull of a set of $n$ terminals has treewidth only $\max(12,\Oh(\log\frac{n}{p + q}))$, a bound independent of the distance of the points involved. As a consequence, we obtain algorithms for subset TSP and Steiner tree with running time $\Oh(N \log N) + \poly(\frac{n}{p + q}) \cdot N$.
\end{abstract}

\section{Introduction}

The uniform tilings of the three geometries (Euclidean, elliptical/spherical, and hyperbolic) are fundamental discrete structures that have received a lot of well-deserved attention in algorithms and discrete mathematics. In the geometric algorithms literature, these objects are studied in their own right, and they are also the basis of countless algorithmic techniques: e.g., a square grid is very often the basis of geometric approximation algorithms, but the triangular and hexagonal grids are also often sought out for their properties. In the spherical setting, the tilings correspond to the platonic solids; these are finite structures that have been used and studied for millennia.

\begin{figure}
    \centering
    \includegraphics{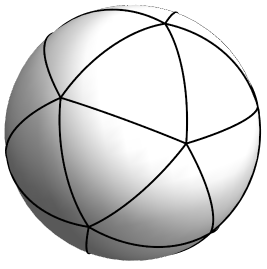}
    \includegraphics{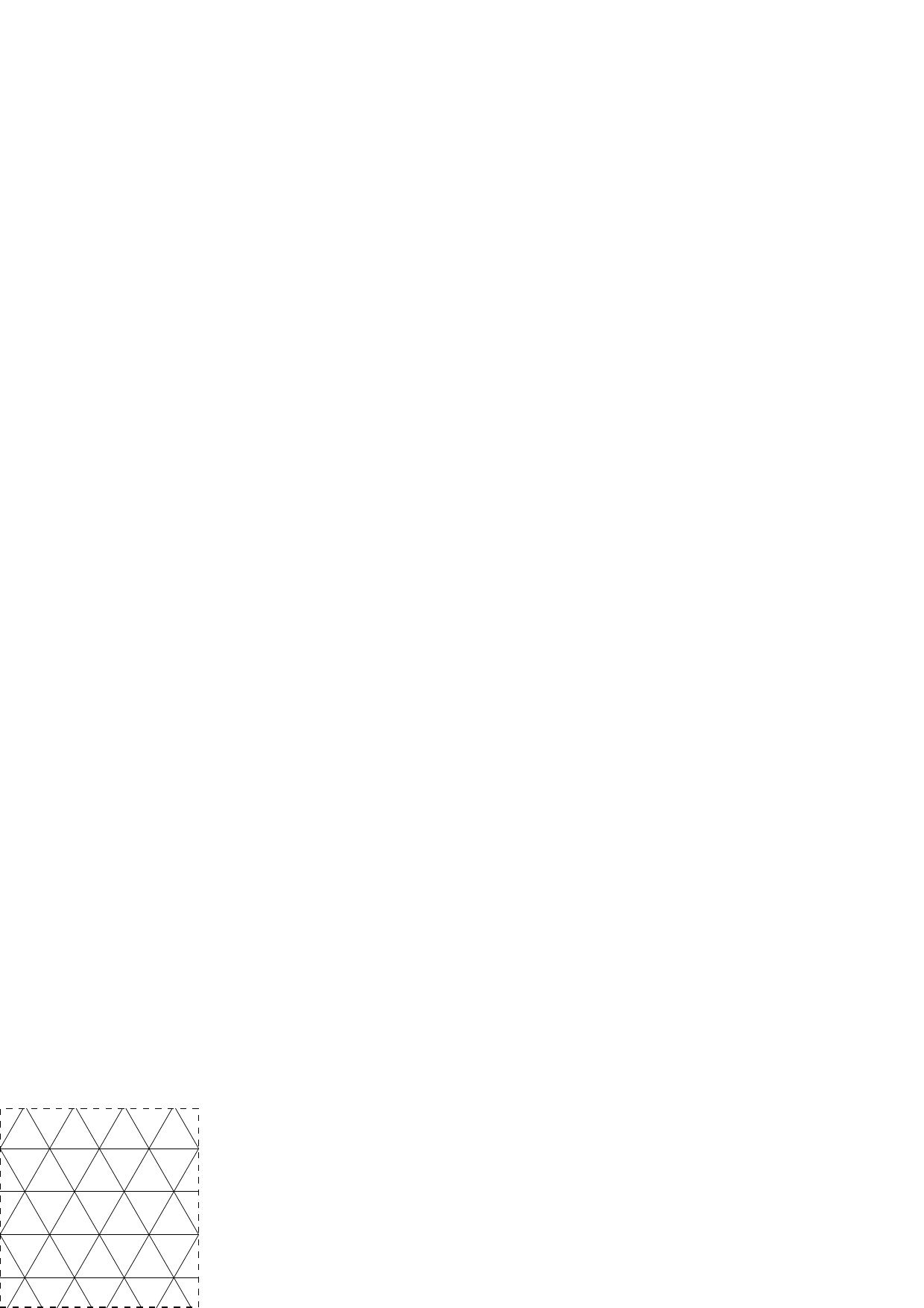}
    \includegraphics{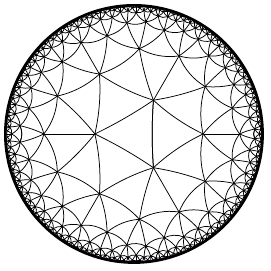}
    \caption{The regular tiling in spherical, Euclidean and hyperbolic geometry where $p=3$ and $q=5,6$, and $7$, respectively.}
    \label{fig:tilings}
\end{figure}

In the hyperbolic plane, unlike the other geometries, there is a much richer set of uniform tilings: for any pair of integers $p,q$ where $\frac{1}{p}+\frac{1}{q}<\frac12$ there exists a decomposition of the hyperbolic plane (henceforth denoted by $\Hyp^2$) into congruent copies of interior-disjoint regular $p$-gons (tiles) where each vertex is shared by $q$ tiles. This decomposition is called the uniform hyperbolic tiling of Schl\"afli symbol $\{p,q\}$.
An emerging body of work studies uniform hyperbolic tilings from the algorithmic perspective~\cite{Kisfaludi-Bak20-Intersection,dynamicdistances,Kopczynski21,Kisfaludi-BakML24}, but many fundamental questions, such as computing shortest paths, remain mostly unexplored.

In Euclidean square grids it is easy to characterize shortest paths: given a pair of points, these are the set of $x$- and $y$-monotone paths connecting the pair. These shortest paths are covered by the minimum bounding box of the point pair. The problem is much less clear when one wants to find a shortest path between two given vertices of a hyperbolic tiling. Due to the fact that $\Hyp^2$ is not a vector space (a typical pair of ``translations'' do not commute), it is significantly harder to find shortest paths in hyperbolic tilings. It is not even clear how one should define a vertex of a tiling in the input of an algorithm.

One way to represent a vertex is geometrically, by its coordinates in some model of the hyperbolic plane. Another is graph/group theoretically, as a path from a fixed vertex in the tiling graph or similarly as a word in the corresponding group (i.e.\ the group generated by the translations that take a vertex to its neighbors). Since these groups are non-abelian, conversion between these two representations is not as straightforward as for regular Euclidean tilings. However, the groups are still \emph{automatic}, which lets us convert a word of length~$\ell$ to a normal form in $\Oh(\ell^2)$ time \cite{wordprocessing,bridson2013metric} and thereby solve the \emph{word problem} (``do these words represent the same group element''), which is undecidable in general. In fact, they are \emph{strongly geodesically automatic}: words in the normal form correspond to shortest paths \cite{Cannon1984TheCS}. In this paper we will represent the input vertices of the tiling via paths (or even walks) from a fixed origin vertex; in the hyperbolic setting, this has a similar complexity as the number of bits in representations with coordinates, see \Cref{sec:preliminaries} for a discussion.
    
\subparagraph*{Shortest paths and intervals in graphs and tilings}
In the graph setting, \emph{geodesic convexity} has been studied intensively, see the monograph~\cite{pelayo2013geodesic} for a detailed overview of the topic. In graphs, there can be several shortest paths between a pair of endpoints, so in order to generalize the notion of convexity, we need some further terminology, introduced here only for unweighted graphs. The \emph{interval} of a vertex pair $u,v\in V(G)$ is the subgraph $I_G(u,v)\subset E(G)$ given by the union of \textit{all} shortest paths between $u$ and $v$.

An important property of shortest paths in hyperbolic tilings is that they stay together in the following sense: for any pair of points $u,v$ in the tiling, and any pair of shortest $(u,v)$-paths $P$ and $P'$ we have that each vertex of $P'$ is within distance $\Oh_{p,q}(1)$ from some vertex of $P$. This property holds much more generally, even for certain approximate shortest paths in the more general setting of Gromov-hyperbolic metric spaces. See~\cite{bridson2013metric} for a detailed exposition. As a result, we can already derive that an interval $I(u,v)$ in a hyperbolic tiling has constant-size vertex separators, unlike the Euclidean grid where the interval of a pair of points can be a square-shaped patch of the grid of arbitrary side length. In this article, we will get a more accurate description of intervals in hyperbolic tilings.

\subparagraph*{Convexity in graphs}
In the Euclidean setting, it is natural to talk about the bounding box of a set $P$ of grid points, and observe that all pairwise shortest paths in $P$ are contained in the bounding box.  This notion is similar to the geometric notion of \emph{convex hulls}.

Let $G$ be a graph. We say that a subgraph $H\subseteq E(G)$ is \emph{convex} if for any $u,v\in V(H)$ we have $I_G(u,v)\subseteq H$. On the other hand, there is a natural weaker property: a subgraph $H\subseteq E(G)$ is \emph{isometric} if for any $u,v\in V(H)$ we have $\dist_H(u,v)=\dist_G(u,v)$, where $\dist_X$ denotes the shortest-path distance in the (sub)graph $X$. Notice that any convex subgraph is automatically isometric, but this is not true the other way around: an isometric subgraph is guaranteed to contain \emph{some} shortest path between any pair of vertices, but not all of them.

Using the notions of convex and isometric subgraphs, we can define corresponding closures. For a given vertex set $K\subseteq V(G)$ a subgraph $H$ of $G$ is a \emph{convex hull} (respectively, a \emph{minimal isometric closure}) of $K$ if $H$ is a minimal convex (resp., minimal isometric) subgraph with $K\subseteq V(H)$. Notice that the convex hull of $K$ is in fact unique, while there can be many pairwise incomparable minimal isometric closures. See Figure~\ref{fig:gridhull} illustrating the difference between a minimal isometric closure and a convex hull in a Euclidean grid.

\begin{figure}
    \centering
    \includegraphics[scale=.75]{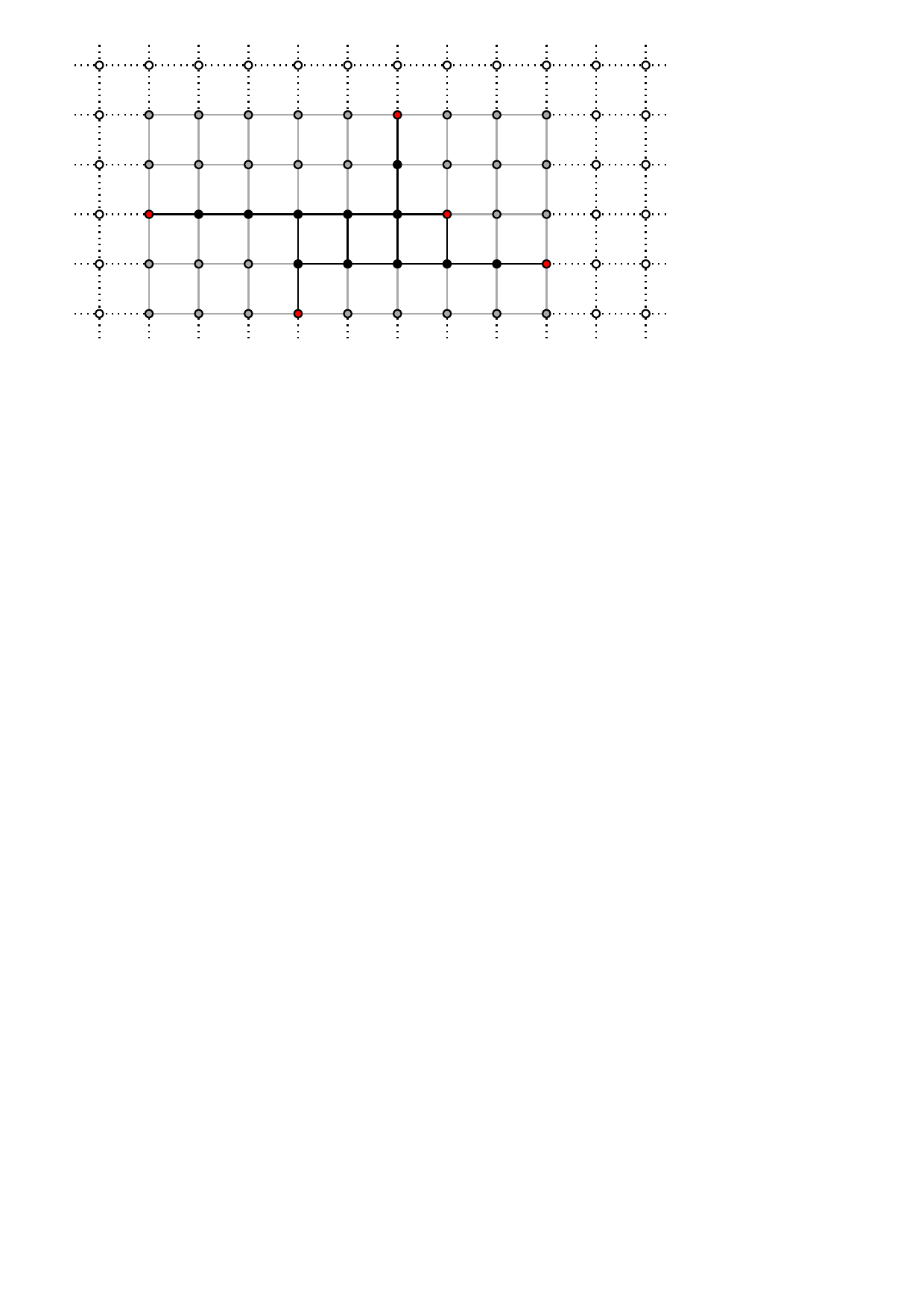}
    \caption{The convex hull (gray) and a minimal isometric closure (black) of a set of terminals (red) in the grid graph.}
    \label{fig:gridhull}
\end{figure}

We will be interested in the following question.

\begin{question}\label{q:convexhull}
    Is there an efficient algorithm to compute minimal isometric closures and convex hulls in hyperbolic tilings?
\end{question}

\subparagraph*{Optimization problems and convex hulls}

Beyond the basic distance properties and convexity, the subgraphs of hyperbolic tilings hold a significant algorithmic promise as a graph class.
Kisfaludi-Bak~\cite{Kisfaludi-Bak20-Intersection} observes that any $n$-vertex subgraph of a regular hyperbolic tiling $G_{p,q}$ has treewidth $\Oh_{p,q}(\log n)$ in stark contrast with the Euclidean setting where a $\sqrt{n}\times \sqrt{n}$ grid graph has treewidth $\Theta(\sqrt{n})$. This structural result combined with the literature on treewidth-based algorithms (see the book by Cygan~et~al.~\cite{CyganFKLMPPS15} for an overview) gives polynomial algorithms in this graph class for several problems that are NP-hard on planar graphs. Still, the result is not completely satisfactory: first, it is not clear how one can recognize that a graph is indeed a subgraph of a tiling (this remains a challenging open question), and second, in many cases planar problems are more naturally defined via the point set rather than with a large finite grid graph.
For example, the rectilinear TSP and Steiner tree problems have as their input a set of $n$ points, and the goal is to compute the shortest closed curve or tree, consisting only of horizontal and vertical segments, that contains all the input points. In the grid setting, these problems can be thought of as being defined on a (weighted) $n\times n$ grid given by the horizontal and vertical lines through the input points, often called the Hanan-grid~\cite{HananGraph}. The best known algorithms for these problems have a running time of $k^{\Oh(\sqrt{k})}n^{\Oh(1)}$ or $n^{\Oh(\sqrt{k})}$ in the Euclidean plane~\cite{FominSubexpRectSteiner,KleinM14}, where $k$ is the number of terminals.

In the graph setting, the Subset TSP problem asks for the shortest closed walk containing a given set of terminals, and Steiner tree asks for the shortest tree containing the terminals. In our setting of an infinite host tiling, we can restrict our attention to the convex hull of the terminals. Intuitively, the number of tiling vertices inside the convex hull is  similar to the geometric area of the hull; indeed, we can show that the minimal isometric closure has area that is \emph{linear} as a function of the size of the minimum spanning tree of the terminal set, which comes from the linear isoperimetric inequality~\cite{bridson2013metric} that is unique to the hyperbolic setting. Using the logarithmic treewidth bound on the minimal isometric closure, one should be able to get polynomial algorithms for subset TSP and Steiner tree via a black-box usage of treewidth-based algorithms for these problems~\cite{BodlaenderCKN15}. It is however unclear if it is possible to get an algorithm that is near-linear as a function of the diameter, and polynomial in the number of terminals.

\begin{question}\label{q:algo}
    Is there an algorithm to compute Subset TSP and Steiner tree in regular hyperbolic tilings that is (near)-linear in the input bit complexity and polynomial in the number of terminals?
\end{question}

\subparagraph*{Our contribution}
We start by studying the structure of shortest paths in the hyperbolic tiling graph $G_{p,q}$. For a line $\ell$ in the hyperbolic plane we show a lemma that can be informally stated as follows.

\begin{lemma}[Informal, weaker version of \Cref{lemma:shortestPath}(i)]
   For any pair $u,v$ of vertices incident to tiles intersected by $\ell$ there exists a shortest path from $u$ to $v$ whose edges are all incident to tiles that intersect $\ell$.
\end{lemma}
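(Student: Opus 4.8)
The plan is to recast the statement as an isometry claim about a ``corridor'' subgraph and then to push an arbitrary graph geodesic into it. Let $\mathcal{T}_\ell$ denote the tiles meeting $\ell$. Since $\ell$ is a geodesic and the tiles are convex, each tile of $\mathcal{T}_\ell$ meets $\ell$ in a segment, and (away from the degenerate case where $\ell$ runs through a vertex) consecutive tiles share exactly the edge through which $\ell$ passes; hence $\mathcal{T}_\ell$ is a sequence $T_1,\dots,T_k$ whose union $U$ is a topological disk. The boundary of $U$ splits into two chains of tiling edges, an upper chain $\gamma^+$ on the positive side of $\ell$ and a lower chain $\gamma^-$ on the negative side, and the vertices incident to tiles in $\mathcal{T}_\ell$ are exactly $V(\gamma^+)\cup V(\gamma^-)$. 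Writing $C_\ell$ for the subgraph formed by all edges of $U$, the lemma is equivalent to asserting that $C_\ell$ is an isometric subgraph of $G_{p,q}$: deleting vertices never shortens paths, so $\dist_{C_\ell}\ge\dist_{G_{p,q}}$ holds automatically, and the task is to exhibit, for any corridor vertices $u,v$, a $G_{p,q}$-geodesic contained in $C_\ell$.

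First I would localise a geodesic geometrically. Because the distance to the convex set $\ell$ is a convex function along every geodesic of $\Hyp^2$, the tube $N_w(\ell)=\{x:\dist(x,\ell)\le w\}$ is convex; taking $w_0=\Oh_{p,q}(1)$ to be the maximal distance of a corridor vertex from $\ell$, every $u,v\in V(C_\ell)$ lie in $N_{w_0}(\ell)$, so the continuous segment $[u,v]$ stays in $N_{w_0}(\ell)$ as well. Combined with the thin-bigon (Morse) property quoted in the introduction, every graph geodesic from $u$ to $v$ then stays within bounded distance of $\ell$. This confines a geodesic to a constant-width neighbourhood of $\ell$, but not yet to $U$ itself, since the tube can contain tiles that do not meet $\ell$; closing that gap is the combinatorial heart of the argument.

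The core step is an exchange argument that folds excursions back onto the boundary chains. Among all $G_{p,q}$-geodesics from $u$ to $v$ I would pick one, say $P$, minimising the total distance of its vertices from $\ell$. If $P$ left $C_\ell$ on the positive side, it would cross $\gamma^+$ at two vertices $x,y$, with the subpath $P[x,y]$ venturing into tiles strictly above $\ell$, and I would replace $P[x,y]$ by the boundary subchain $\gamma^+[x,y]$. To see this does not increase length, the key is that each boundary chain turns consistently toward the non-corridor side, so that no incident tile offers a local shortcut across it; in the negatively curved tiling $G_{p,q}$ (where $\tfrac1p+\tfrac1q<\tfrac12$) such a locally non-shortcuttable lattice path is globally shortest, whence $\gamma^+[x,y]$ is a shortest $x$--$y$ path and $|\gamma^+[x,y]|\le|P[x,y]|$. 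Equivalently, one can supply the matching lower bound by a crossing-sequence count against the family of tiling edges transversal to $\ell$: every $u$--$v$ path must cross each rung $T_i\cap T_{i+1}$, and the corridor path attains exactly this count, so no detour outside $U$ can be shorter. Iterating the replacement removes all excursions on both sides and yields a geodesic inside $C_\ell$.

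I expect the main obstacle to be precisely this optimality of the rerouted corridor path: showing that the boundary chains admit no local shortcut through the non-corridor tiles, so that replacing an excursion by a boundary subchain never lengthens it. This is where the inequality $\tfrac1p+\tfrac1q<\tfrac12$ must be used quantitatively, and it is exactly what fails in the Euclidean grid, whose analogous corridor is \emph{not} isometric (cf.\ Figure~\ref{fig:gridhull}). Two secondary points need care: the degenerate configurations where $\ell$ passes through a tiling vertex (handled by a small perturbation of $\ell$, or by assigning the vertex to both sides), and the case where $u$ and $v$ lie on opposite chains, so that the rerouted geodesic must cross the corridor; there I would splice the upper- and lower-chain arguments together at the single transversal edge where an optimal path switches sides.
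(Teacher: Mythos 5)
There is a genuine gap, and it sits exactly where you predicted it would: the optimality of the rerouted boundary path. Your exchange step rests on the claim that $\gamma^+[x,y]$ is a shortest $x$--$y$ path, justified by the principle that a chain with no local shortcut on the non-corridor side is globally shortest. That claim is false as stated. Consider a tile $T$ whose corner $\ell$ clips near a vertex $v$, entering and leaving through the two edges of $T$ incident to $v$: the chain on the far side of $T$ traverses the remaining $p-2$ edges of $T$ between the rung endpoints $x$ and $y$, while the two rung edges give the path $x,v,y$ of length $2$ inside the corridor, so for $p\geq 5$ the boundary chain is not even geodesic between its own vertices. (Absence of shortcuts on the outside does not preclude shortcuts on the corridor side, and in a Gromov-hyperbolic setting local geodesics are in any case only global \emph{quasi}-geodesics, and only at a sufficiently large local scale, not at single-tile scale.) One could patch the exchange by rerouting through a shortest path \emph{in $C_\ell$} rather than along $\gamma^+$, but then the inequality you need, $\dist_{C_\ell}(x,y)\leq |P[x,y]|$ for an excursion leaving the corridor, is precisely the content of the lemma, so the argument becomes circular. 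Your fallback, the crossing count against the rungs $T_i\cap T_{i+1}$, also fails: the rung count lower-bounds any path's length by the number of tiles traversed, but the corridor path does \emph{not} attain this count --- each tile can contribute up to $p-2$ chain edges --- so the bound is too weak to certify optimality. (Moreover, edge extensions are complete geodesics, giving a genuine wall structure, only when $q$ is even.) A smaller structural slip: for $q=3$ all three tiles around a vertex can be crossed by $\ell$ (cf.\ Lemma~\ref{lemma:bottleneck}(ii)), so corridor vertices are not exactly $V(\gamma^+)\cup V(\gamma^-)$.

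The paper closes exactly the gap you left open, and does so quantitatively rather than by a local-to-global principle. Its proof also folds an outside detour $P_{x,y}$ back onto the boundary of the enclosed tile sequence, but it certifies the comparison by hyperbolic area: it threads a curve $\gamma$ through the centers of the enclosed tiles, books the angles $\phi_i+\psi_i=2\pi$ at each center so that a strictly shorter detour forces $\sum_i \psi_i \geq \pi m + 2\pi/p$, closes the region with perpendiculars to $\ell$, and applies the hyperbolic polygon area formula to get $\area(A)\leq (1-\frac{2}{p}-\frac{2}{q})\pi$; on the other hand the region must contain at least four fundamental right triangles of area $(\frac12-\frac1p-\frac1q)\pi$ each, a contradiction. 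This Gauss--Bonnet computation is the quantitative use of $\frac1p+\frac1q<\frac12$ that your proposal correctly identifies as necessary (and correctly notes fails in the Euclidean grid) but does not supply; your convex-tube localization and Morse-property preamble, while sound, are not needed once that area argument is in place.
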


The proof of the lemma is based on analyzing the hyperbolic area enclosed by a hypothetical shortest path that encloses several tiles between $\ell$ and itself. We can use a similar type of area argument to show that the interval $I(u,v)$ is covered by a sequence of tiles, i.e., each vertex of an interval is on the boundary of the subgraph induced by the interval.

The above lemma is insufficient for our purposes, as we need to be able to extend a shortest path of the tiling ``along'' a hyperbolic line by adding new edges at one end so that the result is still a shortest path. When $q=3$, it is common that the line $\ell$ can intersect all three tiles incident to some vertex; for a fixed shortest path it is unclear if any of its extensions remain a shortest path in this situation. We prove the following stronger lemma.

\begin{lemma}[Informal, weaker version of \Cref{lemma:extensionByIntersection}] \label{lemma:weakExtensionByIntersection}
   Let $S$ be the sequence of edges of the tiling $G_{p,q}$ intersected by some line $\ell$. Then for any pair of vertices $v,w$ that are endpoints of such edges there exists a shortest path from $v$ to $w$ that passes through at least one endpoint of each edge of $S$ between the edge of $v$ and the edge of $w$.
\end{lemma}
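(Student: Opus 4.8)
The plan is to reduce the statement to a purely combinatorial separation argument inside the ``corridor'' of tiles crossed by $\ell$, using the weaker form of \Cref{lemma:shortestPath}(i) as a black box to place a shortest path inside this corridor. First I would set up the corridor precisely. Let $\ell$ cross the edges $e_1,\dots,e_k$ of $S$ in order, and let $T_i$ be the tile containing the portion of $\ell$ between $e_i$ and $e_{i+1}$; the corridor is $\bigcup_i T_i$. Writing each crossed edge as $e_i=\{a_i,b_i\}$ with $a_i$ above $\ell$ and $b_i$ below, the boundary of each $T_i$ splits into the crossed edges $e_i,e_{i+1}$ together with an upper arc from $a_i$ to $a_{i+1}$ and a lower arc from $b_{i+1}$ to $b_i$. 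Concatenating the arcs yields an upper boundary path $U$ and a lower boundary path $L$. The key structural observation is that every edge incident to a corridor tile is either a crossed edge $e_i$, an edge of $U$, or an edge of $L$: such an edge lies on some $\partial T_i$, and $\partial T_i$ is exactly $e_i\cup(\text{upper arc})\cup e_{i+1}\cup(\text{lower arc})$. Consequently the subgraph of $G_{p,q}$ formed by the corridor edges is a (possibly degenerate) ladder whose two sides are $U$ and $L$ and whose rungs are the edges $e_i$.

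Second, I would invoke the weaker form of \Cref{lemma:shortestPath}(i) to obtain a shortest $(v,w)$-path $P$ all of whose edges are incident to corridor tiles; by the structural observation $P$ uses only rungs and side edges and hence lives entirely inside this ladder. The conclusion now follows from a rung-separation argument. For each index $i$ the pair $\{a_i,b_i\}$ is a vertex separator of the ladder: deleting these two vertices cuts $U$ and $L$ simultaneously, splitting the ladder into a left part carrying the endpoints of $e_1,\dots,e_{i-1}$ and a right part carrying those of $e_{i+1},\dots,e_k$, while every other rung $e_j$ lies entirely on one side. If $v$ is an endpoint of $e_s$ and $w$ of $e_t$ with $s<i<t$, then $v$ lies in the left part and $w$ in the right part, so $P$ must pass through $a_i$ or $b_i$ (traversing $e_i$ itself being the special case that hits both). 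In every case $P$ meets an endpoint of $e_i$, which is exactly the desired conclusion, and crucially no modification of $P$ is needed.

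The main obstacle I anticipate is making the ladder picture rigorous in the degenerate configurations, which is precisely the $q=3$ difficulty flagged in the introduction. When $\ell$ skirts a vertex $x$ of degree three it can cross two consecutive edges both incident to $x$, so that $a_i=a_{i+1}=x$ and the corresponding arc of $U$ degenerates to a single point; more generally $U$ or $L$ may revisit a vertex, pinching the corridor. I would dispatch this by checking two things: that such a pinch vertex is still an endpoint of every rung incident to it, so a single visit to $x$ simultaneously certifies the incidences for both $e_i$ and $e_{i+1}$; and that $\{a_i,b_i\}$ still separates the corridor even when a side path is pinched. I would also need the mild fact that the crossed-edge sequence is \emph{simple}, i.e.\ no two crossed edges at sequence-distance at least two share a vertex, so that the corridor does not fold back on itself; this should follow from $\ell$ being a geodesic, so that the tiles it meets form a non-repeating sequence. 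Once these degeneracies are handled, the separation argument applies verbatim and yields the lemma.
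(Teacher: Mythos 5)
Your corridor setup and the reduction via \Cref{lemma:shortestPath}(i) are sound, and for $q\geq4$ your rung-separation argument does essentially match the paper's (short) treatment of that case. Even there, though, your ``mild fact'' that crossed edges at sequence-distance at least two share no vertex does not follow from the tile sequence being non-repeating: the paper derives the needed locality from convexity of the union of tiles around a vertex, which uses the internal angle $2\pi/q\leq\pi/2$ and hence is available only for $q\geq4$ (\Cref{lemma:bottleneck}(i)).

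The genuine gap is the $q=3$ case, and your proposed patch is false: $\{a_i,b_i\}$ does \emph{not} separate the corridor at a pinch. Suppose $\ell$ crosses the three tiles $T_{i-1},T_i,T_{i+1}$ around a degree-$3$ vertex $x$, so that $e_i$ and $e_{i+1}$ are two edges at $x$ and $a_i=a_{i+1}=x$. The third edge at $x$, say $f=xz$, is shared by the non-consecutive corridor tiles $T_{i-1}$ and $T_{i+1}$, hence lies in $G_\ell$, and your upper walk $U$ traverses it twice, as $\dots,z,x,z,\dots$. Cutting the corner at $z$ gives a corridor path that avoids $x$, $b_i$ and $b_{i+1}$ simultaneously---missing \emph{both} endpoints of $e_i$ (and of $e_{i+1}$)---and is two edges shorter than the route through $x$. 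So no combinatorial separator exists at $e_i$, and whether some \emph{shortest} path can nevertheless be routed through an endpoint of every crossed edge is a metric question that corridor topology cannot decide. This is exactly why the paper's proof of \Cref{lemma:extensionByIntersection} is long for $q=3$: it takes a shortest path in $G_\ell$ hitting the maximum number of elements of $S_{vw}$, shows that a missed element forces this pinch configuration and that the path must pass through $z$ (via \Cref{lemma:bottleneck}(ii),(iii)), and then uses hyperbolic trigonometry---the four-parts formula and $\cot$ comparisons in three cases depending on the parity of $p$, with $p=7$ needing an exact angle computation---to show the continuation would force $\ell$ to intersect a tile $t_4$ that it geometrically cannot. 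None of this is recoverable from the ladder picture, so your argument fails precisely where the lemma's real difficulty lies (as the introduction's discussion of $q=3$ already hints).
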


The proof of \Cref{lemma:extensionByIntersection} is significantly more technical than that of \Cref{lemma:shortestPath}(i) and relies heavily on the specific geometry of the tilings and the ways in which a line can intersect consecutive tiles. With these lemmas and some bound on the intervals $I(u,v)$ at hand, we are able to compute the shortest path from $u$ to $v$ (or even all shortest paths) in time $\Oh(\dist_{G_{p,q}}(u,v))$ time.

With the stronger lemma at hand, we can move on to computing isometric closures, but in order to do that, we need to fix how we receive the input points. We denote by $K$ the set of terminals from a tiling graph $G_{p,q}$, and let $n=|K|$. We assume that the terminals are defined in the graph or group-theoretic sense, i.e., each terminal $v\in K$ is defined via a path (or walk) in $G_{p,q}$ that starts at the fixed origin and ends at the terminal. We note that for constant $p,q$, the number $N$ is within a constant factor of the bit complexity if the points were given by coordinates in the half-space model, and we can make similar claims about other hyperbolic models; for a more in-depth discussion of different input modalities and our computational model see \Cref{sec:preliminaries}.

Towards \Cref{q:convexhull}, we show that we can compute an isometric closure $\hat G_K$ of $K$ that is a subgraph of the convex hull $\conv_G(K)$ in $\Oh(N\log N)$ time. Note that \cite{pelayo2013geodesic} gives an algorithm that computes the graph convex hull of $n$ terminals in any graph with $m$ edges in $\Oh(mn)$ time, which is already optimal (under SETH) for recognizing if a given subgraph is convex \cite{Cabello25}. In our setting we have an infinite base graph, and the simplest restriction to a finite graph (taking all vertices within a given distance of one fixed vertex) can have $q^{\Theta(N)}$ edges.
We show that the convex hull $\conv_G(K)$ has $\Oh(N)$ vertices. Furthermore, we bound the treewidth of $\conv_G(K)$ as follows.

\begin{restatable}{theorem}{thmtw}
\label{thm:treewidth}
    For any set $K$ of $n$ vertices in $G_{p,q}$, the convex hull $\conv_G(K)$ has treewidth at most $\max\{12, \Oh(\log\frac{n}{p+q})\}$.
\end{restatable}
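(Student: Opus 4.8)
The plan is to exploit the extreme thinness of hyperbolic convex hulls: although $\conv_G(K)$ may have $\Oh(N)$ vertices and unbounded diameter, almost all of these vertices sit in thin boundary strips, and only a small central \emph{bulk} governs the treewidth. The argument combines three ingredients---a Gauss--Bonnet area bound, the thin-strip description of geodesics from \Cref{lemma:shortestPath,lemma:extensionByIntersection}, and the logarithmic treewidth bound of Kisfaludi-Bak~\cite{Kisfaludi-Bak20-Intersection} applied only to the bulk.

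First I would bound the hyperbolic area. The geometric convex hull of the $n$ terminals is a convex polygon with at most $n$ corners, so by Gauss--Bonnet its area is at most $(n-2)\pi$. Each tile has area $a_0=\frac{\pi(pq-2p-2q)}{q}$ and each vertex figure area $\frac{\pi(pq-2p-2q)}{p}$, so the number of tiles, respectively vertices, strictly inside the hull is at most $\frac{(n-2)q}{pq-2p-2q}$, respectively $\frac{(n-2)p}{pq-2p-2q}$. Writing $\delta=\frac12-\frac1p-\frac1q$ and using that $\delta\ge\frac1{42}$ for every hyperbolic tiling, a short computation shows that the \emph{smaller} of these two counts, call it $|B|$, satisfies $\log|B|\le\log\frac{n}{p+q}+\Oh(1)$. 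Since branchwidth---and hence treewidth up to a constant factor---is invariant under planar duality, I may always take $B$ (the \emph{bulk}) to be described in whichever of the primal/dual pictures is the sparser, and bound the treewidth there.

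Next I would account for everything outside $B$. Every side of the convex polygon is a geodesic segment, so by \Cref{lemma:shortestPath,lemma:extensionByIntersection} the tiles of $\conv_G(K)$ flanking a side form a strip of width $\Oh(1)$; the complement $\conv_G(K)\setminus B$ is exactly the union of these strips, which meet only in short pieces near the corners and otherwise reach out to the far terminals as thin ``spikes''. I would then assemble the tree decomposition. Start from a decomposition of $B$ of width $\Oh(\log|B|)=\Oh(\log\frac{n}{p+q})$, logarithmic in $|B|$ with an \emph{absolute} constant because the growth rate of every hyperbolic tiling exceeds a universal $\lambda_{\min}>1$. Each spike is hung on as a pendant path of bags of size $\Oh(1)$, adding nothing asymptotically. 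The strips hugging the bulk boundary are absorbed radially: to every bag I add, for each bulk-boundary vertex it already contains, the $\Oh(1)$ vertices of the strip leaving that vertex; as a bag holds at most $\Oh(\log\frac{n}{p+q})$ such vertices and each contributes $\Oh(1)$ new ones, bag sizes grow only by a constant factor. The result is width $\Oh(\log\frac{n}{p+q})$; the additive $12$ is the residual contribution of the width-$\Oh(1)$ strips in the regime $n=\Oh(p+q)$, where $B$ is empty and the hull is a single thin cycle.

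The main obstacle is making the split into bulk and thin strips quantitatively clean, with \emph{absolute} constants rather than the $\Oh_{p,q}(\cdot)$ one gets for free. Two points demand care. First, the strip width supplied by \Cref{lemma:shortestPath,lemma:extensionByIntersection} is a priori $\Oh_{p,q}(1)$; I must check that, read off in the appropriate primal or dual description, it is bounded by an absolute constant---this is ultimately what pins down the value $12$. Second, the area-to-size conversion has to be uniform in $p$ and $q$ simultaneously: for face-heavy tilings ($p\gg q$) the bulk is naturally counted in tiles, for vertex-heavy tilings ($q\gg p$) in vertices, and it is precisely the duality-invariance of (branch/tree)width that lets me always choose the sparser side and obtain the symmetric quantity $\log\frac{n}{p+q}$. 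The final technical step is to confirm that the radial absorption never overfills a bag---that the number of bulk-boundary vertices in a single bag stays within the bag's size---and this is exactly where the \emph{logarithmic} (rather than polynomial) treewidth of the bulk is indispensable.
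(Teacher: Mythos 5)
Your skeleton coincides with the paper's: the Gauss--Bonnet area bound giving a bulk of $\Oh(n/p)$ tiles (Lemmas~\ref{lemma:tileArea} and~\ref{lemma:fewInnerFaces}) or, dually, $\Oh(n/q)$ vertices; the logarithmic bound of \cite{Kisfaludi-Bak20-Intersection} (Lemma~\ref{lemma:tilingOuterplanar}) applied only to that bulk; and a primal/dual minimum turning $\min\{\log\frac{n}{p},\log\frac{n}{q}\}$ into $\log\frac{n}{p+q}$. The divergence is in the finishing step, and there your argument has a genuine quantitative gap. Your ``radial absorption'' adds, for each bulk-boundary vertex in a bag, ``the $\Oh(1)$ vertices of the strip leaving that vertex'' --- but that count is \emph{not} an absolute constant, only the number of \emph{layers} is. Take $p=3$ and $q$ large: a bulk-boundary vertex $v$ is incident to up to $q$ triangles of the strip, so the strip vertices adjacent to $v$ number $\Theta(q)$, and your augmented bags have width $\Theta(q\log\frac{n}{q})$, destroying the bound exactly in the regime of large $p+q$ that the theorem is about. (The symmetric phenomenon with $\Theta(p)$ occurs on the dual side, so choosing ``the sparser picture'' does not rescue it.) You flagged the need to upgrade $\Oh_{p,q}(1)$ to an absolute constant as the main obstacle, but the upgrade is false for the per-anchor vertex count; repairing it means splitting each vertex's fan into a path of small bags, i.e.\ re-proving a layer-peeling argument. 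There are secondary unverified points as well: for a strip edge $xy$ with distinct anchors $u,v$ your construction covers $xy$ only if some bag already contains both $u$ and $v$, so you need (and never establish) an anchor map under which adjacent strip vertices have equal or adjacent anchors; and your closing remark is off --- the multiplicative blow-up of absorption would be harmless at any bulk width, so nothing there makes logarithmic treewidth ``indispensable.''

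The paper avoids all of this by staying with outerplanarity, which is degree-free. It shows that every face of $\hat G_K$ outside the bulk has a vertex on the outer face (using hole-freeness) and that $\conv_G(K)\setminus\hat G_K$ lies on the outer boundary by Lemma~\ref{lemma:shortestPath}(ii), so peeling at most three outer layers reduces the hull (in the primal or dual picture, proved separately rather than via branchwidth self-duality) to the bulk's neighborhood graph; hence the hull's outerplanarity exceeds the bulk's by an additive constant. The numeric bound then follows mechanically: $k$-outerplanar graphs have treewidth at most $3k-1$ \cite{BODLAENDER19981}, and a plane graph and its dual have treewidth within one of each other \cite{BouchitteMT01}, so the $4$-outerplanar base case yields $3\cdot 4-1+1=12$ --- a constant your construction merely asserts and could not produce. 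In short: same three pillars, but your hand-built decomposition fails where the paper's Bodlaender-plus-duality route succeeds, precisely because Bodlaender's bound depends on the number of peeled layers and not on how many vertices a single anchor sees.
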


This can be seen as a far-reaching strengthening of the treewidth bound of~\cite{Kisfaludi-Bak20-Intersection}, which was reminiscent of bounds for other graphs that come from hyperbolic geometry \cite{ChepoiDEHV08, BlasiusFK16, structureindependence}.
Indeed, the treewidth depends logarithmically on the number of terminals and it is independent of the size of the convex hull, and we get stronger bounds for larger values of $p$ and $q$. This is in line with the observation that hyperbolic structure becomes more tree-like and hence simpler at larger distances, see~\cite{hyperTSP20,Kisfaludi-BakW24,structureindependence} for further examples of this phenomenon.

The above treeewidth bound automatically holds also for the isometric closure $\hat G_K\subseteq \conv_G(K)$ that we compute. We prove that any minimal isometric closure of $K$ contains a solution to Subset TSP and Steiner Tree, so we are able to compute an exact solution to these problems in $\hat G_K$. Using existing treewidth-based algorithms, we get the following answer to \Cref{q:algo}.

\begin{restatable}{theorem}{algomain}\label{thm:algomain}
    Given a set of $n$ terminals with total description size $N$ in a regular hyperbolic tiling graph $G_{p,q}$ with Schläfli symbol $\{p,q\}$, the Steiner tree and Subset TSP problems can be solved in $\Oh(N \log N) + \poly(\frac{n}{p + q}) \cdot N$ time.
\end{restatable}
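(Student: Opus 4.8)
The plan is to solve both problems not on the infinite tiling but on the finite graph $\hat G_K$, exploiting its small treewidth, and then transfer the answer back. Three facts established earlier drive this. First, $\hat G_K$ can be built in $\Oh(N\log N)$ time and, being a subgraph of $\conv_G(K)$, has $\Oh(N)$ vertices and (by planarity) $\Oh(N)$ edges. Second, by \Cref{thm:treewidth} and the monotonicity of treewidth under subgraphs, its treewidth is $w\le\max\{12,\Oh(\log\frac{n}{p+q})\}$. Third---and this is what makes working inside $\hat G_K$ lossless---every minimal isometric closure of $K$ contains an optimal Steiner tree and an optimal Subset TSP walk for $K$, so an optimum computed within $\hat G_K$ is an optimum in $G_{p,q}$.

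Concretely, I would proceed in three stages. (i) Construct $\hat G_K$ with the isometric-closure algorithm, at cost $\Oh(N\log N)$. (ii) Because the dynamic programs need an explicit decomposition, run a single-exponential constant-factor treewidth approximation (see~\cite{CyganFKLMPPS15}) on $\hat G_K$; on $\Oh(N)$ vertices this returns a decomposition of width $\Oh(w)$ in $2^{\Oh(w)}\cdot\Oh(N)$ time. (iii) Run the rank-based (representative-set) dynamic programs of~\cite{BodlaenderCKN15} for Steiner tree and for the connectivity formulation of Subset TSP on this decomposition; on our unweighted instance each costs $2^{\Oh(w)}\cdot\Oh(N)$, the linear factor being the vertex count rather than any geometric diameter.

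It then remains to convert the treewidth bound into the stated running time. Writing $m=\frac{n}{p+q}$, the bound $w\le\max\{12,\Oh(\log m)\}$ gives $2^{\Oh(w)}=\poly(m)=\poly(\frac{n}{p+q})$, since the constant part contributes only an $\Oh(1)$ factor and $2^{\Oh(\log m)}=m^{\Oh(1)}$. Hence stages (ii) and (iii) each run in $\poly(\frac{n}{p+q})\cdot\Oh(N)$, and adding the $\Oh(N\log N)$ of stage (i) yields the claimed $\Oh(N\log N)+\poly(\frac{n}{p+q})\cdot N$.

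I expect the genuine difficulty to lie entirely in the two ingredients this assembly consumes rather than in the assembly itself: proving that an optimal solution already resides in a minimal isometric closure (the correctness pillar), and proving the treewidth bound of \Cref{thm:treewidth}. Granting those, the only remaining care is bookkeeping---checking that the approximate decomposition still has width $\Oh(w)$ so that its exponential dependence collapses to $\poly(\frac{n}{p+q})$, and confirming that unweightedness keeps every DP table size and running time linear in $|V(\hat G_K)|=\Oh(N)$, independent of the (possibly huge) distances among the terminals.
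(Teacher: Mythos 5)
Your proposal matches the paper's proof essentially step for step: restrict to $\hat G_K$ (justified by Lemma~\ref{lemma:convGK} together with the fact that some minimal isometric closure lies inside $\hat G_K$), invoke Lemma~\ref{lemma:convexHullTime}, Lemma~\ref{lemma:convexHullSize} and Theorem~\ref{thm:treewidth}, and run the rank-based $2^{\Oh(\mathit{tw})}\cdot|V|$ dynamic programs of \cite{BodlaenderCKN15} (with \cite{le2019ptassubsettspminorfree} for Subset TSP). Your only addition is making explicit the single-exponential tree-decomposition approximation step, which the paper leaves implicit and which fits within the same $\poly(\frac{n}{p+q})\cdot N$ budget.
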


This is a significant strengthening compared to the general planar graph setting, where the problem is NP-hard and the best known algorithms are subexponential~\cite{KleinM14,DBLP:conf/focs/MarxPP18}. Note that due to the logarithmic treewidth bound~\cite{Kisfaludi-Bak20-Intersection} one naturally expects a running time that is polynomial in the size of the underlying subgraph (here, the convex hull); the main contribution of the above theorem is that it is \emph{near-linear} in the size $N$ of the convex hull and depends polynomially only on the number $n$ of terminals. Moreover, the algorithm becomes faster as $p$ or $q$ grows, reaching an $\Oh(N\log N)$ algorithm when $\max(p,q)=\Omega(n)$.

% (Sketch of overall approach): First, show that there always exists an optimal Steiner tree within any planar geodesic subgraph that contains all terminals. (Insert statement about $2^{\text{\Oh(max distance from origin)}}$ solution here). However, we will improve this technique by generating a geodesic subgraph with a low treewidth. The subgraph we generate will by ``close" to the convex hull of the terminals in the hyperbolic plane. This subgraph can be generated in $\mathcal{O}(n\log(n) + N)$ time, has treewidth $\log(n)$ and $\mathcal{O}(N)$ vertices. This will enable us to use treewidth-based parameterised algorithms for Steiner tree to solve the problem in the geodesic subgraph in time $\poly(n)\cdot N$ time. (Further reduction to $\poly(n)+ \mathcal{O}(N)$ time).

% From the work Garey and Johnson \cite{rectilinearSteinerTree} and Hanan \cite{HananGraph}, it is known that Steiner tree is \textsc{NP}-complete on the graph of the square tiling of the Euclidean plane. Hence, it seems to be necessary to exploit the ``tree-like" nature of the hyperbolic plane to achieve our results.

\section{Preliminaries}\label{sec:preliminaries}

\subparagraph*{Hyperbolic Geometry}
Apart from basic graph theory and geometry, this article uses the few key properties of hyperbolic geometry given below. For a more in-depth understanding of hyperbolic geometry, see \cite{cannonhyperbolic} or the larger textbooks \cite{iversen1992hyperbolic,thurston97three,benedetti1992lectures}.

For any two points $u,v$ in the hyperbolic plane, there is a unique line segment $uv$ joining them and a unique line $\ell_{uv}$ that is incident to both points. For any three points $u,v,w$, the area of the triangle $\triangle{uvw}$ with internal angles summing to $\phi$ is $\pi -\phi$~\cite{trig_formulas}. A polygon with $m$ vertices and internal angles summing to $\phi$ has an area of $ \pi(m-2) - \phi$. Hence, the area of a tile in $G_{p,q}$ is $\pi p(1-2/q) - 2\pi$. Note that for $p,q\geq3$ this area is lower bounded by some positive constant. 

To visualize the hyperbolic plane, one can use a \emph{model}.
One such model is the Beltrami-Klein model (also known as the Klein disk), which assigns each point of $\Hyp^2$ coordinates inside the unit disk in $\Reals^2$. Distances and angles are heavily distorted, especially near the disk boundary, but one key property of the model is that for any points $u,v \in \Hyp^2$ the hyperbolic line segment $uv$ is also a Euclidean line segment. This for example means that the hyperbolic convex hull $\conv_\Hyp$ matches the Euclidean convex hull in the Beltrami-Klein model.
In our figures we use (or mimic) the Poincaré disk model, which similarly assigns coordinates inside a unit disk but uses a different distance function. As a consequence, it now keeps Euclidean and hyperbolic angles equal, but hyperbolic segments $uv$ are given by specific Euclidean circular arcs.
Note, however, that all our results are independent of the particular model of the hyperbolic plane and conversion between models is easy when one has access to real RAM with standard arithmetic and square roots \cite{cannonhyperbolic}.

\subparagraph*{Regular Tilings of \texorpdfstring{$\Hyp^2$}{ℍ²}}
A regular tiling of the hyperbolic plane is an edge-to-edge filling of the hyperbolic plane with regular polygons as its faces. Each tiling can be identified with a Schläfli symbol, where a $\{p,q\}$-tiling refers to a tiling comprising regular $p$-gons where $q$ faces meet at every vertex. There exists a hyperbolic tiling for each $\{p,q\}$ where $1/p+1/q<1/2$. Let $G_{p,q}$ refer to the graph derived from the $\{p,q\}$-tiling (i.e. using the same vertices and edges). When $p$ and $q$ are not relevant we use $G$ for succinctness. Note that $G$ is planar and we identify a vertex $v\in G$ with the point of $\Hyp^2$ in a fixed representation.

In this paper, we refer to a regular polygon in the $\{p,q\}$-tiling as a \textit{tile}. The edges and vertices of each finite subgraph of $G_{p,q}$ splits $\Hyp$ into \textit{faces}, and the face with unbounded area is referred to as the \textit{unbounded face}. Given a closed curve $\gamma$, denote the union of the bounded faces by $F_\gamma$. Denote the interior of a collection of bounded faces $B$ as $\interior(B)$. %An edge/vertex is \textit{incident} to a tile if it is a subset of the boundary of the tile. A vertex is incident to an edge if it is one of the two endpoints of that edge. The relative interior of an edge excludes the two endpoints of that edge.

\subparagraph*{Treewidth and Outerplanarity}
A \textit{tree decomposition} \cite{treewidth} of a graph $G$ is a tree $T$ in which each node $x$ has an assigned set of vertices $B_x\subseteq V$ such that $\bigcup_{x\in T}B_x=V$ where:
\begin{itemize}
    \item for any $uv\in E$, there exists a $B_x$ such that $u,v\in B_x$,
    \item if $u\in B_x$ and $u\in B_y$ then $u\in B_z$ for any $z$ on the (unique) $(x,y)$-path in $T$.
\end{itemize}
The \emph{treewidth} of $G$ is now the smallest value of $\max_{x \in V} |B_x| - 1$ over all tree decompositions.

An embedding in the plane of a graph $G$ is \textit{outerplanar} (or 1-\textit{outerplanar}) if it is planar and all vertices lie on the unbounded face. An embedding of $G$ is $k$-\textit{outerplanar} if it is planar and deleting all vertices from the unbounded face leaves a $(k-1)$-outerplanar embedding of the remaining graph. A graph is $k$-outerplanar if it admits a $k$-outerplanar embedding.

\subparagraph*{Input Representation and Computational Model}

% In the Steiner tree problem on graphs, we are given a graph and a subset of vertices referred to as terminals. The objective is to find the smallest tree in the graph that contains all terminals.
In the setting of hyperbolic tiling graphs, the entire graph can be specified simply by giving the associated Schläfli symbol. Each problem instance can then be specified by the number of terminals as well as the locations of these terminals. In this paper, we assume that each of the $|K| = n$ terminals is specified by a walk of edges from a fixed starting vertex, which we will assume to be at the origin of our Poincar\'e model. We fix a clockwise enumeration of the edges around the origin. At each step of the walk we consider the previous edge as being the first in the clockwise enumeration. Then each step of the walk can continue along one of $q$ possible incident edges, and a path of length $t$ can be encoded as sequence of $t$ numbers from $\{1,\dots,q\}$. We use $N$ to refer to the total length of the walks used to describe the location of all terminals.

As mentioned in the introduction, $N$ is similar to the bit complexity if the terminals $K$ were given by their coordinates in some model of the hyperbolic plane. This can be seen as follows: the number of vertices within $r$ hops from a given vertex is exponential in $r$, so representing all these vertices with unique coordinates requires $\Omega(r)$ bits. Thus, for constant $p,q$ and in the average case, the length of the shortest path to a vertex and the number of bits required for the vertex's coordinates are within a constant factor of each other.

To avoid dealing with issues of precision, we assume we have access to a real RAM.
In addition to standard arithmetic operations, we need this machine to support the square root and sine function to generate tiling coordinates.
Formulas for generating tiling coordinates are given for example in~\cite{DUNHAM1986139} and can also be deduced from hyperbolic trigonometry in the Poincaré disk model.

\section{Shortest Paths in the Tiling Graph} \label{sec:shortestpaths}

% \typ{This intro is no longer accurate.}

% In this section, we describe how we will restrict the tiling graph $G$ to a subgraph $\convGK$ that contains an optimal Steiner tree.

% Our approach can be understood by analogy with Steiner tree on the Euclidean plane. In the Euclidean setting, there is always an optimal Steiner tree contained within the convex hull of the terminals. Intuitively, this is because any Steiner tree that has a subtree outside the convex hull can have that subtree replaced by the curve traversing the boundary of the convex hull between the two furthest-apart leaves of that subtree. Similarly, $\partial\convGK$ will be defined using a closed sequence of shortest paths between terminals and $\convGK$ will be the subgraph of vertices in the union of edges and unbounded faces.

Let us start our investigation of shortest paths in tilings for point pairs that are vertices of the same tile.

\begin{lemma} \label{lemma:singleFaceShortestPath}
    For any two vertices $u$ and $v$ incident to a tile $t$ in $G$, any shortest $(u,v)$-path must only use edges incident to $t$.
\end{lemma}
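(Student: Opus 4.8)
The plan is to argue by contradiction using the hyperbolic area formula quoted in \Cref{sec:preliminaries}. Let $B_1, B_2$ be the two arcs of $\bd t$ connecting $u$ and $v$, and write $k = |B_1| \le |B_2|$ for the number of edges on the shorter arc, so $k \le \lfloor p/2\rfloor$. Since $B_1$ is a $(u,v)$-path, every shortest path $P$ has length $d := |P| = \dist_G(u,v) \le k$. Suppose for contradiction that some shortest path $P$ uses an edge not incident to $t$; then $P \neq B_1, B_2$, and since $P$ and $B_1$ are distinct $(u,v)$-paths their union contains a cycle and thus encloses positive area.

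First I would treat the clean case in which $P$ and $B_1$ meet only at $u$ and $v$. Then $\gamma = P \cup B_1$ is a simple closed curve and $F_\gamma$ is a topological disk tiled by $f \ge 1$ faces of $G$, i.e.\ a hyperbolic polygon with $m = d + k$ boundary vertices (the two paths share only their endpoints). At every boundary vertex at least one tile of $F_\gamma$ is incident, contributing interior angle $2\pi/q$, so the interior angle sum $\phi$ satisfies $\phi \ge m\cdot 2\pi/q$. The area formula then gives
\[
  \area(F_\gamma) = \pi(m-2) - \phi \le \pi m\Bigl(1 - \tfrac2q\Bigr) - 2\pi .
\]
On the other hand $F_\gamma$ contains at least one tile, so $\area(F_\gamma) \ge \pi p(1 - \tfrac2q) - 2\pi$. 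Comparing the two bounds yields $m \ge p$. But $m = d + k \le 2k \le p$, so all inequalities must be tight: $p$ is even, $d = k = p/2$, and $F_\gamma$ is a single face whose $p$ vertices all have interior angle $2\pi/q$, i.e.\ $F_\gamma$ is itself a tile $t'$. Since $P$ leaves $t$ we have $t' \neq t$, yet $t'$ and $t$ share the arc $B_1$ of $p/2 \ge 2$ consecutive edges, which is impossible in an edge-to-edge tiling (two distinct tiles share at most one edge). For odd $p$, or whenever $u,v$ are not antipodal, one even gets $m = d+k \le 2\lfloor p/2\rfloor < p$ outright, so no tight case arises.

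The main obstacle is the general case where $P$ and $B_1$ share internal vertices, so $P \cup B_1$ is not a single simple closed curve. I would resolve this by decomposing along the shared vertices: between consecutive common vertices $z_1, z_2$ the corresponding sub-arc of $P$ and sub-arc of $B_1$ bound a lens-shaped simple region. Crucially, $z_1, z_2$ lie on $\bd t$, so the $B_1$-piece of length $k' \le k$ is a $(z_1,z_2)$-path, while the $P$-piece is a subpath of the shortest path $P$ and hence a shortest $(z_1,z_2)$-path of length $d' \le k'$. The off-tile edge lies in some such lens, and running the identical area computation on it (with $m' = d' + k' \le 2k' \le p$) reproduces the contradiction. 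The remaining routine points to verify are that the enclosed region is genuinely a union of whole tiles (it is, since $G$ has no vertices or edges interior to a face), that the Gauss--Bonnet-style area formula applies to non-convex polygonal disks, and the elementary fact that two distinct tiles of an edge-to-edge regular tiling share at most one edge.
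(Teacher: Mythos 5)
Your proof is correct, but it takes a genuinely different route from the paper's. The paper's argument is a short metric one: truncate the rays from the center of $t$ through its vertices, observe that the unique shortest curve joining two consecutive rays is exactly the tile edge between them, and conclude that any path using an edge not incident to $t$ is strictly longer than the walk along $\bd t$ --- three lines, no case analysis, and it applies verbatim to continuous curves. You instead run a Gauss--Bonnet-style area count: comparing $\area(F_\gamma)=\pi(m-2)-\phi$ with $\phi\ge 2\pi m/q$ against the area $\pi p(1-2/q)-2\pi$ of a single tile forces $m\ge p$, which collides with $m=d+k\le 2\lfloor p/2\rfloor\le p$; and you correctly dispose of the tight case ($p$ even, $u,v$ antipodal) by noting the enclosed region would be a single tile $t'\neq t$ sharing the $p/2\ge 2$ edges of $B_1$ with $t$, impossible in an edge-to-edge tiling. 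This is precisely the technique the paper itself deploys later for \Cref{lemma:shortestPath}, so your argument is in the paper's spirit without being its proof of this lemma; what it buys is a self-contained combinatorial certificate with an explicit equality analysis (shortest paths lie exactly on $\bd t$, both arcs being optimal only in the even-antipodal case), at the cost of more bookkeeping than the rays argument. One fine point in your reduction to the clean case deserves a word: the common vertices of $P$ and $B_1$ need not appear in the same order along the two paths, so choose $z_1,z_2$ consecutive along $P$ (a maximal excursion of $P$ off $V(B_1)$); the resulting closed curve is then simple even if the $B_1$-arc between $z_1$ and $z_2$ contains further common vertices in its interior, because the $P$-piece is internally disjoint from all of $B_1$, and your inequalities $d'\le k'\le k$ go through unchanged (including the degenerate chord case $d'=1$, where simplicity of the tiling graph gives $k'\ge 2$ and hence $m'=1+k'<p$ outright).
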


\begin{proof}
    Consider the rays beginning in the center of $t$ and each incident to a different vertex of $t$. Truncate each ray so that it starts from the corresponding vertex on $t$. For any two rays incident to adjacent vertices in $t$, the unique shortest curve that connects one ray to the other is exactly the edge of $t$ connecting these two rays. Hence, any $(u,v)$ path that uses any edges that are not incident to $t$ must be strictly longer than the shortest path on the boundary of~$t$.
\end{proof}
\begin{figure}[t]
    \centering
    \includegraphics[width=0.7\linewidth]{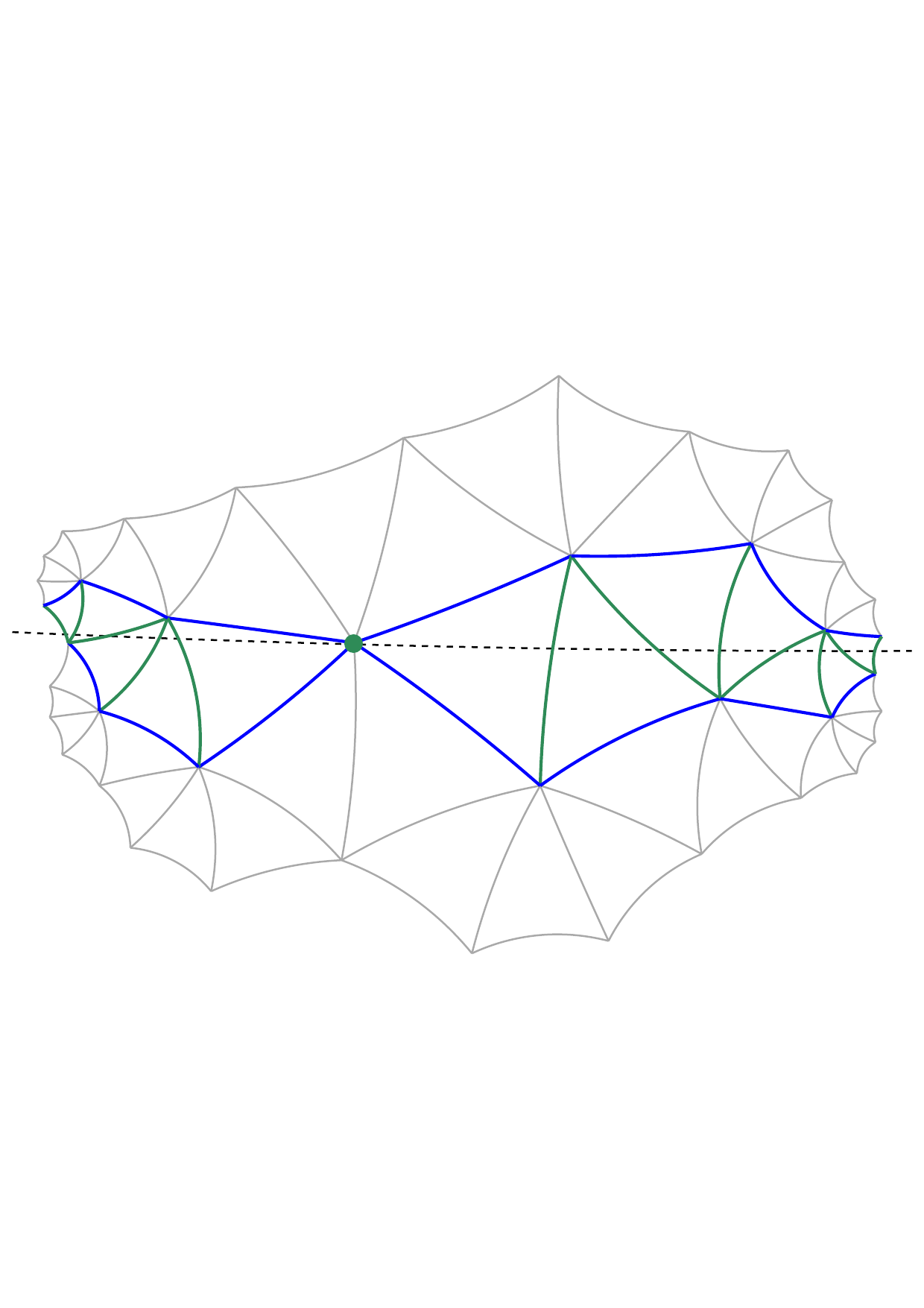}
    \caption{The subgraph $G_\ell$ (blue and green) intersected by a line $\ell$ (dashed) and the sequence~$S_\ell$ of vertices and edges (green) intersected by $\ell$. An additional layer of nearby tiles is depicted (grey).}
    \label{fig:Gl}
\end{figure}
\begin{definition} [Subgraph intersected by a line]
    Let $\ell$ be a line in $\Hyp^2$. The subgraph intersected by $\ell$, $G_{\ell}$, is the subgraph of $G$ induced by the edges that are in $\ell$ and edges incident to tiles whose interiors are intersected by $\ell$. 
\end{definition}

\begin{figure}
    \centering
    \includegraphics{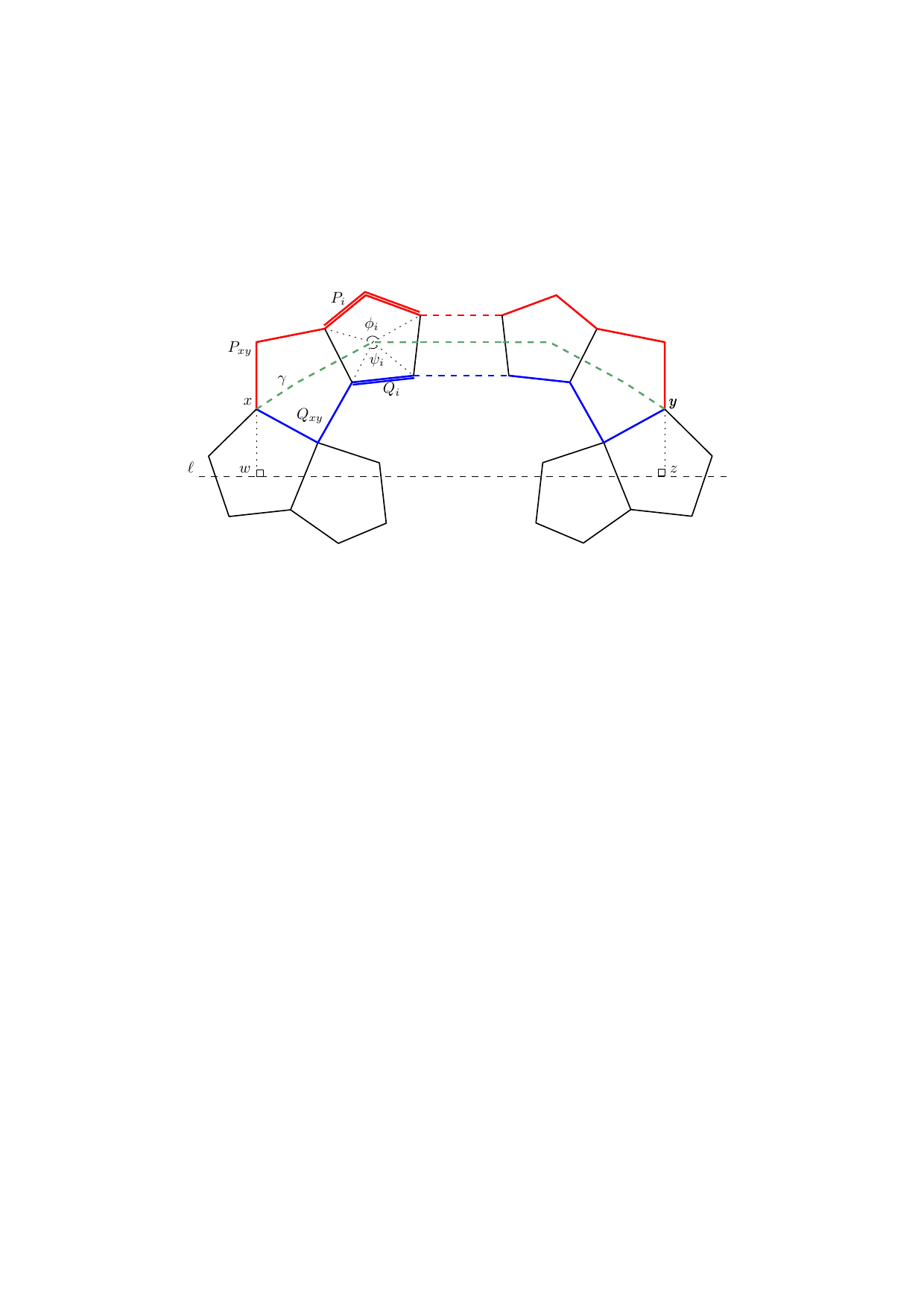}
    \caption{The path $P_{xy}$ (red) cannot be shorter than $Q_{x,y}$ (blue). Only the solid lines represent edges of $G$.}
    \label{fig:Gl}
\end{figure}

\newcommand{\lemmaShortestPathPartTwo}{For any two vertices $x,y$ and any two shortest $(x,y)$-paths $P_1,P_2$. Then there does not exist a vertex $w$ such that $w\in \interior(F_{P_1\cup P_2})$.

Furthermore, for each vertex $v\in V(P_1\cup P_2)$, the number of incident tiles in $F_{P_1\cup P_2}$ is no greater than $4$ when $p=3$, $3$ when $p=4$ and $2$ when $p\geq 5$.}

\begin{lemma} \label{lemma:shortestPath}
 
\begin{enumerate} [(i)]
    \item For any line $\ell$ and any $u,v \in V(G_\ell)$, there exists a shortest path from $u$ to $v$ in $G$ that is fully contained in $G_\ell$.
    \item \lemmaShortestPathPartTwo
\end{enumerate}
    
\end{lemma}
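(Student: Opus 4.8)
The plan is to reduce to the case where $P_1$ and $P_2$ meet only at $x$ and $y$. If they share an intermediate vertex $z$, then since both are shortest the sub-walks between consecutive shared vertices are themselves shortest paths of equal length (their common order along each path is forced by $\dist_G(x,\cdot)$), and the shared vertices lie on $P_1\cup P_2$, hence never in $\interior(F_{P_1\cup P_2})$; so it suffices to treat each such internally disjoint ``lens'' separately. For a lens I write $R=F_{P_1\cup P_2}$, let $d=\dist_G(x,y)$ be the common length, let $T$ be the number of tiles in $R$, let $V_{\mathrm{int}}=|\{w:w\in\interior(R)\}|$, and for each boundary vertex $v$ let $t_v$ be the number of tiles of $R$ incident to $v$, so that the interior angle of $R$ at $v$ is $t_v\cdot\tfrac{2\pi}{q}$.

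The central tool is a hyperbolic area computation. Evaluating $\area(R)$ in two ways — once as $T$ times the tile area $\pi p(1-2/q)-2\pi$, and once via the boundary $2d$-gon as $\pi(2d-2)-\tfrac{2\pi}{q}\sum_v t_v$ — and eliminating $\sum_v t_v$ using the corner count $pT=qV_{\mathrm{int}}+\sum_v t_v$, I obtain the identity
\[
 T(p-2)=2\bigl(d-1+V_{\mathrm{int}}\bigr).
\]
This pins down $T$ once $V_{\mathrm{int}}$ is known, and in particular reduces the first claim to showing $V_{\mathrm{int}}=0$.

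For the tile bound I would argue as follows, assuming $V_{\mathrm{int}}=0$. Fix an interior vertex $v$ of $P_1$ at distance $i$ from $x$; its $t_v$ incident tiles in $R$ form a fan whose extreme edges are the two $P_1$-edges at $v$, and whose $t_v-1$ intermediate spoke-endpoints $a_1,\dots,a_{t_v-1}$ are each adjacent to $v$. Since $R$ has no interior vertex and the only neighbours of $v$ on $P_1$ are its two $P_1$-neighbours, every $a_j$ lies on $P_2$; being adjacent to $v$, its distance from $x$ lies in the three-element window $\{i-1,i,i+1\}$. By planarity the $a_j$ occur in order along $P_2$, and as $P_2$ is shortest the distance from $x$ is strictly monotone along it, so consecutive endpoints differ in distance by exactly $\dist_G(a_j,a_{j+1})$. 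For $p=3$ consecutive spoke-endpoints are adjacent, so the gap is $1$ and at most the three window values fit, giving $t_v\le 4$; for $p=4$ they are non-adjacent (girth $4$) so the gap is $2$ and at most two fit, giving $t_v\le 3$; for $p\ge 5$ the girth is at least $5$, so two vertices adjacent to $v$ have $v$ as their \emph{unique} common neighbour, whence a length-two $P_2$-subpath joining two spoke-endpoints would have to pass through $v\notin P_2$, which is impossible and forces $t_v\le 2$. The endpoints $x,y$, whose fans are bounded by one edge of each path, are handled by the same window count.

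The remaining — and hardest — step is establishing $V_{\mathrm{int}}=0$; the area identity alone only \emph{bounds} $V_{\mathrm{int}}$ (the per-vertex turning constraint of a shortest path, e.g.\ two tiles per side when $p=3$ coming from \Cref{lemma:singleFaceShortestPath}, is too weak to close it), so genuine local geometry is required. I would prove it jointly with the tile bound via a minimal/innermost argument: take an interior vertex $w$ minimizing $\dist_G(x,w)$; a backward neighbour of $w$ then lies on the boundary, exhibiting $w$ as an intermediate spoke-endpoint of that boundary vertex's fan, and the same adjacency-and-girth analysis as above should force $w$ onto the opposite path, contradicting $w\in\interior(R)$. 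The main obstacle is making this local analysis uniform — controlling how a geodesic meets consecutive tiles, verifying that the fan's far sides and spoke-endpoints really lie on the claimed path, and handling ties among innermost vertices and the interaction of the two paths at shared and extreme vertices. This is exactly the configuration bookkeeping that makes the companion \Cref{lemma:extensionByIntersection} technical, and I expect it to be the crux here as well.
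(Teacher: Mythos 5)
Your proposal addresses only part (ii) of the lemma: part (i) — that for any line $\ell$ and $u,v\in V(G_\ell)$ some shortest $(u,v)$-path stays inside $G_\ell$ — is never discussed, yet it is half the statement, and in the paper it is also where the key machinery is built (the sequence of tiles following a path, the curve $\gamma$ through the tile centers, and the angle bookkeeping $\phi_i+\psi_i=2\pi$ with $\len(P_i)-\len(Q_i)$ proportional to $\phi_i-\psi_i$), which the proof of (ii) then reuses. So even a complete argument for (ii) would leave the lemma unproved.

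Within (ii), the claim you yourself flag as the crux, $V_{\mathrm{int}}=0$, is not established, and this is a genuine gap rather than routine bookkeeping. Your identity $T(p-2)=2(d-1+V_{\mathrm{int}})$ is correct but is Euler's formula in disguise: it holds for \emph{every} lens, shortest or not, so it cannot yield a contradiction — as you note, it only pins down $T$. The innermost-vertex sketch is moreover circular: your fan analysis places the intermediate spoke endpoints on $P_2$ only under the hypothesis $V_{\mathrm{int}}=0$; once interior vertices are allowed, the spoke endpoint exhibited at the boundary neighbour of the innermost interior vertex may itself be interior, and no distance-monotonicity constraint applies to it. The missing idea is to use shortestness \emph{quantitatively} through area: the paper compares each $P_j$ with the detour $Q_j$ around its tile sequence, converts $\len(P_j)\le\len(Q_j)$ into the angle bound $\sum_{i}\psi_i^{(j)}\ge\pi m^{(j)}$ at the tile centers, and then applies the hyperbolic polygon area formula to the region $A$ bounded by $\gamma_1\cup\gamma_2$ to get $\area(A)\le\pi(m^{(1)}+m^{(2)})-\pi m^{(1)}-\pi m^{(2)}=0$, contradicting $w\in\interior(A)$. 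For the second claim your conditional fan/window argument is a genuinely different route from the paper's (which shortcuts through $v$ against the $(n-2)(p-2)$-edge path around the union of the middle tiles to get $(n-2)(p-2)\le 2$) and looks essentially workable, but as written it rests entirely on the unproven first claim.
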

\begin{proof}
\textbf{(i)} Suppose for contradiction that there is a shortest path $P_{u,v}\not\subseteq E(G_\ell)$ from $u$ to $v$ that is shorter than any path that stays in $G_{\ell}$, as depicted in Figure \ref{fig:Gl}. Define the \emph{boundary} of $G_\ell$, $\partial G_\ell$, to be the set of edges incident to the unbounded faces of $G_\ell$ that does not contain $\ell$. Let $x$ and $y$ be the vertices where $P_{u,v}$ first leaves and rejoins $G_{\ell}$ respectively, and let $P_{x,y}$ be the subpath of $P_{u,v}$ from $x$ to $y$. Without loss of generality, let $P_{x,y}$ proceed clockwise around the enclosed tiles from $x$ to $y$. Consider the subpath $R_{x,y}\subseteq\partial G_\ell$ between $x$ and $y$. Then $P_{x,y}\cup R_{x,y}$ forms a closed curve that \emph{encloses} a set of tiles (i.e. the tiles in the bounded region of each closed curve). Without loss of generality, select $P_{u,v}$ to be the path that also encloses the fewest such tiles across all maximal subpaths of $P_{u,v}$ which do not contain intermediate vertices in $G_{\ell}$.

    Let $x=v_0,v_1,\dots,v_k=y$ be the sequence of vertices along $P_{x,y}$.
    We construct a \textit{sequence of tiles following $P_{x,y}$}, starting from the enclosed tile incident to the edge $v_0v_1$ and ending in the enclosed tile incident to the last edge $v_{k-1}v_k$. Starting from $v_i = v_1$, add tiles incident to $v_i$ in the counterclockwise direction, incrementing $v_i = v_{i+1}$ when the tile incident to $v_{i+1}$ is added, until $v_i = v_k$. By Lemma \ref{lemma:singleFaceShortestPath}, no tile appears twice in this sequence, otherwise we could construct a shorter path than $P_{x,y}$ by shortcutting $P_{x,y}$ with edges around such a tile. Let $t_1, \dots , t_m$ be the resulting sequence of tiles. Let the $(x,y)$-path formed from the boundary of this set of tiles with $P_{x,y}$ removed be $Q_{x,y}$. Since $Q_{x,y}$ is an $(x,y)$-path which encloses fewer tiles than $P_{x,y}$ with $R_{x,y}$, the path $P_{x,y}$ must be strictly shorter than $Q_{x,y}$.

    Let the centres of the above sequence of tiles be $ c_1,\dots,c_m$. Consider the curve $\gamma$ comprising line segments joining the sequence of points $x, c_1,\dots,c_m ,y$. For each tile $t_i$, its edges can be partitioned into the edges which are in $P_{x,y}$, which we refer to as $P_i$, the (at most two) edges whose relative interiors intersect $\gamma$ and the remaining edges $Q_i$. The curve $\gamma$ makes two angles at each $c_i$, one on the side of $P_i$ and the other on the side of $Q_i$, which we refer to as $\phi_i$ and $\psi_i$, respectively, such that $\phi_i+\psi_i = 2\pi$. Note that $\len(P_i) - \len(Q_i)$ is proportional to $\phi_i - \psi_i$. Each edge in $P_{x,y}$ is incident to exactly one $t_i$ and the sum of the lengths of all remaining edges in $Q_1,\dots,Q_m$ is no smaller than the length of $Q_{x,y}$. Since the length of $P_{x,y}$ is strictly shorter than $Q_{x,y}$, therefore $\sum_{i=1}^m \psi_i > \sum_{i=1}^m \phi_i$. Furthermore, since $\sum_{i=1}^m \left(\phi_i+\psi_i\right) = 2\pi m$ and each edge in $Q_{i}$ adds $2\pi/p$ to $\psi_i$, we have that $\sum_{i=1}^m \psi_i \geq \pi m + 2\pi/p$.
    
    Let the perpendiculars from $x$ and $y$ to $\ell$ intersect $\ell$ at $w$ and $z$ respectively. Consider the polygon $A$, whose edges are $\gamma \cup \{xw,wz,zy\}$. Polygon $A$ is simple as $\gamma$ lies outside of $G_\ell$ while the remaining edges are inside $G_\ell$. The internal angles at $x$ and $y$ are at least $2\pi/2q$ respectively, since the first and last tiles in the sequence are not in $G_{\ell}$. We now apply the formula for the area of a hyperbolic polygon given its internal angles on $A$:
    \begin{equation*}
        \area(A)\leq(m+4-2)\pi - m\pi - \frac{2\pi}{p} - 2\left(\frac{\pi}{2}\right) - 2\left(\frac{2\pi}{2q}\right) = \left(1 - \frac{2}{p} - \frac{2}{q}\right) \pi.
    \end{equation*}

    However, each tile can be split into $2p$ right-angled triangles by adding one line segment from the centre to each vertex and the middle of each edge. Calculating the area of a single right-angled triangle:
    \begin{equation*}
    \begin{split}
        \area(\triangle)& = \pi - \frac{\pi}{2} - \frac{2\pi}{2p} - \frac{2\pi}{2q} = \left(\frac{1}{2}-\frac{1}{p}-\frac{1}{q}\right)\pi.
    \end{split}
    \end{equation*}

    Finally, the length of $Q_{x,y}$ is at least 2 edges (at least 1 more than $P_{x,y}$, which has at least 1 edge), which means $A$ contains at least 4 such right-angled triangles, but this contradicts the upper bound on the area of $A$ above. See \Cref{sec:appsec3} for the proof of (ii).
\end{proof}

\begin{corollary}
    The interval of any two vertices is outerplanar.
\end{corollary}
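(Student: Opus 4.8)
The plan is to deduce the corollary directly from \Cref{lemma:shortestPath}(ii): once we know that the region enclosed by any two shortest paths contains no vertex in its interior, it suffices to exhibit two shortest $(u,v)$-paths whose union bounds the entire interval. Throughout I would fix the planar embedding of $G$ inherited from the tiling, so that the interval $I(u,v)$ is embedded along with it, and argue about the unbounded face of $I(u,v)$ in this embedding.

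First I would produce the two \emph{extreme} shortest paths. Starting at $u$ and repeatedly choosing, among all edges that extend the current walk to a shortest $(u,v)$-path, the clockwise-most such edge, yields a well-defined shortest $(u,v)$-path $P_{\mathrm{cw}}$; taking the counterclockwise-most choice at each step yields $P_{\mathrm{ccw}}$. By construction no shortest $(u,v)$-path can run strictly clockwise of $P_{\mathrm{cw}}$ or strictly counterclockwise of $P_{\mathrm{ccw}}$, so every shortest path, and hence all of $I(u,v)$, is contained in the closed region $\overline{F_{P_{\mathrm{cw}}\cup P_{\mathrm{ccw}}}}$ bounded by these two paths.

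Next I would apply \Cref{lemma:shortestPath}(ii) to the pair $P_{\mathrm{cw}}$ and $P_{\mathrm{ccw}}$, which guarantees that $\interior(F_{P_{\mathrm{cw}}\cup P_{\mathrm{ccw}}})$ contains no vertex of $G$. Combined with the previous step, this forces every vertex of $I(u,v)$ to lie on the closed boundary curve $P_{\mathrm{cw}}\cup P_{\mathrm{ccw}}$. Since this curve separates the bounded region from the unbounded face, every vertex of $I(u,v)$ is incident to its unbounded face, which is precisely the definition of an outerplanar embedding.

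The main obstacle I anticipate is the second step, namely making rigorous that $P_{\mathrm{cw}}$ and $P_{\mathrm{ccw}}$ genuinely sandwich the interval. One must verify that the greedy clockwise-most construction always remains extendable to a full shortest path (so that a valid path is actually produced), and one must rule out a shortest path that crosses $P_{\mathrm{cw}}$ and then returns, thereby temporarily escaping the region. Here I would invoke the standard planar uncrossing argument: if two shortest paths of equal length cross, swapping their subpaths past a crossing again yields shortest paths, so any such excursion can be eliminated without violating the extremality of $P_{\mathrm{cw}}$ or $P_{\mathrm{ccw}}$. Once this is established, everything after it is immediate from part (ii).
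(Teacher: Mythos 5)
Your proof is correct and follows the paper's intended route: the paper states this corollary without further proof as an immediate consequence of \Cref{lemma:shortestPath}(ii), and your clockwise-most/counterclockwise-most extremal shortest paths together with the standard prefix-swapping uncrossing argument are exactly the routine scaffolding needed to make the implicit ``two shortest paths sandwich the whole interval'' step rigorous. The only detail to add is a reference direction for ``clockwise-most'' at the start vertex $u$, which has no incoming edge; any fixed choice (e.g., the direction of the segment $uv$, or a dummy incoming edge at $u$) works.
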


In order to efficiently compute a shortest path, we show that it suffices to consider paths that contain the vertices intersected by $\ell$ and at least one endpoint of edges whose relative interiors intersect $\ell$.

\begin{restatable}{lemma}{lemmaBottlenecks} 
\label{lemma:bottleneck}
Let $\ell$ be a directed line in $\Hyp^2$.
Then, we can make the following observations relating to tiles intersected by $\ell$:
\begin{enumerate}[(i)]
    \item
    Let $q \geq 4$ and $t_1,t_2$ be tiles intersected by $\ell$ that share a vertex $v$.
    Then, $\ell$ must intersect an edge of $t_1$ incident to $v$.
    Specifically, if $t_1$ and $t_2$ share an edge $e$ then $\ell$ will intersect~$e$.

    \item 
    Let $q=3$ and $v$ be a vertex whose incident tiles $t_1, t_2, t_3$ are intersected by $\ell$, in that order.
    Then, $t_1$ does not share a vertex with any tile that $\ell$ intersects after $t_3$.

    \item 
    Let $q=3$ and $e$ be an edge intersected by $\ell$ with incident tiles $t_1, t_2$, whose shared neighbors are not intersected by $\ell$.
    Then, $t_1$ does not share a vertex with any tile that $\ell$ intersects after $t_2$.
\end{enumerate}
\end{restatable}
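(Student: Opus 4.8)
The plan is to study the \emph{flower} $U_v$ around a vertex $v$, that is, the union of the $q$ tiles incident to $v$, and to exploit the fact that a geodesic meets a convex region in a single chord. The flower is a $q(p-2)$-gon whose interior angle equals $2\pi/q$ at each vertex belonging to a single tile and $4\pi/q$ at each of the $q$ neighbours $w_1,\dots,w_q$ of $v$, where two tiles of the flower meet. The two regimes $q\ge 4$ and $q=3$ behave differently precisely because $4\pi/q\le\pi$ in the former and $4\pi/q>\pi$ in the latter.

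For part (i), when $q\ge 4$ every interior angle of $U_v$ is at most $\pi$, so $U_v$ is convex and $\ell\cap U_v$ is a single segment $c$ (if $\ell$ passes through $v$ itself the claim is immediate, so assume $v\notin\ell$). The only edges interior to $U_v$ are the $q$ edges $g_1,\dots,g_q$ incident to $v$, and $\ell$ crosses each at most once; hence $c$ threads a \emph{consecutive} run $t_a,\dots,t_b$ of tiles, crossing $g_{a+1},\dots,g_b$ in order. Since $t_1$ and $t_2$ are both met we have $b>a$, and whichever tile of the run $t_1$ is, at least one of the crossed edges $g_{a+1},\dots,g_b$ is incident both to $v$ and to $t_1$; this proves the first statement. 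If moreover $t_1,t_2$ share an edge $e$, then they are adjacent in the run and $e$ is exactly the edge $c$ crosses between them, so $\ell$ intersects $e$.

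For parts (ii) and (iii), $q=3$ forces $p\ge 7$ and makes $U_v$ reflex at each $w_i$, so the single-chord argument breaks down: $\ell$ may leave and re-enter $U_v$. Here the plan is to combine the monotonicity of $\dist(\cdot,v)$ along the directed line $\ell$ (which decreases to the foot of the perpendicular from $v$ and then increases) with an area argument in the spirit of \Cref{lemma:shortestPath}(i). Once $\ell$ has threaded all three tiles $t_1,t_2,t_3$ at $v$ (resp.\ has crossed the edge $e$ cleanly, its shared neighbours being unmet) it is receding from $v$, whereas every vertex of $t_1$ lies within one tile-diameter of $v$. If some tile $t^\ast$ met after $t_3$ (resp.\ after $t_2$) shared a vertex $u$ with $t_1$, then $u\ne v$, since the only tiles incident to $v$ are $t_1,t_2,t_3$; the sub-arc of $\ell$ between its intersections with $t_1$ and $t^\ast$, closed up by a short path through $u$ in the tiling, would bound a region of boundedly many tiles. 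The hyperbolic polygon-area formula then gives this region too small an area to contain the tiles it must enclose, which is exactly the contradiction used in \Cref{lemma:shortestPath}(i).

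The main obstacle is the $q=3$ analysis. Because the flower is non-convex, one cannot simply invoke convex-chord geometry; instead one must enumerate the finitely many ways a directed geodesic can thread consecutive tiles around $v$, control where it re-enters, and in each case exhibit the short return path through $u$ and verify the resulting angle/area bound. Pinning down these cases is the technical heart of the statement, consistent with the remark that \Cref{lemma:extensionByIntersection} is considerably more technical than \Cref{lemma:shortestPath}(i).
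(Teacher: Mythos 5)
Your part (i) is essentially the paper's argument: both proofs observe that for $q\geq 4$ the union of the $q$ tiles around $v$ has all interior angles at most $\pi$ (the reflex candidates being the $4\pi/q$ angles at the neighbours of $v$), hence is convex, so $\ell$ meets it in a single chord that can only pass between tiles by crossing edges incident to $v$. One small hole in your version of the shared-edge refinement: you assert that $t_1$ and $t_2$ are ``adjacent in the run,'' i.e.\ you silently exclude the possibility that the chord threads the long way around $v$, crossing all spokes except $e$; this needs an argument (e.g.\ that the spokes crossed correspond to neighbours of $v$ lying in an open half-plane, of which there are fewer than $q-1$). The paper sidesteps this entirely by noting that $t_1\cup t_2$ is itself convex and that $e$ separates its two tiles, so any line meeting both interiors must cross $e$.

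For (ii) and (iii), however, there is a genuine gap: what you give is a plan, not a proof, and the plan's central step would fail. The area argument of \Cref{lemma:shortestPath}(i) is powered by a length hypothesis --- a putative shortest path being shorter than the boundary detour yields the angle bound $\sum_i \psi_i \geq \pi m + 2\pi/p$, which is what makes the polygon area formula produce a contradiction. Here $\ell$ is just a geodesic; there is no path and no length hypothesis, so there is nothing to feed into the area formula, and it is unclear why the region you close up through $u$ should be ``too small for the tiles it must enclose.'' Likewise, monotonicity of $\dist(\cdot,v)$ along $\ell$ does not control proximity to a far vertex $u$ of $t_1$: for $q=3$ one has $p\geq 7$, the tiles are large, and $\ell$ can remain within a tile-diameter of $v$ long after leaving $t_3$. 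The paper's actual mechanism is different and is the missing idea: for (ii) it takes the lines $m_{12}$ and $m_{34}$ supporting the edges separating $t_1$ from $t_2$ and $t_3$ from $t_4$; both make angle $\tfrac{2}{3}\pi$ with the common transversal (the edge shared by $t_1$ and $t_3$), hence they do not intersect, so once $\ell$ has crossed $m_{12}$ it can never cross $m_{34}$ and in particular never enters $t_4$ after $t_3$. A topological separation inside the hyperbolic convex hull $C$ of the tiles adjacent to $t_1$ then finishes: $t_2\cup t_1\cup t_4$ separates $C$ with $t_3$ as the only $t_1$-adjacent tile in its component, so $\ell$ cannot reach any other tile sharing a vertex with $t_1$. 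Part (iii) reuses the same separation with $t_3\cup t_1\cup t_4$ and the convexity of $t_1$ (so $\ell$ crosses this barrier exactly once). Your sketch contains neither the diverging-lines fact nor the separation argument, and you acknowledge the case enumeration is not carried out, so as written (ii) and (iii) remain unproven.
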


\begin{definition} [Sequence of edges/vertices intersected by $\ell$, $S_{\ell}$]
    Let $\ell$ be a line in $\Hyp^2$. Define $S_\ell$ to be the infinite sequence of vertices that $\ell$ is incident to and edges that have their relative interiors intersected by $\ell$. 
    
    Let $v,w$ be vertices either on $\ell$ or incident to edges that are not contained in $\ell$ and have their relative interiors intersected by $\ell$. Let $s_v,s_w\in S_\ell$ be the elements of S that are either the vertices $v$ and $w$ or are edges incident to $v$ and $w$ respectively. Define $S_{vw}$ to be the contiguous subsequence of $S_\ell$ that begins with $s_v$ and ends with $s_w$.
\end{definition}

\begin{restatable}{lemma}{lemmapain} \label{lemma:extensionByIntersection}
    Let $v,w$ be vertices that belong to $s_v,s_w\in S_\ell$ respectively. There exists a shortest $(v,w)$-path that intersects all elements of $S_{vw}$ in the order that they appear in $S_{vw}$. In particular, if $uu'$ is an edge in $S_{v,w}$, then $u$ or $u'$ is on a shortest $(v,w)$-path.
\end{restatable}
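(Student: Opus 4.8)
The plan is to start from a shortest $(v,w)$-path that already lives in the strip $G_\ell$ and then argue that such a path is forced to meet every gate of $S_{vw}$ in order, where by the \emph{gate} of an element $s_i\in S_{vw}$ I mean the singleton $\{s_i\}$ when $s_i$ is a vertex on $\ell$ and the pair of endpoints $\{a_i,b_i\}$ when $s_i$ is an edge whose interior $\ell$ crosses. The existence of a shortest path inside $G_\ell$ is exactly Lemma \ref{lemma:shortestPath}(i), so I may assume $P\subseteq G_\ell$ throughout; this is also the reason the statement only asserts existence rather than a property of every shortest path, since shortest paths leaving $G_\ell$ need not track $\ell$ at all.

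First I would set up the combinatorial structure of the strip: the tiles whose interiors meet $\ell$ form a sequence $T_1,T_2,\dots$, and consecutive tiles are glued along the gate element separating them in $S_\ell$. The content of Lemma \ref{lemma:bottleneck} is that this sequence is \emph{simple}: for $q\geq 4$ part (i) guarantees that two crossed tiles sharing a vertex in fact share a crossed edge, so the strip never folds, while parts (ii) and (iii) give the corresponding statement for $q=3$, namely that an early tile shares no vertex with any tile met far later. I would use this to show that, with the single exception discussed below, each gate $X_i$ is a vertex cut of $G_\ell$ separating the $v$-side of the strip from the $w$-side, and that these cuts are nested: $X_{i+1}$ lies in the $w$-component of $G_\ell\setminus X_i$. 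Nestedness forces any $v$-$w$ path to meet the gates in the order $X_1,X_2,\dots$, since $X_i$ then separates $v$ from $X_{i+1}$ and reaching $X_{i+1}$ from $v$ already requires crossing $X_i$. Hence $P$ meets every separating gate in order, which is all that is needed away from the degenerate configuration.

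The degenerate configuration, and the main obstacle, is the $q=3$ \emph{near-vertex} case: when $\ell$ passes very close to a vertex $z$ it can cross all three edges incident to $z$ in succession, entering all three tiles around $z$. Here the three tiles close up into a small cap, so the middle edge-gate (call it $zb$) is \emph{not} a cut of $G_\ell$ — a path may shortcut across the cap along an opposite edge of the third tile and thereby skip the element $zb$ of $S_{vw}$ altogether. To repair this I would show that one may always reroute a shortest path through the apex $z$, which simultaneously meets all of the clustered gates $za,zb,zc$ at their common endpoint. The key point to establish is that the through-$z$ route is no longer than any cap-shortcut; I expect to prove this by the same hyperbolic-area and angle accounting used in Lemma \ref{lemma:shortestPath}, comparing the angles the two routes subtend at the tile centres, together with Lemma \ref{lemma:singleFaceShortestPath} to control behaviour inside a single tile. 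Because Lemma \ref{lemma:bottleneck}(ii),(iii) forbid the tiles at $z$ from touching any far-away tile, these local reroutings at distinct near-vertices are independent and can be carried out simultaneously without disturbing the ordering or the separating gates elsewhere.

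The remaining work is bookkeeping: stitching the local reroutings into a single path, checking that it is still shortest (each reroute preserves length by the area argument, and the pieces meet only at the shared apices), and reading off the ``in particular'' clause, which is immediate once the path meets each edge-gate at an endpoint. I expect the hardest and most technical step to be the near-vertex length comparison for $q=3$, since it is precisely there that the clean separator picture breaks down and one must descend into the case analysis of how a hyperbolic line can cross three consecutive tiles; this is what should make the proof of Lemma \ref{lemma:extensionByIntersection} substantially more involved than that of Lemma \ref{lemma:shortestPath}(i).
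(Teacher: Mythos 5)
Your reduction matches the paper's skeleton: you invoke \Cref{lemma:shortestPath}(i) to restrict to a shortest path inside $G_\ell$, dispose of $q\geq 4$ via the separation statements of \Cref{lemma:bottleneck}(i) (the paper does essentially your gate argument, phrased as ``a skipped element would have to be the crossed edge or an edge at the shared vertex''), and you correctly localize the remaining difficulty to the $q=3$ configuration where $\ell$ crosses all three tiles around a vertex, using \Cref{lemma:bottleneck}(ii),(iii). Your extremal bookkeeping is also handled more cleanly in the paper by simply choosing, among shortest $(v,w)$-paths in $G_\ell$, one intersecting the maximum number of elements of $S_{vw}$, which makes the ``stitching of independent local reroutings'' unnecessary.

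The genuine gap is your key step for $q=3$: the claim that \emph{the through-apex route is no longer than any cap-shortcut}, which you hope to prove by the area/angle accounting of \Cref{lemma:shortestPath}. That inequality is false in general, and the paper's case analysis works precisely in the regime where it fails: with $x$ the first path vertex on $t_1$ and $y$ the last on $t_3$, the critical configurations satisfy $\len(xzy)=\len(xuy)-2$, i.e.\ the detour around the cap through $z$ is \emph{strictly shorter} than passing through the apex $u$. So no length-comparison rerouting can close this case, and the soft isoperimetric accounting of \Cref{lemma:shortestPath} cannot rescue it, since the statement it would need to certify is untrue. What actually kills the skipping path is a constraint on the \emph{line}, not on lengths: the strict-shortcut inequalities pin $x$ to be between $\lfloor p/2\rfloor-1$ and $\lfloor (p-1)/2\rfloor$ edges from $z$ (three cases by the parity of $p$), and then one shows, by explicit hyperbolic trigonometry (the four-parts formula and right-triangle identities, monotonicity in $p$ of the relevant $\cot$ expressions, numerical bounds, and an exact computation for the boundary case $p=7$), that $\ell$ cannot intersect the tile $t_4$ that $y$ would have to share with a further element of $S_{vw}$ --- contradicting $P\subseteq G_\ell$. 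In short: the lemma's path through the apex exists not because it is never longer than the shortcut, but because whenever the shortcut is strictly shorter, the geometry of $\ell$ forbids the path from ever rejoining the crossed tiles, so the skipping configuration cannot arise at all. Your proposal is missing exactly this quantitative angle analysis, which is the technical heart of the paper's proof.
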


\section{Isometric Subgraph Properties}\label{sec:isomsubgraph}
Using Lemma~\ref{lemma:extensionByIntersection} we can now give some further results about shortest paths, which we will use to prove properties for any isometric subgraph of a regular hyperbolic tiling.

\begin{lemma} \label{thm:geodesic extension}
    If $P$ is a shortest $(a,b)$-path in $G$, and $C$ is a simple cycle in $G$ where $b$ is in the bounded region of (or on) $C$, then there exists $c\in V(C)$ and a shortest path $Q$ from $a$ to $c$ that contains $P$.
\end{lemma}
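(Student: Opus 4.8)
The plan is to reduce the statement to a purely metric claim: it suffices to find a vertex $c \in V(C)$ with
\[
\dist(a,c) = \dist(a,b) + \dist(b,c),
\]
that is, a vertex $c$ of the cycle for which $b$ lies on some shortest $(a,c)$-path. Indeed, given such a $c$, let $P''$ be any shortest $(b,c)$-path; then the concatenation $Q$ of $P$ and $P''$ is a walk of length $\dist(a,b)+\dist(b,c) = \dist(a,c)$, hence a shortest $(a,c)$-path, and by construction $E(P) \subseteq E(Q)$. The crucial point is that this step is independent of which shortest $(a,b)$-path $P$ we started with: the whole difficulty is pushed into establishing the distance identity, after which the fixed path $P$ extends automatically.

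To produce such a $c$, I would work with the line $\ell = \ell_{ab}$, directed from $a$ to $b$, so that $a$ and $b$ are vertex elements of $S_\ell$. If $b \in V(C)$ we are done with $c = b$, so assume $b$ lies in the open bounded region of $C$. The ray of $\ell$ emanating from $b$ in the direction away from $a$ must leave this bounded region and therefore meet the simple closed curve $C$. Since edges and lines are both geodesics, this meeting cannot be a tangency: it is either a vertex $c \in V(C)$ lying on $\ell$, or a point in the relative interior of an edge $e = uu'$ of $C$ crossed transversally by $\ell$ (with $e \not\subseteq \ell$); the degenerate case in which $\ell$ contains an edge of $C$ likewise yields a vertex of $C$ on $\ell$ past $b$. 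In every case we obtain an element of $S_\ell$ occurring strictly after $b$ and lying on $C$.

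With this element in hand I would invoke \Cref{lemma:extensionByIntersection}. In the vertex case I apply it to the pair $(a,c)$; in the edge case I set $c := u$ and apply it to $(a,u)$, which is legitimate since $u$ is incident to the edge $e$ with $e \not\subseteq \ell$ whose relative interior $\ell$ meets. In either case the lemma gives a shortest path from $a$ to $c$ that intersects every element of $S_{ac}$ in the order they appear. Since $b$ is a vertex element of $S_\ell$ lying strictly between $a$ and the chosen terminal, this shortest path passes through $b$, which yields exactly $\dist(a,c) = \dist(a,b)+\dist(b,c)$. Combined with the reduction above, this completes the argument, with $c \in V(C)$ as required.

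I expect the main obstacle to be the topological bookkeeping in the second step: verifying that the ray's exit from the bounded region of $C$ genuinely corresponds to an element of $S_\ell$ that both lies on $C$ and comes after $b$ along $\ell$, and that the vertex, transversal-edge, and overlapping-edge cases are exhaustive. The substantive geometry---that one can route a shortest path through the prescribed sequence of line crossings---is entirely absorbed into \Cref{lemma:extensionByIntersection}, which I use as a black box; the only genuinely new idea here is the reduction to distance additivity, which decouples the extension from the particular path $P$.
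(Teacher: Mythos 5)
Your proposal is correct and follows essentially the same route as the paper's proof: shoot the ray $\overrightarrow{ab}$ at $C$, split into the vertex and edge cases to choose $c$, and invoke \Cref{lemma:extensionByIntersection} on the pair $(a,c)$ using that $b$ is a vertex element of $S_{\ell_{ab}}$ lying between. Your explicit reduction to the distance identity $\dist(a,c)=\dist(a,b)+\dist(b,c)$ is a welcome clarification of a step the paper states tersely (the lemma yields \emph{some} shortest $(a,c)$-path through $b$, not one containing the given $P$, so the swap of the prefix for $P$ is indeed needed), but it is a refinement of the same argument rather than a different approach.
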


\begin{proof}
    The ray $\overrightarrow{ab}$ must intersect $C$. If it intersects $C$ at a vertex, let that vertex be $c$. Otherwise $\overrightarrow{ab}$ intersects $C$ at an edge, in which case let either vertex incident to that edge be $c$. Note that $b\in S_{\ell_{ab}}$ and either $c\in S_{\ell_{ab}}$ or $c$ is incident to an edge $e\in S_{\ell_{ab}}$. Hence by Lemma \ref{lemma:extensionByIntersection}, there is a shortest $(a,c)$-path $Q$ which contains $P$.
\end{proof}

Note that since \Cref{lemma:weakExtensionByIntersection} does not guarantee that $Q$ intersects $c$, it is therefore insufficient for the above proof. 

\begin{corollary} [Geodesic extension] \label{cor:geodesicExtensionByOne}
    If $P$ is a shortest $(a,b)$-path in $G$, then $P$ can be extended by one edge to a vertex $c$ such that $P\cup\{bc\}$ is a shortest $(a,c)$-path.
\end{corollary}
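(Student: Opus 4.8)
The plan is to deduce the one-edge extension directly from \Cref{thm:geodesic extension}, which already produces a strictly longer shortest path extending $P$; the single extra edge we need is then simply the first edge of that extension. Concretely, I would choose $C$ to be a simple cycle that encloses $b$ in its \emph{open} interior, for instance the boundary $\bd U$ of the union $U$ of all tiles incident to $b$ (a topological disk with $b$ in its interior, whose outer boundary is a simple cycle). Applying \Cref{thm:geodesic extension} to $P$ and $C$ yields a vertex $c\in V(C)$ and a shortest $(a,c)$-path $Q$ containing $P$.

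The key point is that this forces $Q$ to be strictly longer than $P$. Indeed, since $b$ lies in the open interior of $C$ while $c\in V(C)$ lies on $C$, we have $c\neq b$; as $Q$ is a shortest (hence simple) path from $a$ to $c$ that contains the $(a,b)$-path $P$, the path $P$ must be the initial segment of $Q$, and $Q$ continues from $b$ to $c$ using at least one further edge. Let $c'$ be the vertex immediately following $b$ along $Q$, so that $bc'\in E(G)$ and $P\cup\{bc'\}$ is the initial segment of $Q$ ending at $c'$. Because every initial segment of a shortest path is itself a shortest path, $P\cup\{bc'\}$ is a shortest $(a,c')$-path; in particular $\dist_G(a,c')=\dist_G(a,b)+1$. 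Taking $c:=c'$ proves the statement.

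I expect the only real subtlety — and the reason a naive attempt stalls — to be the temptation to force the vertex $c$ returned by \Cref{thm:geodesic extension} to be a \emph{neighbour} of $b$, by choosing $C$ to be a small cycle hugging $b$. This works cleanly when $p=3$ (the neighbours of $b$ already form a cycle), but for $p\geq 5$ the ray $\overrightarrow{ab}$ can leave any such cycle through an edge both of whose endpoints lie at distance $2$ from $b$, so the lemma need not hand back a neighbour. The trick that avoids this entirely is to let $C$ be an arbitrary enclosing cycle and extract only the \emph{first} edge of the guaranteed extension, using the prefix-closure of shortest paths; correctness then needs nothing beyond $b$ being strictly inside $C$ (so that $c\neq b$ and the extension is non-trivial), which also covers the degenerate case $a=b$.
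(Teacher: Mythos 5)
Your proof is correct and takes essentially the same route as the paper: the paper's one-line proof also chooses $C$ to be the cycle enclosing all tiles incident to $b$ and applies \Cref{thm:geodesic extension}, leaving implicit the two details you spell out (that $b$ strictly inside $C$ forces $c\neq b$, and that the first edge of the extension suffices by prefix-closure of shortest paths). Your remark that one should not try to force $c$ to be a neighbour of $b$ is exactly why the corollary is stated via an enclosing cycle rather than directly.
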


\begin{proof}
    Let $C$ be the the cycle that encloses all tiles incident to $b$ and apply \Cref{thm:geodesic extension}.
\end{proof}

% \begin{definition} [Geodesic extension of a path]
%     A geodesic extension of a shortest $(a,b)$-path $P$ in $G$ is a shortest path $Q=P\cup\{bc\}$ for some edge $bc$.
% \end{definition}

We now work towards showing that there exists a Steiner tree and subset TSP walk that is optimal for $G$ contained in any fixed minimal isometric closure of $K$. For the rest of this section, let $G_K$ be an arbitrary minimal isometric closure of $K$.

\begin{restatable}[Isometric closures are hole-free]{lemma}{lemmaHoleFree}  \label{lem:isometricholefree}
    If $H$ is an isometric subgraph of $G$, then each bounded face of $H$ is a tile.
    
\end{restatable}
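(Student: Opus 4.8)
The plan is to argue by contradiction: suppose some bounded face $f$ of $H$ is not a tile. Since $H\subseteq G$ and the tiles partition $\Hyp^2$, the closed region $\overline f$ is a union of at least two tiles, and its boundary $\bd f$ is a closed walk $C$ all of whose edges and vertices belong to $H$. By the definition of a face, the open interior $\interior(f)$ contains no vertex or edge of $H$; moreover any vertex of $G$ lying on $C$ is an endpoint of some edge of $H$ (tiling edges contain no interior lattice vertices), hence is itself a vertex of $H$. The goal is to exhibit two vertices $x,y\in V(C)$ whose distance in $G$ is realised only by paths crossing $\interior(f)$, so that no path of $H$ — which must avoid $\interior(f)$ — can match $\dist_G(x,y)$, contradicting that $H$ is isometric.

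First I would dispose of the case where $\interior(f)$ contains no vertex of $G$. Then $\overline f$ is a union of at least two tiles with no interior vertices, so some two tiles $t_1,t_2\subseteq\overline f$ share an edge $e=ab$; as $t_1,t_2$ lie inside $f$, the edge $e$ lies in $\interior(f)$, whence $e\notin E(H)$, while its endpoints $a,b$ lie on $C$ and are therefore vertices of $H$. Thus $\dist_G(a,b)=1$ but $\dist_H(a,b)\geq 2$, contradicting isometry. This already settles, in particular, the situation of two adjacent tiles.

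The remaining case — where $\interior(f)$ contains a vertex $w$ of $G$ — is the heart of the argument, and here I would exploit the straightness of geodesics. Choose a line $\ell$ through $w$ and two vertices $x,y\in V(C)$ close to the two points where $\ell$ exits $\overline f$, so that $x,y\in V(G_\ell)$. By \Cref{lemma:shortestPath}(i) there is a shortest $(x,y)$-path $P_G\subseteq G_\ell$ in $G$; this path hugs $\ell$ and hence runs through $\interior(f)$ past $w$. Because $H$ is isometric there is a shortest $(x,y)$-path $P_H$ in $H$ with $\len(P_H)=\dist_G(x,y)$, so $P_H$ is also a geodesic of $G$, and being contained in $H$ it avoids $\interior(f)$ entirely. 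Now $P_G$ and $P_H$ are two shortest $(x,y)$-paths in $G$, so \Cref{lemma:shortestPath}(ii) forbids any vertex of $G$ in $\interior(F_{P_G\cup P_H})$. Since $P_G$ threads the interior of $f$ while $P_H$ stays outside $f$, the enclosed region $F_{P_G\cup P_H}$ captures a genuine piece of $\interior(f)$, and I would choose $\ell$ (equivalently $x,y$) so that $w$, or one of the lattice vertices adjacent to $w$ inside $f$, lies strictly in $\interior(F_{P_G\cup P_H})$ — yielding the forbidden interior vertex and the desired contradiction.

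The main obstacle is precisely this last step: verifying that $x,y$ and $\ell$ can be selected so that the region trapped between the interior-crossing geodesic $P_G$ and the exterior $H$-geodesic $P_H$ provably contains a vertex of $G$, rather than a vertex-free sliver along $\ell$. I expect to control this with the area and angle bookkeeping already developed for \Cref{lemma:shortestPath}: the region between $P_G$ and $P_H$ must enclose at least one whole tile of $\overline f$ once $\interior(f)$ meets $G$, and a tile carries vertices of $G$ at its corners. The easy case above, together with the geodesic-extension tool of \Cref{cor:geodesicExtensionByOne} (used to slide the endpoints of $P_G$ onto convenient vertices of $C$), should make this bookkeeping routine once the enclosure is correctly set up.
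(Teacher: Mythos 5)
Your first case (no vertex of $G$ in $\interior(f)$) is correct and self-contained: the shared edge $e=ab$ has both endpoints on $C$, hence in $V(H)$, giving $\dist_G(a,b)=1<2\le\dist_H(a,b)$. But the main case has a genuine gap, which you yourself flag, and it is not "routine bookkeeping." Two concrete failure points. First, \Cref{lemma:shortestPath}(i) only gives $P_G\subseteq G_\ell$, and $G_\ell$ contains \emph{all} edges of the tiles intersected by $\ell$ --- including boundary edges lying on $C$ --- so $P_G$ need not enter $\interior(f)$ at all: it may hug $C$, in which case $F_{P_G\cup P_H}$ is empty or a vertex-free sliver. Second, even if some geodesic does enter the interior, it may pass \emph{through} $w$, leaving $w$ on the boundary of $F_{P_G\cup P_H}$, where \Cref{lemma:shortestPath}(ii) says nothing; and since isometry only requires $H$ to contain \emph{some} geodesic, the coexistence in $I(x,y)$ of an interior-passing geodesic and a boundary-hugging one is perfectly consistent with $H$ being isometric. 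So no choice of $x,y,\ell$ can make the trapping argument work by itself: a contradiction can only come from showing that \emph{every} shortest path between suitably chosen boundary vertices is forced through a specific interior edge, and that forcing statement is exactly what your sketch is missing.

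This forcing is what the paper's proof supplies, by a different setup. Instead of a line through an interior vertex, it takes an edge $e=uv$ whose relative interior lies in $\interior(F)$ with $u\in V(H)$, and the line $\ell$ \emph{containing} $e$; letting $w\in V(H)$ be where $\ell$ meets $\bd F$ again, it proves that every shortest $(u,w)$-path uses the edge $uv$. This is done by explicit case analysis on $p$ and $q$: using \Cref{lemma:shortestPath}(ii) to restrict attention to detours along tiles incident to the geodesic, one compares the length of any detour avoiding $v$ (and, in the tight cases $p=3$ and $p=4,q=5$, also avoiding the next vertex $v'$ along $\ell$) against the direct path, and finds the detour strictly longer for all admissible $\{p,q\}$. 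Isometry then forces $uv\in E(H)$, contradicting that its relative interior lies in $\interior(F)$. Note that your easy case is essentially the sub-case of this argument in which both endpoints of the interior edge lie on $C$; the substantive content of the lemma is precisely the detour-length analysis that replaces your trapping step, and \Cref{cor:geodesicExtensionByOne} plays no role in it.
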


\begin{definition} [Boundary and boundary walk]
    Given a subgraph $H$ of $G$, the boundary of $H$, $\partial H$, is the set of edges that lie on the unbounded face of $H$.
    
    For $u \in V(\partial H)$, denote by $W_u$ the unique closed walk beginning at $u$ using all the edges in $\partial H$ that traverses around $H$ clockwise; we call this a boundary walk. The partial boundary walk $W_{uv}$ for a given $v \in V(\partial H)$ is the minimum prefix of $W_u$ ending at $v$.
\end{definition}

\begin{restatable}[Boundary of $\convGK$ between terminals] {lemma}{lemmaBoundaryBetweenTerminals}\label{lemma:boundaryBetweenTerminal}
    If $s,t\in K$ are terminals on $\partial\convGK$ such that there are no other terminals on the walk $W_{st}$, then $W_{st}$ is a shortest $(s,t)$-path.
\end{restatable}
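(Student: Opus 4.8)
The plan is to prove the two inequalities $\len(W_{st}) \geq \dist_G(s,t)$ and $\len(W_{st}) \leq \dist_G(s,t)$ separately. The first is immediate, since $W_{st}$ is an $(s,t)$-walk inside the minimal isometric closure $\convGK$ and $\convGK$ is isometric, so $\len(W_{st}) \geq \dist_{\convGK}(s,t) = \dist_G(s,t)$. All the content is therefore in the reverse inequality: that the boundary arc between two consecutive terminals never \emph{bulges outward} beyond a geodesic. Minimality of $\convGK$ is essential here, as a non-minimal isometric closure may bulge freely (adding edges never destroys isometry), so some feature of minimality must forbid it. As a preliminary I would record the structure of $\convGK$: being isometric it is connected, and by \Cref{lem:isometricholefree} every bounded face is a tile, so $\convGK$ is a topological disk whose boundary $\bd\convGK$ is a single closed walk; $W_{st}$ and its complementary arc partition this walk, and by hypothesis every terminal other than $s,t$ lies on the complementary arc.

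Next I would assume for contradiction that $\len(W_{st}) > \dist_G(s,t)$. By isometry there is a shortest $(s,t)$-path $P \subseteq \convGK$ with $\len(P) = \dist_G(s,t) < \len(W_{st})$, and since $\convGK$ is hole-free, $W_{st}$ together with $P$ encloses a \emph{nonempty} region $\Delta$ that is a union of tiles. The goal is then to exhibit a proper isometric subgraph of $\convGK$ still containing $K$, contradicting minimality. The natural candidate replaces the arc $W_{st}$ by $P$, deleting the edges of $\Delta$ lying strictly on the $W_{st}$-side. Two things must be checked: (a) no terminal is deleted, which is exactly where the hypothesis that $W_{st}$ carries no terminal other than $s,t$ is used, together with an argument that a terminal cannot sit strictly inside $\Delta$ in a minimal closure; and (b) replacing $W_{st}$ by $P$ preserves all pairwise distances, i.e.\ for every pair whose shortest paths all used a deleted edge there is a rerouting of equal length.

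Part (b) is the crux, and here I would run an area argument in the style of \Cref{lemma:shortestPath}(i). Concretely, I would compare the detour along the outer arc $W_{st}$ with the inner route across $\Delta$ by forming the hyperbolic polygon bounded by $P$, $W_{st}$, and the segments from boundary vertices to the centres of the tiles of $\Delta$, summing its interior angles, and applying the polygon area formula $\pi(m-2)-\phi$ against the total area of the enclosed tiles (each decomposable into $2p$ right triangles of area $(\frac12-\frac1p-\frac1q)\pi$). The angle bookkeeping uses \Cref{lemma:shortestPath}(ii), which bounds the number of tiles of $\convGK$ meeting a boundary vertex; the decisive point is that a vertex at which $W_{st}$ could turn outward enough to make the detour strictly shorter would be a convex corner of the hull, which — with no terminal present on $W_{st}$ — is forbidden. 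Hence no deleted edge shortens any distance, the rerouting in (b) is free, minimality is contradicted, and therefore $\Delta$ is empty, giving $\len(W_{st})=\dist_G(s,t)$.

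The main obstacle I anticipate is precisely part (b): making rigorous the statement that between consecutive terminals the boundary never turns outward enough to beat the inner geodesic, while simultaneously handling the degeneracies of the boundary \emph{walk} (repeated vertices and cut vertices) and ruling out an interior terminal inside $\Delta$. The area and angle computation supplies the geometry, but tying it cleanly to the combinatorics of the boundary walk and to the deletion/minimality argument is the delicate part of the proof.
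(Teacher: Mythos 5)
Your setup---the easy inequality, the contradiction hypothesis, a shortest $(s,t)$-path $P_{st}$ inside $\convGK$ cobounding a nonempty union of tiles $\Delta$ with $W_{st}$, and the plan to delete that region and contradict minimality---matches the paper. But you have inverted which step is hard. Your part (b) needs no geometry at all: since $W_{st}$ lies on the outer face of $\convGK$, the only interface between $\Delta$ and the rest of $\convGK$ is $P_{st}$ itself, so any shortest path that visited a deleted vertex must meet $P_{st}$ at least twice and can be shortcut along $P_{st}$ with no increase in length. This one observation is all the paper uses for distance preservation; no area or angle computation appears anywhere in its proof of this lemma. Worse, the area argument you propose for (b) is circular as sketched: your ``decisive point'' that the boundary cannot turn outward at a non-terminal because such a convex corner ``is forbidden'' is exactly the content of the lemma being proven, not a fact available to you---and such corners genuinely do occur when a terminal sits inside the bulge, so no argument blind to terminal positions (as the area machinery of \Cref{lemma:shortestPath} is) can rule them out.

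This exposes the real gap: your part (a), dismissed as ``an argument that a terminal cannot sit strictly inside $\Delta$ in a minimal closure'', is the entire substance of the paper's proof and is missing from yours. The paper proceeds as follows: since deleting all vertices of $\Delta$ off $P_{st}$ preserves isometry (by the shortcut argument above), minimality forces a terminal $q$ strictly inside $\Delta$; choosing $P_{st}$ to enclose the fewest vertices makes it the unique shortest $(s,t)$-path within the region; then \Cref{thm:geodesic extension} (geodesic extension to a cycle, which rests on the technical \Cref{lemma:extensionByIntersection}) extends a shortest $(s,q)$-path to a vertex $p$ of the bounding curve, and uniqueness forces $p$ onto $W_{st}$ rather than $P_{st}$; finally, taking $q$ extremal so that all terminals lie on one side of the resulting path $P_{sp}$, the subregion between $W_{sp}$ and $P_{sp}$ is nonempty and terminal-free (here the hypothesis that no terminals lie on $W_{st}$ enters), and deleting it preserves isometry, contradicting minimality a second time. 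Your proposal contains no mechanism playing the role of this geodesic-extension step, and without it the no-terminals-on-$W_{st}$ hypothesis is never brought to bear on terminals inside $\Delta$, so the deletion in your plan cannot be justified.
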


\begin{restatable}{lemma}{lemmaBoundaryWalk}\label{lemma:boundaryWalkLemmas}
    Given vertices $u,v\in V(\partial\convGK)$, let $b$ be the next vertex after $v$ on $W_u$.
    \begin{enumerate}[(i)]
        \item If $W_{uv}$ is a shortest $(u,v)$-path, then either $b$ is a geodesic extension of $W_{uv}$ or for all vertices $s$ which are geodesic extensions of $W_{uv}$, $b$ is further clockwise about $v$ from $W_{uv}$  than $s$.
        \item If $Q_{uv}$ is a $(u,v)$-path which does not use any edges in $\convGK$, then $\len(Q_{uv})\geq \len(W_{uv})$.
    \end{enumerate}
\end{restatable}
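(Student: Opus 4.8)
The plan is to handle the two parts by different means: part (i) is a \emph{local} statement about the rotation of edges around the single vertex $v$, which I would attack through the characterisation of geodesic extensions together with the fact that the tiles of $\convGK$ around $v$ form one contiguous fan; part (ii) is a \emph{global} length comparison, which I would attack by a hyperbolic-area argument in the spirit of the proof of \Cref{lemma:shortestPath}(i).

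For part (i), I would start from the proof of \Cref{cor:geodesicExtensionByOne}: the geodesic extensions of the shortest path $W_{uv}$ are obtained by following the ray $\overrightarrow{uv}$ past $v$ until it meets the cycle enclosing the tiles incident to $v$, so they are exactly the neighbours $s$ of $v$ lying in a forward cone at $v$, namely those for which the turn from the last edge of $W_{uv}$ to $vs$ is gentle enough that no shortcut appears. The admissible turn is limited by the shortcut threshold of \Cref{lemma:singleFaceShortestPath}/\Cref{lemma:shortestPath}(ii): a geodesic extension must leave enough incident tiles on each side of the path at $v$. Next, by \Cref{lem:isometricholefree} the tiles of $\convGK$ incident to $v$ form a single contiguous fan, so the last edge of $W_{uv}$ and the edge $vb$ are precisely the two extreme edges of this fan; since $W_u$ traverses $\partial\convGK$ clockwise with $\convGK$ on its right, $vb$ is the extreme fan edge on the far side from the incoming edge. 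The crux is to show that in the rotational order around $v$ induced by this traversal, no geodesic extension lies strictly beyond $vb$: the shortcut threshold caps how far a geodesic extension can turn, while $W_{uv}$ arriving as a shortest path along the fan forces enough $\convGK$-tiles onto the interior side that $b$, sitting past the entire fan, is at least that far. This yields the stated dichotomy: either $vb$ still lies inside the forward cone, in which case $b$ is itself a geodesic extension, or it lies outside it, in which case $b$ is strictly further clockwise than every geodesic extension $s$.

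I expect the tile-counting in this last step to be the main obstacle, exactly as the analogous bookkeeping was the technical heart of \Cref{lemma:shortestPath}(ii) and \Cref{lemma:bottleneck}. The threshold separating ``shortcut'' from ``no shortcut'' behaves differently in the regimes $p=3,4,\ge 5$ and, more delicately, for $q=3$ versus $q\ge 4$: when $q=3$ a geodesic can interact with all three tiles around a vertex, so I would feed in the finer information of \Cref{lemma:bottleneck} there. Concretely I would fix the incoming edge, enumerate the admissible widths of the fan in each regime, and verify case by case that the clockwise-most admissible geodesic extension is weakly inside~$vb$.

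For part (ii) I would argue by a hyperbolic-area comparison mirroring \Cref{lemma:shortestPath}(i). I may assume $Q_{uv}$ and $W_{uv}$ bound the ``small'' region $R$ on the exterior side of $\convGK$ (if $Q_{uv}$ instead wraps the long way around $\convGK$ it is only longer, and the inequality is immediate). The key structural point is that, because $Q_{uv}$ uses \emph{no} edge of $\convGK$, every tile of $R$ lies strictly outside $\convGK$, and $Q_{uv}$ cannot shortcut across a concavity of the (non-convex) closure, since the mouth of such a concavity consists of $\convGK$-edges; this is precisely what forbids $Q_{uv}$ from beating the boundary walk. I would then lay down the sequence of tiles of $R$ and the centre-curve $\gamma$ through their centres as in \Cref{lemma:shortestPath}(i), compare the inner angles $\phi_i$ (toward $W_{uv}$) and outer angles $\psi_i$ (toward $Q_{uv}$) with $\phi_i+\psi_i=2\pi$, and use that $\len(Q_{uv})-\len(W_{uv})$ is governed by $\sum_i(\psi_i-\phi_i)$. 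Nonnegativity of $\area(R)$ (which is a positive constant times the number of enclosed tiles) then forces $\sum_i\psi_i\ge\sum_i\phi_i$, i.e.\ $\len(Q_{uv})\ge\len(W_{uv})$, with equality exactly when $R$ is empty. The delicate point here, which I would isolate as the main obstacle, is the treatment of the reflex (concave) vertices of $\partial\convGK$ that occur already for $q=3$, where $W_{uv}$ pokes toward $Q_{uv}$; I would resolve this by carrying out the angle sum entirely within the exterior region $R$ so that each concavity is matched by exterior tiles that $Q_{uv}$ is likewise forced to traverse, and I would finish by checking the degenerate base cases (no enclosed tile, or a repeated vertex at a pinch of $\partial\convGK$) directly.
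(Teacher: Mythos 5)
Your plan for part~(i) has a genuine gap at exactly the step you flag as the ``main obstacle''. You want to characterise the geodesic extensions of $W_{uv}$ as a rotationally contiguous ``forward cone'' at $v$, delimited by a local shortcut threshold, and then compare that cone with the fan of $\convGK$-tiles at $v$ by case analysis on $p$ and $q$. But whether $W_{uv}\cup\{vs\}$ is a shortest $(u,s)$-path is a \emph{global} property of the entire path, not of the tiles around $v$: nothing in \Cref{lemma:singleFaceShortestPath} or \Cref{lemma:shortestPath}(ii) yields that the geodesic extensions form a contiguous arc with extremes computable from the local fan, and no purely local count at $v$ can, since the admissible extensions depend on the earlier turns of $W_{uv}$. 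The paper's proof is instead a short global argument your sketch never touches: if $b$ is not a geodesic extension, then $\dist(u,b)\le d=\dist(u,v)$, and since $\convGK$ is isometrically closed there is a shortest $(u,b)$-path $P_{ub}$ \emph{inside} $\convGK$; the closed curve $P_{ub}\cup W_{uv}\cup\{vb\}$ encloses any geodesic extension $s$ lying clockwise beyond $b$, and applying \Cref{thm:geodesic extension} to the shortest $(u,s)$-path of length $d+1$ produces a vertex $c$ on that curve with $\dist(u,c)\ge d+1$, contradicting that every vertex of the curve is at distance at most $d$ from $u$. Note that isometric closedness of $\convGK$ does real work here, and your local analysis has no access to it.

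For part~(ii) the area argument fails at several concrete points. First, the angle bookkeeping of \Cref{lemma:shortestPath}(i) requires both bounding paths to hug a single non-repeating sequence of tiles (guaranteed there because $P_{x,y}$ is a shortest path, chosen to enclose minimally); here $Q_{uv}$ is arbitrary and $W_{uv}$ is \emph{not} a shortest path (it concatenates geodesics through terminals), so neither the tile sequence nor the proportionality between $\len(Q_{uv})-\len(W_{uv})$ and $\sum_i(\psi_i-\phi_i)$ is available, and ``nonnegativity of $\area(R)$'' only upper-bounds an angle sum once the polygon is closed off --- the closing device of \Cref{lemma:shortestPath}(i) (perpendiculars to the line $\ell$) has no analogue here. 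Second, your claim that $Q_{uv}$ cannot shortcut a concavity ``since the mouth of such a concavity consists of $\convGK$-edges'' is unsupported and essentially assumes the lemma: by \Cref{lemma:boundaryBetweenTerminal} (itself a consequence of the \emph{minimality} of $\convGK$) concavities of $\partial\convGK$ sit at terminals, their mouths are in general not hull edges, and ruling out shortcuts across them is precisely the content of (ii). Third, your argument invokes minimality nowhere, whereas the paper's proof depends on it through \Cref{lemma:boundaryBetweenTerminal} and then proceeds by induction on the number of terminals on $W_{uv}$: the key step (\Cref{claim:extendCut}) geodesically extends $W_{us}$, using part~(i) and \Cref{thm:geodesic extension} repeatedly, until it hits $Q_{uv}$ at some vertex $q$, after which the length inequality splits as $\len(Q_{uq})\ge\len(W_{us})+\len(P_{sq})$ plus the inductive bound on the $(s,v)$-side. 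Your sketch replaces this machinery with nothing, so as it stands neither part is proved.
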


\begin{figure}
    \centering
    \includegraphics{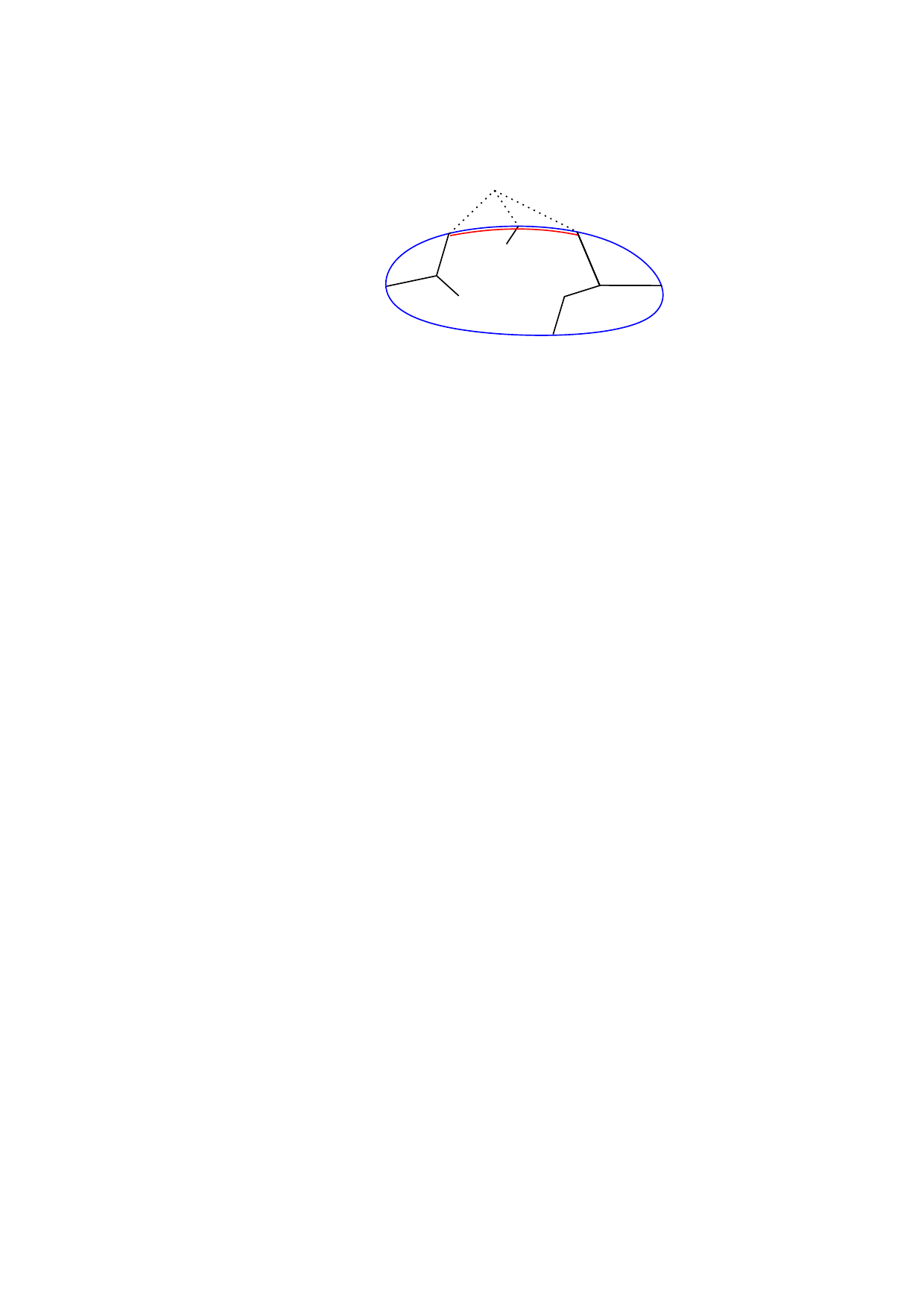}
    \caption{Optimal Steiner tree $T$ (black) can be made to use only edges in $\convGK$ (blue) by replacing subtrees $R_i$ (dotted black) with boundary walks (red).}
\end{figure}
\begin{lemma} \label{lemma:convGK}
    There exists a Steiner tree and subset TSP walk in $\convGK$ that is optimal for $G$.
\end{lemma}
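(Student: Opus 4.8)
The plan is to start from an optimal solution in the infinite graph $G$ and ``push'' every part of it that leaves $\convGK$ onto the boundary $\partial\convGK$, using Lemma~\ref{lemma:boundaryWalkLemmas}(ii) to guarantee that this only decreases the total length. The crucial structural input is Lemma~\ref{lem:isometricholefree}: since $\convGK$ is hole-free, every bounded face is a tile, so every vertex in the interior of $\convGK$ has all of its incident edges in $E(\convGK)$. Consequently, any edge of a solution that does not belong to $E(\convGK)$ can only meet $V(\convGK)$ along $\partial\convGK$; equivalently, a maximal piece of the solution that uses only non-closure edges touches $\convGK$ exactly at boundary vertices, which I will call its \emph{docks}. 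In particular, since $K\subseteq V(\convGK)$, no terminal can be an interior vertex of such a piece, so replacing these pieces can never skip or disconnect a terminal.

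For the Steiner tree, first take an optimal tree $T$ for $K$ in $G$ and prune it so that every leaf is a terminal. Deleting the closure edges from $T$ leaves a forest whose nontrivial components $R_1,\dots,R_s$ consist entirely of non-closure edges; by the observation above the set $D_i := V(R_i)\cap V(\convGK)$ of docks of each $R_i$ lies on $\partial\convGK$, and every leaf of $R_i$ is a dock (either a terminal, or an attachment point to a closure edge of $T$). Writing the docks of $R_i$ in clockwise cyclic order along $\partial\convGK$ and splitting at the dock-free gap, I pick the two extreme docks $a,c$ so that the clockwise boundary walk $W_{ac}$ passes through every dock of $R_i$. The $(a,c)$-path $P$ inside $R_i$ uses no closure edge, so Lemma~\ref{lemma:boundaryWalkLemmas}(ii) gives $\len(W_{ac})\le \len(P)\le \len(R_i)$. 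Replacing each $R_i$ by $W_{ac}$ yields a connected subgraph of $\convGK$ that still contains every terminal (all docks, hence all attachment points and terminals of $R_i$, stay mutually connected through $W_{ac}$) and has length at most $\len(T)$; any spanning tree of it is then an optimal Steiner tree living in $\convGK$.

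For subset TSP the argument is the same but simpler, since the excursions are subwalks rather than subtrees. Take an optimal closed walk $W$ through $K$ and cut it into maximal subwalks that avoid $E(\convGK)$ (the \emph{excursions}) together with the remaining pieces on the closure; refining the cuts at every vertex of $V(\convGK)$, each excursion runs between two docks $a,b\in V(\partial\convGK)$ with all internal vertices outside $V(\convGK)$. An excursion contains an $(a,b)$-path avoiding closure edges, so Lemma~\ref{lemma:boundaryWalkLemmas}(ii) again bounds its length below by $\len(W_{ab})$. Replacing every excursion by the corresponding boundary walk keeps the object a closed walk, leaves every terminal visited (no terminal is interior to an excursion, by hole-freeness), and does not increase the length; the result is an optimal subset TSP walk inside $\convGK$.

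The main point requiring care is the application of Lemma~\ref{lemma:boundaryWalkLemmas}(ii): the lemma compares an outside path to the \emph{clockwise} boundary walk between its endpoints, so I must choose the orientation and the dock pair $a,c$ (resp.\ $a,b$) so that $W_{ac}$ is exactly the boundary arc shielded by the excursion and sweeping through all the relevant docks. This is where planarity and hole-freeness of $\convGK$ are used: each excursion lies in the single unbounded face of $\convGK$, so together with one boundary arc it bounds a region, which pins down the correct clockwise arc. The remaining verifications---that pruning leaves only terminal leaves, that the docks capture all attachment points, and that the replacements preserve connectivity---are routine.
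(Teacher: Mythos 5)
Your proof is correct and takes essentially the same route as the paper: both decompose an optimal solution into maximal pieces lying outside $\convGK$ (subtrees for Steiner tree, excursion subwalks for subset TSP), identify the two extreme attachment vertices on $\partial\convGK$, and invoke Lemma~\ref{lemma:boundaryWalkLemmas}(ii) to replace each piece by the boundary walk through all attachment points without increasing the length or disconnecting any terminal. Your extra details (pruning to terminal leaves, and using Lemma~\ref{lem:isometricholefree} to argue that the ``docks'' must lie on $\partial\convGK$) merely make explicit steps the paper leaves implicit.
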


\begin{proof}
    Let $T$ be any optimal Steiner tree in $G$. Removing any edges in $\convGK$ from $T$ produces a forest $R$. Let $R_i$ be a maximal subtree in $R$ such that none of its internal vertices are in $V(\partial\convGK)$. Let $V_i = V(R_i)\cap V(\partial\convGK)$. Note that $|V_i|\geq 2$ since $R_i$ intersects $\partial\convGK$ at $\geq2$ points. Let $u_i,v_i\in V_i$ be the two vertices furthest apart along $\partial\convGK$. Since $R_i$ contains a $(u_i,v_i)$-path, by \Cref{lemma:boundaryWalkLemmas} (ii) replacing $R_i$ with $W_{u_iv_i}$ produces a tree that is no larger. Repeating this for all $R_i$ removes all edges of $R$ outside $\convGK$. Furthermore, note that this process produces a Steiner tree since we only disconnect vertices outside of $\convGK$, which cannot be terminals. Therefore, this constructs an optimal Steiner tree using only edges from $\convGK$.

    The proof for Subset TSP is analogous: we set $R_i$ to be a maximal subwalk of some optimum tour that falls outside $\convGK$. Then by \Cref{lemma:boundaryWalkLemmas} (ii) it can be replaced with the corresponding part of the boundary walk of $\convGK$ without making the walk longer.
\end{proof}

\section{Computing an Isometric Closure in the Tiling Graph} \label{sec:computing}

Algorithm \ref{alg:boundary} is a procedure to explicitly compute the boundary of an isometric closure, which we denote $\hat{G}_K$. This is done by computing a shortest path between each pair of terminals $v_i$ and $v_{i+1}$ that defines a line segment $v_iv_{i+1}$ on the boundary of $\conv_\Hyp(K)$. In particular, the algorithm finds the shortest $(v_i,v_{i+1})$-path that intersects all elements of $S_{v_iv_{i+1}}$ in order. Due to Lemma \ref{lemma:singleFaceShortestPath}, the shortest paths between vertices belonging to consecutive elements of $S_{v_iv_{i+1}}$ consist only of edges of the shared tile. Hence, it suffices to compute the shortest path to each $s_j\in S_{v_iv_{i+1}}$ by dynamic programming. Furthermore, the vertices belonging to each $s_j\in S_{v_iv_{i+1}}$ can be found efficiently given the previous $s_{j-1}$ by traversing the shared tile in both directions and checking whether we have reached the next tile intersected by $v_iv_{i+1}$. In total over all of $S_{v_iv_{i+1}}$ this traverses a number of edges that is at most twice the length of a shortest $(v_i,v_{i+1})$-path.

\begin{restatable}{lemma}{lemmaHatGKProperties}
    The graph $\hat{G}_K$ is isometrically closed and is a subgraph of the convex hull $\conv_G(K)$.
\end{restatable}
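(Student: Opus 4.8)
The plan is to prove the two assertions separately, relying on the structure produced by \Cref{alg:boundary}: by construction the boundary $\partial\hat{G}_K$ is a closed walk $W = P_1 P_2 \cdots P_k$ obtained by concatenating, for each edge $v_i v_{i+1}$ of the hyperbolic convex hull $\conv_\Hyp(K)$, a shortest $(v_i,v_{i+1})$-path $P_i$ that meets every element of $S_{v_i v_{i+1}}$ (\Cref{lemma:extensionByIntersection}), and $\hat{G}_K$ consists of $W$ together with every tile it encloses. In particular $\hat{G}_K$ is hole-free by construction (each bounded face is a whole enclosed tile), and since every terminal lies in $\conv_\Hyp(K)$ we have $K \subseteq V(\hat{G}_K)$.

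First I would show $\hat{G}_K \subseteq \conv_G(K)$. Each arc $P_i$ is a shortest path, so $P_i \subseteq I_G(v_i,v_{i+1})$; as $v_i,v_{i+1}\in K\subseteq V(\conv_G(K))$ and $\conv_G(K)$ is convex, this interval — and hence $P_i$ — is contained in $\conv_G(K)$, so the whole walk $W$ uses only edges of $\conv_G(K)$. Now $\conv_G(K)$ is convex, hence isometric, hence hole-free by \Cref{lem:isometricholefree}, so the only bounded faces of the planar subdivision it induces are tiles. Because $W$ uses only edges of $\conv_G(K)$, the region it encloses is a union of whole faces of that subdivision; being bounded, it cannot contain the unbounded face of $\conv_G(K)$ and is therefore a union of tiles of $\conv_G(K)$. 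Every enclosed tile of $\hat{G}_K$ is thus a tile of $\conv_G(K)$, giving $\hat{G}_K \subseteq \conv_G(K)$.

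For isometry I would show that every pair $u,v\in V(\hat{G}_K)$ admits a shortest $(u,v)$-path of $G$ lying inside $\hat{G}_K$. Starting from an arbitrary shortest $(u,v)$-path $P$ in $G$, I would repeatedly remove its excursions outside $\hat{G}_K$: for a maximal subpath $P_{ab}$ of $P$ whose interior lies strictly outside $\hat{G}_K$, with $a,b\in V(\partial\hat{G}_K)$, I would replace it by the corresponding boundary arc $W_{ab}$. The argument closes once we establish $\len(W_{ab}) \leq \len(P_{ab})$, i.e.\ that detouring outside $\hat{G}_K$ never beats following its boundary; iterating over all excursions then produces a shortest $(u,v)$-path contained in $\hat{G}_K$, whence $\dist_{\hat{G}_K}(u,v) = \dist_G(u,v)$. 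This is exactly the content of \Cref{lemma:boundaryWalkLemmas}(ii), and I would check that its proof applies to $\hat{G}_K$ as well, since I expect it to rely only on the boundary between consecutive terminals being a shortest path (true here by construction) and on $\hat{G}_K$ being hole-free — neither of which presupposes that $\hat{G}_K$ is already isometric, so there is no circularity.

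The hard part will be the inequality $\len(W_{ab}) \leq \len(P_{ab})$, which I expect to be the main obstacle precisely because $W_{ab}$ is in general \emph{not} a single shortest path: it may turn at the hull vertices $v_i$, and the endpoints $a,b$ can be vertices that poke slightly past a supporting line of $\conv_\Hyp(K)$. I would attack it with the hyperbolic area technique already used for \Cref{lemma:shortestPath}(i): the closed curve $P_{ab}\cup W_{ab}$ bounds a patch of tiles lying outside $\conv_\Hyp(K)$, and since $\conv_\Hyp(K)$ is an intersection of half-planes, each excursion lies on the far side of a supporting line; applying the polygon-area formula together with the uniform positive lower bound on tile area then forces $P_{ab}$ to use at least as many edges as $W_{ab}$. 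Once this inequality is in hand, both claims are complete.
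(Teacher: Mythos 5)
Your containment argument is correct and is essentially the paper's own: the boundary arcs are shortest paths between terminals, hence lie in $\conv_G(K)$, and hole-freeness of the convex hull (via \Cref{lem:isometricholefree}) forces the enclosed tiles to be faces of $\conv_G(K)$. The genuine gap is in the isometry half, specifically in your claim that the proof of \Cref{lemma:boundaryWalkLemmas}(ii) does not presuppose isometric closedness. It does: part (ii) is proved via Claim~\ref{claim:extendCut}, which invokes part (i) (and \Cref{thm:geodesic extension}), and the proof of part (i) begins ``Since $\convGK$ is geodesically closed, let $P_{ub}$ be any shortest $(u,b)$-path in $\convGK$'' --- the existence of a shortest $(u,b)$-path \emph{inside} the subgraph is needed to form the closed curve that traps the geodesic extension $s$ and derive the contradiction. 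For $\hat{G}_K$ that existence is exactly the isometric closedness you are trying to establish, so your route is circular. (You are right that the other input, \Cref{lemma:boundaryBetweenTerminal}, is not an obstacle for $\hat{G}_K$: minimality is used in its proof, but the boundary arcs of $\hat{G}_K$ between consecutive hull terminals are shortest paths by construction.)

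Your fallback --- proving $\len(W_{ab})\leq\len(P_{ab})$ directly by the area technique --- is precisely the hard content and is left unestablished; moreover it is doubtful as sketched. The bookkeeping in \Cref{lemma:shortestPath}(i) compares a path against the boundary of the very tile patch it encloses, crediting $2\pi/p$ of angle per boundary edge of a \emph{single shortest path}; here $W_{ab}$ is a concatenation of shortest paths that turns at hull vertices (and $a,b$ need not be terminals), and those turns destroy the angle count, which is presumably why the paper proves (ii) by induction on the number of intermediate terminals using the geodesic-extension machinery rather than by area. The paper's actual proof of isometry for $\hat{G}_K$ avoids Section~\ref{sec:isomsubgraph} altogether and is self-contained: for $u,v\in V(\partial\hat{G}_K)$ it picks points $b_u,b_v\in\partial\conv_\Hyp(K)$ on tiles incident to $u$ and $v$; convexity of $\conv_\Hyp(K)$ keeps the segment $b_ub_v$ inside the hull, so \Cref{lemma:shortestPath}(i) yields a shortest $(u,v)$-path in $G_{b_ub_v}$ whose tiles all meet $\conv_\Hyp(K)$, and any crossings with the boundary shortest paths $Q_{st}$ are rerouted along those paths (a crossing path must cross twice, or meet $u$ or $v$) to obtain a shortest path entirely inside $\hat{G}_K$. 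To repair your proof, either adopt this segment-based argument or first establish a geodesic-extension statement for $\hat{G}_K$ that does not assume isometry.
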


\begin{corollary}
    There exists a minimal isometric closure $G_K$ that is a subgraph of $\hat{G}_K$.
\end{corollary}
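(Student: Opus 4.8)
The plan is to exploit finiteness, so that the existence of a minimal element is immediate. By the preceding lemma, $\hat{G}_K$ is itself an isometric closure of $K$: it satisfies $K\subseteq V(\hat{G}_K)$ and $\dist_{\hat{G}_K}(u,v)=\dist_G(u,v)$ for all $u,v\in V(\hat{G}_K)$. Moreover $\hat{G}_K$ is a subgraph of $\conv_G(K)$, which has only $\Oh(N)$ vertices, so $\hat{G}_K$ is a finite graph. Hence the family $\mathcal{F}$ of all subgraphs $H\subseteq\hat{G}_K$ that are isometric closures of $K$ is a finite, nonempty collection, since $\hat{G}_K\in\mathcal{F}$.

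Ordering $\mathcal{F}$ by the (edge-)subgraph relation gives a finite poset, which therefore has at least one minimal element; I would pick one and call it $G_K$. Equivalently, one can start from $\hat{G}_K$ and repeatedly delete an edge whose removal preserves membership in $\mathcal{F}$ until no such edge remains. By construction $G_K\subseteq\hat{G}_K$, and $G_K$ is an isometric closure of $K$.

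It then remains to argue that $G_K$ is in fact a \emph{minimal} isometric closure of $K$ in $G$, and not merely minimal among subgraphs of $\hat{G}_K$. This is the only step requiring any thought, and it follows from transitivity of the subgraph relation: any proper subgraph $H'\subsetneq G_K$ is automatically a subgraph of $\hat{G}_K$, since $G_K\subseteq\hat{G}_K$. Thus if such an $H'$ were an isometric closure of $K$, it would belong to $\mathcal{F}$ and contradict the minimality of $G_K$ within $\mathcal{F}$. Hence no proper subgraph of $G_K$ is an isometric closure of $K$, which is exactly the statement that $G_K$ is a minimal isometric closure of $K$.

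I do not expect a genuine obstacle here; the subtle point to state carefully is the equivalence between local minimality (inside $\hat{G}_K$) and global minimality (inside $G$), which holds precisely because every subgraph of $G_K$ is a subgraph of $\hat{G}_K$. The single external input being used is the finiteness of $\hat{G}_K$, i.e.\ the $\Oh(N)$ bound on $|V(\conv_G(K))|$, which guarantees that the deletion process terminates and that a minimal element of $\mathcal{F}$ exists.
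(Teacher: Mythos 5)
Your proof is correct and matches the paper's intended argument: the paper states this corollary without proof, as it follows immediately from $\hat{G}_K$ being a finite isometric subgraph containing $K$, exactly as you argue. Your explicit treatment of the local-versus-global minimality point via transitivity of the subgraph relation is sound and fills in the only detail the paper leaves implicit.
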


Furthermore, the following lemma shows that the sum of the lengths of the shortest paths that define $\hat{G}_K$ are linear in the lengths of the paths used to describe the terminals. 

\begin{lemma} \label{lemma:pathLength}
    The number of vertices in $\partial \hat{G}_K$ is $\Oh(N)$.
\end{lemma}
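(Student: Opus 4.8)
The plan is to charge every vertex of $\partial\hat G_K$ to the shortest paths that Algorithm~\ref{alg:boundary} produces, and then to bound the total length of these paths by $N$ through a single application of the triangle inequality anchored at the origin $o$ (the common starting vertex of all the describing walks).

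I would first fix notation. Let $t_1,\dots,t_k$ be the terminals that lie on $\partial\conv_\Hyp(K)$, listed in the cyclic order in which their defining segments appear on the boundary of the hyperbolic convex hull, and let $P_i$ be the $(t_i,t_{i+1})$-path returned by the algorithm (indices modulo $k$). By construction $\partial\hat G_K=\bigcup_{i=1}^k P_i$, so counting vertex-slots gives
\begin{equation*}
    |V(\partial\hat G_K)|\;\le\;\sum_{i=1}^k\bigl(\len(P_i)+1\bigr),
\end{equation*}
with repeated vertices only lowering the left-hand side. The key input at this stage is that each $P_i$ is a genuine shortest path: the algorithm returns the $(t_i,t_{i+1})$-path meeting every element of $S_{t_it_{i+1}}$ in order, and \Cref{lemma:extensionByIntersection} guarantees such a path exists and is shortest, whence $\len(P_i)=\dist_G(t_i,t_{i+1})$.

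Next I would discard the dependence on the hull geometry. Since each $\len(P_i)\ge1$ we have $k\le\sum_i\len(P_i)$, so it is enough to bound $\sum_i\dist_G(t_i,t_{i+1})$. The triangle inequality through $o$ gives $\dist_G(t_i,t_{i+1})\le\dist_G(t_i,o)+\dist_G(o,t_{i+1})$, and summing around the cycle counts each terminal exactly twice:
\begin{equation*}
    \sum_{i=1}^k\dist_G(t_i,t_{i+1})\;\le\;2\sum_{i=1}^k\dist_G(o,t_i)\;\le\;2\sum_{v\in K}\dist_G(o,v).
\end{equation*}
Finally, a terminal $v$ described by a walk of length $\ell_v$ from $o$ satisfies $\dist_G(o,v)\le\ell_v$, so $\sum_{v\in K}\dist_G(o,v)\le\sum_{v\in K}\ell_v=N$. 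Chaining the three displays yields $|V(\partial\hat G_K)|\le 4N=\Oh(N)$.

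I expect the only real obstacle to be justifying the first display rigorously, namely that $\partial\hat G_K$ is exactly the union of the computed paths $P_i$ and introduces no additional boundary vertices, together with the identity $\len(P_i)=\dist_G(t_i,t_{i+1})$ coming from \Cref{lemma:extensionByIntersection}; everything after that is the routine charging/triangle-inequality estimate above. The minor degeneracies worth a sentence are the case $k\le 2$ and the possibility that some terminal equals $o$, both of which are handled trivially.
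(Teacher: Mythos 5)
Your proposal is correct and follows essentially the same route as the paper: decompose $\partial\hat G_K$ into shortest paths between consecutive terminals and bound each by the triangle inequality through the origin, so that each terminal's description length is counted twice, giving $\Oh(N)$ total. Your version merely makes explicit the details the paper leaves implicit (that the $P_i$ are shortest by \Cref{lemma:extensionByIntersection}, the charging of the $+1$ vertex-slots, and the degenerate cases), so no substantive difference.
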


\begin{proof}
    $\partial \hat{G}_K$ comprises shortest paths between terminals and the description of each terminal is a path from the origin to that terminal. By the triangle inequality, the shortest path between neighboring terminals is no longer than the sum of the lengths of their descriptions. Hence, the number of vertices in $\partial \hat{G}_K$ must be linear in $N$.
\end{proof}
Since the time complexity of Algorithm \ref{alg:boundary} is bounded by the number of edges in $\gamma$, Lemma \ref{lemma:pathLength} implies that Algorithm \ref{alg:boundary} runs in time $\Oh(N)$.

\begin{algorithm}
\caption{\textsc{Compute $\partial\hat{G}_K$}}
 \label{alg:boundary}
\begin{algorithmic}
    \Require Sequence of terminals $\langle v_0,\dots,v_{k-1},v_k=v_0\rangle$
    \For{$i\in\{0,k-1\}$}
        \State $\langle s_0, \dots, s_m\rangle \gets S_{v_i,v_{i+1}}$ \Comment{Identify the vertices/edges intersected by $v_iv_{i+1}$}
        \State $\gamma_{1},\gamma_{2} \gets \{\}$ \Comment{Initialize paths from $s_0$ to $s_j$}
        \For{$j\in\{1,m\}$}
            \If {$s_j$ is a vertex}
                \State $\gamma_{1},\gamma_{2} \gets$ shortest extension of $\gamma_1$ or $\gamma_2$ to $s_j$
            \Else 
                \State $s_j$ is an edge $(y_1,y_2)$
                \State $\gamma_{1}\gets $ shortest extension of $\gamma_1$ or $\gamma_2$ to $y_1$
                \State $\gamma_{2}\gets $ shortest extension of $\gamma_1$ or $\gamma_2$ to $y_2$
            \EndIf
        \EndFor
        \State $\gamma^{(i)}\gets \gamma_1$ \Comment{$\gamma^{(i)}$ is a ($v_i,v_{i+1}$)-shortest path}
    \EndFor
    \State $\gamma \gets \bigcup_{i=0}^{k-1}\gamma^{(i)}$
    \Comment{Combine shortest paths to form a boundary walk}
    \State\Return $\gamma$
\end{algorithmic}
\end{algorithm}

\begin{lemma}\label{lemma:tileArea}
    The tiles of $G_{p,q}$ have area $\Theta(p)$.
\end{lemma}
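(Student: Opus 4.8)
The plan is to start from the explicit area formula already recorded in the preliminaries and then squeeze it between two constant multiples of $p$. Recall that each interior angle of a tile equals $2\pi/q$, since $q$ tiles meet at every vertex and the angles around a vertex sum to $2\pi$; hence the interior angles of a $p$-gon tile sum to $\phi = 2\pi p/q$. Substituting into the polygon area formula $\pi(m-2)-\phi$ with $m=p$ gives
\[
  \area(t) = \pi(p-2) - \frac{2\pi p}{q} = \pi p\Bigl(1 - \frac{2}{q}\Bigr) - 2\pi .
\]
The upper bound is then immediate: since $2\pi p/q > 0$ we have $\area(t) < \pi(p-2) < \pi p$, so $\area(t) = \Oh(p)$.

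For the lower bound the key move is to divide by $p$ and fold the two negative contributions into the single quantity $\tfrac1p+\tfrac1q$ that the Schl\"afli constraint controls:
\[
  \frac{\area(t)}{p} = \pi\Bigl(1 - \frac{2}{q} - \frac{2}{p}\Bigr) = \pi\Bigl(1 - 2\bigl(\tfrac1p + \tfrac1q\bigr)\Bigr).
\]
A naive bound such as $1-\tfrac2q \ge \tfrac13$ is \emph{not} enough on its own, because for the smallest tilings the constant $-2\pi$ term is of the same order as the rest of the area; this is exactly where the difficulty lies. Instead I would invoke integrality of $p,q$: among all integer pairs with $\tfrac1p+\tfrac1q<\tfrac12$, the sum $\tfrac1p+\tfrac1q$ is maximised by $\{3,7\}$ (equivalently $\{7,3\}$), giving $\tfrac13+\tfrac17=\tfrac{10}{21}$.

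To justify this extremal claim, I would argue that approaching $\tfrac12$ forces one factor to be as small as possible: if $p,q\ge4$ then $\tfrac1p+\tfrac1q\le\tfrac14+\tfrac15=\tfrac{9}{20}<\tfrac{10}{21}$, so some factor must equal $3$, and then $\tfrac13+\tfrac1q<\tfrac12$ requires $\tfrac1q<\tfrac16$, i.e.\ $q\ge7$, which is maximised at $q=7$. Consequently $1 - 2\bigl(\tfrac1p+\tfrac1q\bigr) \ge 1 - \tfrac{20}{21} = \tfrac1{21}$, whence $\area(t) \ge \tfrac{\pi}{21}\,p = \Omega(p)$. Combined with the upper bound this yields $\area(t)=\Theta(p)$. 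The only non-routine step is verifying that $\{3,7\}$ is the extremal admissible tiling, so that the gap $1 - 2\bigl(\tfrac1p+\tfrac1q\bigr)$ stays bounded away from $0$ uniformly in $p$ and $q$; everything else is direct substitution.
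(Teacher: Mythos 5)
Your proof is correct and takes essentially the same route as the paper: both start from the polygon area formula, rewritten as $\area(t)=2\pi p\bigl(\tfrac12-\tfrac1p-\tfrac1q\bigr)$, and use the integrality of $p,q$ to bound the bracket away from zero, arriving at the identical constant $\tfrac{\pi}{21}$. The only cosmetic difference is that you identify the extremal pair $\{3,7\}$ directly, while the paper splits into the cases $\min(p,q)=3$, $\min(p,q)=4$, and $p,q\ge5$.
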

\begin{proof}
    Using the formula for the area of a hyperbolic polygon, the area of a tile is $(p-2)\pi - p(2\pi/q) = 2\pi p (\frac{1}{2} - \frac{1}{p} - \frac{1}{q})$.
    If $p = 3$ or $q = 3$, then the other is at least $7$ and thus $\frac{1}{2} - \frac{1}{p} - \frac{1}{q} \geq \frac{1}{42}$.
    If $p = 4$ or $q = 4$, then the other is at least $5$ and thus $\frac{1}{2} - \frac{1}{p} - \frac{1}{q} \geq \frac{1}{20}$.
    If both $p \geq 5$ and $q \geq 5$, then $\frac{1}{2} - \frac{1}{p} - \frac{1}{q} \geq \frac{1}{10}$.
    Thus, tiles always have area at least $\frac{\pi}{21}p$.
    The upper bound directly follows from the tiles being hyperbolic $p$-gons.
\end{proof}

\begin{lemma}\label{lemma:convexHullSize}
    The convex hull $\conv_G(K)$ has $\Oh(N)$ vertices.
\end{lemma}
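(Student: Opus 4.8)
The plan is to bound the vertex count of $\conv_G(K)$ using three ingredients: the combinatorial size of its boundary, the hyperbolic area it encloses, and Euler's formula. First I would record that $\conv_G(K)$ is convex and hence isometric, so by \Cref{lem:isometricholefree} it is hole-free: the bounded region $F$ it encloses is exactly a union of, say, $T$ tiles, and every bounded face is a single tile. Since $\conv_G(K)$ contains a shortest path between every pair of terminals, it is also connected, so Euler's formula applies to the planar complex: with $T$ bounded (tile) faces, $E_b$ boundary edges, and the tile/edge incidence $2E = pT + E_b$, one gets the exact identity $V = 1 + \tfrac{p-2}{2}T + \tfrac{1}{2}E_b$. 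Thus it suffices to show $(p-2)T = \Oh(N)$ and $E_b = \Oh(N)$.

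The boundary bound is the easy half. By the same argument as in \Cref{lemma:boundaryBetweenTerminal} (which uses only that the closure is isometric), the boundary walk of $\conv_G(K)$ between two consecutive terminals is a shortest path. As in \Cref{lemma:pathLength}, the triangle inequality through the origin bounds the length of each such arc by the sum of the description lengths of its two endpoints, and since each terminal is an endpoint of at most two arcs, summing cyclically around $\partial\conv_G(K)$ gives $E_b \le 2N = \Oh(N)$.

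The heart of the argument is turning this into a bound on $T$. Here I would invoke the linear isoperimetric behaviour of $\Hyp^2$ in its cleanest, angle-based form: $F$ is a simply connected hyperbolic polygon with $E_b$ geodesic sides, so Gauss--Bonnet gives that its interior angles sum to $(E_b-2)\pi - \area(F)$; as all interior angles are nonnegative, $\area(F) \le (E_b-2)\pi = \Oh(E_b) = \Oh(N)$. I would then apply \Cref{lemma:tileArea}, which gives that each tile has area $\Omega(p)$, so that $T = \area(F)/\Omega(p) = \Oh(E_b/p) = \Oh(N/p)$. Consequently $(p-2)T = \Oh\!\big(\tfrac{p-2}{p}N\big) = \Oh(N)$, and plugging both bounds into Euler's identity yields $V = \Oh(N)$.

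The main obstacle is precisely this last conversion from area to vertex count with constants \emph{independent} of $p$ and $q$. A naive isoperimetric inequality phrased in hyperbolic \emph{length} would carry an edge-length factor $\ell_e = \Theta(\log q)$ and only give $\Oh(N\log q)$, while the term $\tfrac{p-2}{2}T$ in Euler's formula carries a factor $p$; the point of using the angle-based Gauss--Bonnet bound together with $\area(\text{tile}) = \Theta(p)$ is exactly that these $p$- and $q$-dependencies cancel. A secondary technicality is that $\partial\conv_G(K)$ must be treated as a simple polygon to apply Gauss--Bonnet directly; if the boundary walk revisits a vertex one uses the summed (combinatorial) form of Gauss--Bonnet or a triangulation, and the area estimate $\area(F) = \Oh(E_b)$ is unaffected.
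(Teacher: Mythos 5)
Your proof follows essentially the same route as the paper's: bound the boundary length of $\conv_G(K)$ by $\Oh(N)$ via the triangle-inequality argument of \Cref{lemma:pathLength}, bound the enclosed area by $\Oh(N)$ via the hyperbolic polygon area formula, divide by the $\Theta(p)$ tile area of \Cref{lemma:tileArea} to get $\Oh(N/p)$ tiles, and convert the tile count into a vertex count. The only cosmetic difference is the final bookkeeping, which you do via Euler's formula (justifying hole-freeness through \Cref{lem:isometricholefree}) while the paper simply observes that each enclosed tile is incident to $p$ vertices; both yield $\Oh(N)$.
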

\begin{proof}
    Following the same proof as in Lemma~\ref{lemma:pathLength}, the boundary $\partial\conv_G(K)$ has length $\Oh(N)$.
    Thus, $\partial\conv_G(K)$ forms a hyperbolic polygon with $\Oh(N)$ vertices and must have area $\Oh(N)$.
    Lemma~\ref{lemma:tileArea} now implies that $\partial\conv_G(K)$ encloses $\Oh(N / p)$ tiles.
    Each of these is incident to $p$ vertices, giving a bound $\Oh(N)$ on the total number of vertices in $\conv_G(K)$.
\end{proof}

%\skb{If we have time, generalize this algo to the (unique) convex hull.}

\begin{restatable}{lemma}{lemmaHatGKTime}\label{lemma:convexHullTime}
    $\hat{G}_K$ can be computed in $\Oh(N \log N)$ time.
\end{restatable}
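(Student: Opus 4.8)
The plan is to split the computation into three phases: locate the terminals as points of $\Hyp^2$, sort them along the boundary of their hyperbolic convex hull, and then invoke Algorithm~\ref{alg:boundary} followed by an interior fill. Every phase except the sort is already linear in $N$, so the $\Oh(N\log N)$ bound will come entirely from a single call to a classical planar convex-hull routine, used as a black box.

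First I would turn each terminal's describing walk into a point of $\Hyp^2$. Reading a walk of length $t$ amounts to composing $t$ generator isometries starting from the origin; on the real RAM each step costs $\Oh(1)$ via the coordinate formulas, so all terminals are located in $\Oh(N)$ total time, and I would deduplicate coinciding terminals so that afterwards $n\le N+1$ and hence $\Oh(n\log n)=\Oh(N\log N)$. The key step is then computing the cyclic order of the terminals around $\conv_\Hyp(K)$, together with the segments $v_iv_{i+1}$ that Algorithm~\ref{alg:boundary} consumes. Here I would exploit the Beltrami--Klein model: since hyperbolic segments are ordinary Euclidean segments in that model, $\conv_\Hyp(K)$ coincides with the Euclidean convex hull of the terminals' Klein coordinates. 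I can therefore feed these coordinates into any classical convex-hull algorithm and obtain the hull vertices in cyclic order in $\Oh(n\log n)=\Oh(N\log N)$ time; interior terminals are ignored for boundary purposes and reabsorbed later.

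With the terminals sorted, Algorithm~\ref{alg:boundary} produces $\partial\hat G_K$ in $\Oh(N)$ time, as argued after Lemma~\ref{lemma:pathLength}: each $\gamma^{(i)}$ is a shortest $(v_i,v_{i+1})$-path, and traversing the shared tiles to extract the vertices of each $s_j\in S_{v_iv_{i+1}}$ costs at most twice the length of that path, so the total work is bounded by the $\Oh(N)$ length of the boundary. To obtain the whole graph $\hat G_K$ rather than just its boundary, I would use that isometric closures are hole-free (Lemma~\ref{lem:isometricholefree}): every bounded face is a tile, so $\hat G_K$ is exactly the union of the tiles enclosed by $\partial\hat G_K$. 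A flood fill that starts from the boundary tiles and expands to adjacent tiles without crossing $\partial\hat G_K$ enumerates all of them; since $\hat G_K\subseteq\conv_G(K)$ has $\Oh(N)$ vertices and tiles by Lemma~\ref{lemma:convexHullSize}, this costs $\Oh(N)$. Summing the phases gives $\Oh(N)+\Oh(N\log N)+\Oh(N)=\Oh(N\log N)$.

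I expect the main obstacle to lie not in the asymptotics but in justifying the convex-hull step and its degeneracies. I would need to verify carefully that the Klein-model reduction is valid, i.e.\ that the hyperbolic and Euclidean hulls genuinely agree and that the cyclic vertex order transfers to the correct order for the boundary walk, and to handle degenerate configurations such as collinear or coincident terminals so that consecutive hull vertices always determine a well-formed segment $v_iv_{i+1}$ on which Algorithm~\ref{alg:boundary} can be run. Everything else is a routine accounting of subroutines that are already linear in $N$.
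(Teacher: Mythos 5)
Your plan coincides with the paper's proof in all but one place: the paper likewise converts each walk to Beltrami--Klein coordinates in $\Oh(N)$ time, runs a classical planar convex-hull algorithm (Graham's scan) on those coordinates in $\Oh(n\log n)$ time using exactly the Klein-model observation you make, and then invokes Algorithm~\ref{alg:boundary} to get $\partial\hat{G}_K$ in $\Oh(N)$ time. Your concern about validating the Klein reduction and degeneracies is reasonable but is handled in the paper at the same level of detail as in your sketch.

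The genuine gap is in your last phase. You claim the interior flood fill costs $\Oh(N)$, but the tiling $G_{p,q}$ is infinite and only implicitly represented, so there is no pre-allocated vertex set in which to set a ``visited'' mark in $\Oh(1)$ time: as the fill generates neighbouring tiles and vertices on the fly, the same vertex is reached repeatedly from different tiles (a vertex can even appear twice in the computed boundary sequence), and you give no mechanism for recognizing these repetitions. The paper's proof supplies precisely this missing idea: it maintains an associative array mapping coordinates to the vertices of $\hat{G}_K$ built so far, so each of the $\Oh(N)$ edge explorations of the depth-first search pays an $\Oh(\log N)$ lookup, making the fill cost $\Oh(N\log N)$ rather than $\Oh(N)$ --- on a real RAM there is no constant-time hashing of real coordinates, so this logarithmic overhead is the natural price. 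The slip does not endanger the statement, since $\Oh(N) + \Oh(n\log n) + \Oh(N\log N) = \Oh(N\log N)$ absorbs the corrected cost, but as written your accounting of the fill is unjustified, and the duplicate-detection mechanism is the one nontrivial implementation idea this lemma actually requires.
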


\begin{proof}
    For each terminal, we can compute its coordinates in the Beltrami-Klein model in time linear in the number of steps used to describe its location. Hence, we obtain the coordinates of all terminals in $\Oh(N)$ time. Since in the Beltrami-Klein model, hyperbolic lines appear as Euclidean lines, standard convex hull techniques such as Graham's scan \cite{Graham1972AnEA} can then be used to find $\conv_\Hyp(K)$ in $\Oh(n \log n )$ time.
    From there, we can use Algorithm~\ref{alg:boundary} to calculate the sequence of vertices and edges of $\partial\hat{G}_K$ in $\Oh(N)$ time.

    Finally, we need to explicitly construct the graph $\hat{G}_K$ (note that the tiling graph $G_{p,q}$ is infinite and was therefore not stored explicitly).
    For this, we will maintain an associative array $A$ that maps coordinates to the corresponding vertex in $\hat{G}_K$.
    We first add the computed vertices and edges of $\partial\hat{G}_K$ to $\hat{G}_K$.
    Here already, it can be that a vertex appears twice in the sequence, so we recognize and handle this using $A$.
    Next, we fill in the interior with a depth-first search starting from $V(\partial\hat{G}_K)$.
    Since $G_{p,q}$ is not stored explicitly, we cannot mark vertices as one usually would but instead use $A$ to check in $\Oh(\log N)$ time if a vertex has already been explored.
    If the vertex was unexplored, we add it to $\hat{G}_K$ with an edge to the previous vertex, otherwise we only add the edge.
    Depth-first search checks each edge of $\hat{G}_K$ at most twice.
    Since $\hat{G}_K$ is a planar graph with $\Oh(N)$ vertices, it also has $\Oh(N)$ edges and thus this takes $\Oh(N \log N)$ time.
\end{proof}

\section{Bounding the Treewidth of the Convex Hull} \label{sec:bounding}

In this section, we show that both $\hat{G}_K$ and $\conv_G(K)$ have small treewidth.
% and bound their overall number of vertices.
This enables the use of Steiner tree and subset TSP algorithms parameterised by treewidth to solve those problems on $\hat{G}_K$ in polynomial time.
We start with a simple geometric result.

\begin{lemma} \label{lemma:fewInnerFaces} The number of faces of tiling $\{p,q\}$ fully contained in $\conv_\Hyp(K)$ is $\Oh(n/p)$. 
\end{lemma}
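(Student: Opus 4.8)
The plan is to bound the \emph{hyperbolic area} of $\conv_\Hyp(K)$ and then divide by the area of a single tile. The crucial observation is that, although the convex hull may be geometrically enormous (its boundary length is $\Oh(N)$, not $\Oh(n)$), its area is controlled purely by its number of vertices---a phenomenon special to hyperbolic geometry.

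First I would note that $\conv_\Hyp(K)$ is a convex hyperbolic polygon whose vertices form a subset of the $n$ terminals $K$, so it has at most $n$ vertices. Applying the area formula for hyperbolic polygons recalled in \Cref{sec:preliminaries}, a polygon with $m \le n$ vertices and internal angle sum $\phi$ has area $\pi(m-2) - \phi$. Since all internal angles are nonnegative, this yields
\begin{equation*}
    \area(\conv_\Hyp(K)) \le (n-2)\pi = \Oh(n),
\end{equation*}
a bound that is \emph{independent} of how far apart the terminals lie.

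Next I would invoke \Cref{lemma:tileArea}, which establishes that every tile has area at least $\frac{\pi}{21}p = \Theta(p)$. Every tile that is fully contained in $\conv_\Hyp(K)$ contributes its entire area, and distinct tiles have disjoint interiors, so their areas sum to at most $\area(\conv_\Hyp(K))$. Consequently, if $f$ denotes the number of fully contained tiles, then $f \cdot \frac{\pi}{21} p \le (n-2)\pi$, giving $f \le 21(n-2)/p = \Oh(n/p)$.

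The main conceptual step---and the only place where hyperbolicity is genuinely used---is the area bound $\Oh(n)$ on the convex hull. The naive route, estimating the area through the boundary length, would only recover $\Oh(N/p)$ tiles as in \Cref{lemma:convexHullSize}; obtaining the sharper $\Oh(n/p)$ relies precisely on the fact that a hyperbolic convex polygon's area is capped by $(n-2)\pi$ regardless of its diameter. No further obstacle remains once this is in place, since the tile-area lower bound from \Cref{lemma:tileArea} is already available.
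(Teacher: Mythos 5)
Your proof is correct and follows essentially the same route as the paper's: both bound the area of $\conv_\Hyp(K)$ by $(n-2)\pi$ using the hyperbolic polygon area formula (the hull having at most $n$ vertices), and then divide by the tile-area lower bound $\frac{\pi}{21}p$ from Lemma~\ref{lemma:tileArea} to obtain at most $21(n-2)/p = \Oh(n/p)$ fully contained tiles. Your additional remarks correctly identify why the bound is $\Oh(n/p)$ rather than $\Oh(N/p)$, but the core argument is identical to the paper's.
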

\begin{proof}
    The total area of faces fully contained in $\conv_\Hyp(K)$ is upper-bounded by the area of $\conv_\Hyp(K)$, which is upper-bounded by $(n-2)\pi$ as it is a polygon with at most $n$ sides.
    Using Lemma~\ref{lemma:tileArea}, the number of inner faces is now at most $21(n-2)/p$.
\end{proof}

To bound the treewidth of $\conv_G(K)$ (and thereby $\hat{G}_K$), we first combine Lemma~\ref{lemma:fewInnerFaces} with the following result of Kisfaludi-Bak \cite{Kisfaludi-Bak20-Intersection} to get a bound on outerplanarity, which implies a treewidth bound.
Proving this bound both from the primal and the dual tiling yields Theorem~\ref{thm:treewidth}.

\begin{lemma}[Lemma 3.2 of Kisfaludi-Bak \cite{Kisfaludi-Bak20-Intersection}]
\label{lemma:tilingOuterplanar}
    Let $\mathcal T$ be a compact regular tiling of $\Hyp^2$.
    Then the neighborhood graph of any finite tile set $S \subset \mathcal T$ with $|S| \geq 2$ is $(c\log|S|)$-outerplanar, where $c$ is an absolute constant independent of our choice of $\mathcal T$.
\end{lemma}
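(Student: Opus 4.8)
The plan is to exhibit a planar embedding of the neighborhood graph in which the $k$-outerplanar layer structure coincides with peeling off successive ``rings'' of boundary tiles, and then to show via a Gauss--Bonnet area estimate that each peel removes a constant fraction of the remaining tiles. First I would fix the embedding: place each tile of $S$ at its center and route each adjacency edge across the shared feature (edge or vertex) of the two tiles. This embedding is planar, and a vertex of the neighborhood graph lies on the unbounded face precisely when the corresponding tile is incident to the unbounded face of the region $R=\bigcup S$. Deleting all outer-face vertices therefore corresponds exactly to removing all boundary tiles of $R$ and passing to the region spanned by what remains. Consequently the outerplanarity of this embedding equals the number of rounds of boundary-tile removal needed to exhaust $S$, and it suffices to bound this number by $c\log|S|$.

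The core of the argument is the claim that one round of boundary-tile removal eliminates at least a constant fraction of the tiles, with the constant independent of $\{p,q\}$. Write $F=|S|$ and let $V_\partial$ be the number of tiling vertices on $\partial R$. Treating $\partial R$ as a simple hyperbolic polygon whose interior angle at a boundary vertex $j$ equals $k_j\cdot(2\pi/q)$, where $k_j\ge 1$ counts the tiles of $R$ meeting at $j$, the hyperbolic polygon area formula gives
\[
\area(R)=(V_\partial-2)\pi-\frac{2\pi}{q}\sum_j k_j\ \le\ \pi V_\partial .
\]
On the other hand $\area(R)=F\cdot a$, where by \Cref{lemma:tileArea} the tile area is $a=\pi p\,(1-2/p-2/q)$. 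Combining the two bounds yields $V_\partial\ge F\,(a/\pi)=F\,p\,(1-2/p-2/q)$. Finally, each boundary vertex is incident to at least one boundary tile while each boundary tile has at most $p$ vertices, so a double count gives
\[
\#\{\text{boundary tiles}\}\ \ge\ \frac{V_\partial}{p}\ \ge\ F\,(1-2/p-2/q).
\]
Since $\tfrac1p+\tfrac1q<\tfrac12$ forces $1-2/p-2/q>0$, and the extremal tiling $\{3,7\}$ gives $1-\tfrac23-\tfrac27=\tfrac1{21}$, at least a $\tfrac1{21}$ fraction of the tiles is removed in every round, uniformly over all hyperbolic tilings.

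It then follows by a geometric-series argument that the region is exhausted after at most $\log_{21/20}|S|=c\log|S|$ rounds, with $c=1/\log(21/20)$ an absolute constant, which is exactly the claimed outerplanarity bound. I expect the main obstacle to be the geometric bookkeeping rather than the area estimate: I must make the correspondence between outer-face vertices and boundary tiles precise for the exact notion of ``neighborhood graph'' in use (edge- versus vertex-adjacency, and the embedding it induces), and I must handle regions $R$ that are not simply connected, where $\partial R$ splits into several polygonal components. In that case the area formula has to be applied componentwise with the correct orientation, but since inner boundary components only add to the total boundary they can only help the count; making this rigorous, together with the mild degeneracies when a tile touches the outer face only at a single vertex, is where the careful work lies.
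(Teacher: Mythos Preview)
The paper does not give a proof of this lemma at all: it is quoted verbatim from~\cite{Kisfaludi-Bak20-Intersection} (as ``Lemma~3.2'') and used as a black box in the proof of \Cref{thm:treewidth}. There is therefore no in-paper proof to compare your proposal against.

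That said, your argument is essentially the standard one and is correct. The Gauss--Bonnet inequality $\area(R)\le (V_\partial-2)\pi$ combined with $\area(R)=F\cdot \pi p(1-2/p-2/q)$ and the double count ``each boundary vertex lies on some boundary tile, each boundary tile has at most $p$ boundary vertices'' indeed yields that at least an $(1-2/p-2/q)\ge 1/21$ fraction of tiles is peeled off in each outer layer, giving the $\log_{21/20}|S|$ bound. The caveats you flag are the right ones and resolve as you suggest: for non-simply-connected $R$, passing to the simply-connected hull only increases the number of enclosed tiles while keeping the same outer boundary, so the lower bound on $V_\partial$ in terms of $F$ persists; disconnected components are handled by applying the polygon area formula componentwise (the $-2\pi$ per component only helps). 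The one place to be slightly careful is the claim that the outer face of your embedded neighborhood graph is reached exactly by those tiles meeting the unbounded component of $\Hyp^2\setminus R$; this is true for the center-to-center embedding you describe, but you should state precisely whether ``neighborhood'' means edge- or vertex-adjacency, since the argument works for both but the embedding (and the face structure) differs.
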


\thmtw*

\begin{proof}
    Consider the tiles $S$ fully contained in $\conv_\Hyp(K)$; by Lemma~\ref{lemma:fewInnerFaces} $|S| = \Oh(n/p)$.
    Lemma~\ref{lemma:tilingOuterplanar} now implies that the neighborhood graph $N_S$ of $S$ is $\Oh(\log\frac{n}{p})$-outerplanar (unless $|S| \leq 2$; then $N_S$ is $1$-outerplanar).
    Graph $N_S$ is a subgraph of the dual graph $\hat{G}_K^*$ of $\hat{G}_K$.
    In particular, every face of $\hat{G}_K$ that is not in $S$ must have a vertex on the outer face of $\hat{G}_K$.
    Further, by Lemma \ref{lemma:shortestPath} (ii), any face of the convex hull $\conv_G(K)$ that was not in $\hat{G}_K$ must be adjacent to the outer face of $\conv_G(K)$.
    Thus, three iterations of removing all vertices on the outer face of $\conv_G(K)^*$ will give a subgraph of $N_S$.
    Therefore, $\conv_G(K)^*$ has outerplanarity at most $3$ more than $N_S$ and is $\Oh(\log\frac{n}{p})$-outerplanar (or $4$-outerplanar).
    Bodlaender \cite[Theorem 83]{BODLAENDER19981} proves that the treewidth of a $k$-outerplanar graph is at most $3k-1$, so $\conv_G(K)^*$ must have treewidth $\Oh(\log\frac{n}{p})$ (or $11$).
    Finally, Bouchitté, Mazoit and Todinca \cite{BouchitteMT01} prove that the treewidth of a plane graph and its dual differ from each other by at most one, which implies that $\conv_G(K)$ itself has treewidth $\Oh(\log\frac{n}{p})$ (or $12$).

    Now, consider the vertices $V_\text{inner}$ for which the corresponding tile in the dual tiling $G_{q,p}$ lies inside $\conv_\Hyp(K)$.
    By Lemma~\ref{lemma:fewInnerFaces}, $|V_\text{inner}| = \Oh(n / q)$.
    Additionally, induced subgraph $\hat{G}_K[V_\text{inner}]$ is the neighborhood graph of a set of $|V_\text{inner}|$ tiles of the dual tiling, and therefore by Lemma~\ref{lemma:tilingOuterplanar} it is $\Oh(\log\frac{n}{q})$-outerplanar (unless $|V_\text{inner}| \leq 2$; then $\hat{G}_K[V_\text{inner}]$ is $1$-outerplanar).
    Any vertex of $\hat{G}_K$ not in $V_\text{inner}$ is incident to some face of $\hat{G}_K$ not entirely in $\conv_\Hyp(K)$, which means that that face has a vertex on $\partial\hat{G}_K$.
    Thus, after removing the vertices in $\partial\hat{G}_K$, all remaining outer vertices will be on the outer face, so $\hat{G}_K$ has outerplanarity at most $2$ more than $\hat{G}_K[V_\text{inner}]$.
    Next, consider the convex hull $\conv_G(K)$ instead of $\hat{G}_K$.
    Lemma~\ref{lemma:shortestPath}(ii) implies that $\conv_G(K) \setminus \hat{G}_K \subseteq \partial \conv_G(K)$, and thus that $\conv_G(K)$ has outerplanarity at most $1$ more than $\hat{G}_K$.
    Therefore, $\conv_G(K)$ is also $\Oh(\log\frac{n}{q})$-outerplanar (or $4$-outerplanar).
    Here, Bodlaender \cite[Theorem 83]{BODLAENDER19981} implies that $\conv_G(K)$ has treewidth $\Oh(\log\frac{n}{q})$ (or $11$).
    Finally, $\conv_G(K)$ has treewidth at most $\min\{ \Oh(\log\frac{n}{p}),\Oh(\log\frac{n}{q}) \} = \Oh(\log\frac{n}{p + q})$ (or $12$).
%
    % Bodlaender \cite[Theorem 83]{BODLAENDER19981} has shown that the treewidth of a $k$-outerplanar graph is at most $3k-1$ and Kisfaludi-Bak \cite[Lemma 3.2]{Kisfaludi-Bak20-Intersection} has shown that the outerplanarity of a finite tile set $S$ of any tiling of the hyperbolic plane is at most $c\log(|S|)$, where $c$ is independent of $p$ and $q$. Worth noting is that in the case $|S| \leq 1$ we still have outerplanarity (and treewidth) one. The graph of tiles fully contained in $\conv_\Hyp(K)$ has an outerplanarity at most a constant less than $\convGK$. Hence, Lemma \ref{lemma:fewInnerFaces} implies that the treewidth of $\convGK$ is $\max\{1, \Oh(\log \frac{n}{q})\}$.
\end{proof}

Finally, we can combine all our results to get algorithms for Steiner tree and subset TSP.

\algomain*

\begin{proof}
    Using the techniques from Bodlaender, Cygan, Kratsch and Nederlof \cite{BodlaenderCKN15}, given a set of terminals on a graph with $|V|$ vertices and treewidth \textit{tw}, both Steiner tree and subset TSP \cite[Appendix D]{le2019ptassubsettspminorfree} can be solved on it in $2^{\Oh(\textit{tw})}\cdot|V|$ time.
    According to Lemma~\ref{lemma:convGK} it suffices to consider $\hat{G}_K$, which can be computed in $\Oh(N \log N)$ time (Lemma~\ref{lemma:convexHullTime}), has $\Oh(N)$ vertices (Lemma~\ref{lemma:convexHullSize}) and treewidth $\max\{12, \Oh(\log\frac{n}{p + q})\}$ (Theorem~\ref{thm:treewidth}).
    Hence, Steiner tree and subset TSP on $G_{p,q}$ can be solved in $\Oh(N \log N + \poly(\frac{n}{p + q})\cdot N)$ time.
\end{proof}

\bibliographystyle{plainurl} % We choose the "plain" reference style
\bibliography{refs} % Entries are in the refs.bib file

\begin{thebibliography}{10}

\bibitem{trig_formulas}
S.~Axler, F.~W. Gehring, and K.~A. Ribet.
\newblock {\em Foundations of geometry and the non-{E}uclidean plane}.
\newblock Springer New York, 1982.

\bibitem{benedetti1992lectures}
Riccardo Benedetti and Carlo Petronio.
\newblock {\em Lectures on Hyperbolic Geometry}.
\newblock Springer Science \& Business Media, 1992.
\newblock \href {https://doi.org/10.1007/978-3-642-58158-8} {\path{doi:10.1007/978-3-642-58158-8}}.

\bibitem{BlasiusFK16}
Thomas Bl{\"{a}}sius, Tobias Friedrich, and Anton Krohmer.
\newblock Hyperbolic random graphs: Separators and treewidth.
\newblock In {\em 24th Annual European Symposium on Algorithms, {ESA} 2016, August 22-24, 2016, Aarhus, Denmark}, volume~57 of {\em LIPIcs}, pages 15:1--15:16. Schloss Dagstuhl - Leibniz-Zentrum f{\"{u}}r Informatik, 2016.
\newblock \href {https://doi.org/10.4230/LIPICS.ESA.2016.15} {\path{doi:10.4230/LIPICS.ESA.2016.15}}.

\bibitem{structureindependence}
Thomas Bl{\"a}sius, Jean-Pierre von~der Heydt, S{\'a}ndor Kisfaludi-Bak, Marcus Wilhelm, and Geert van Wordragen.
\newblock Structure and independence in hyperbolic uniform disk graphs, 2025.
\newblock To appear in proceedings of SoCG 2025.
\newblock \href {https://arxiv.org/abs/2407.09362} {\path{arXiv:2407.09362}}.

\bibitem{BODLAENDER19981}
Hans~L. Bodlaender.
\newblock A partial \emph{k}-arboretum of graphs with bounded treewidth.
\newblock {\em Theor. Comput. Sci.}, 209(1-2):1--45, 1998.
\newblock \href {https://doi.org/10.1016/S0304-3975(97)00228-4} {\path{doi:10.1016/S0304-3975(97)00228-4}}.

\bibitem{BodlaenderCKN15}
Hans~L. Bodlaender, Marek Cygan, Stefan Kratsch, and Jesper Nederlof.
\newblock Deterministic single exponential time algorithms for connectivity problems parameterized by treewidth.
\newblock {\em Inf. Comput.}, 243:86--111, 2015.
\newblock \href {https://doi.org/10.1016/J.IC.2014.12.008} {\path{doi:10.1016/J.IC.2014.12.008}}.

\bibitem{BouchitteMT01}
Vincent Bouchitt{\'{e}}, Fr{\'{e}}d{\'{e}}ric Mazoit, and Ioan Todinca.
\newblock Treewidth of planar graphs: connections with duality.
\newblock {\em Electron. Notes Discret. Math.}, 10:34--38, 2001.
\newblock \href {https://doi.org/10.1016/S1571-0653(04)00353-1} {\path{doi:10.1016/S1571-0653(04)00353-1}}.

\bibitem{bridson2013metric}
Martin~R. Bridson and Andr{\'e} Haefliger.
\newblock {\em Metric spaces of non-positive curvature}.
\newblock Springer Berlin, Heidelberg, 1999.
\newblock \href {https://doi.org/10.1007/978-3-662-12494-9} {\path{doi:10.1007/978-3-662-12494-9}}.

\bibitem{Cabello25}
Sergio Cabello.
\newblock Testing whether a subgraph is convex or isometric.
\newblock {\em CoRR}, abs/2502.16193, 2025.
\newblock \href {https://arxiv.org/abs/2502.16193} {\path{arXiv:2502.16193}}.

\bibitem{Cannon1984TheCS}
James~W. Cannon.
\newblock The combinatorial structure of cocompact discrete hyperbolic groups.
\newblock {\em Geometriae Dedicata}, 16:123--148, 1984.
\newblock \href {https://doi.org/10.1007/BF00146825} {\path{doi:10.1007/BF00146825}}.

\bibitem{cannonhyperbolic}
James~W. Cannon, William~J. Floyd, Richard Kenyon, and Walter~R. Parry.
\newblock Hyperbolic geometry.
\newblock In James~W. Cannon, editor, {\em Non-{E}uclidean geometry and curvature}, chapter~2. American Mathematical Society, Providence, RI, 2017.
\newblock Two-dimensional spaces. Vol. 3.
\newblock \href {https://doi.org/10.1090/mbk/110} {\path{doi:10.1090/mbk/110}}.

\bibitem{ChepoiDEHV08}
Victor Chepoi, Feodor~F. Dragan, Bertrand Estellon, Michel Habib, and Yann Vax{\`{e}}s.
\newblock Diameters, centers, and approximating trees of delta-hyperbolicgeodesic spaces and graphs.
\newblock In {\em Proceedings of the 24th {ACM} Symposium on Computational Geometry, College Park, MD, USA, June 9-11, 2008}, pages 59--68. {ACM}, 2008.
\newblock \href {https://doi.org/10.1145/1377676.1377687} {\path{doi:10.1145/1377676.1377687}}.

\bibitem{CyganFKLMPPS15}
Marek Cygan, Fedor~V. Fomin, Lukasz Kowalik, Daniel Lokshtanov, D{\'{a}}niel Marx, Marcin Pilipczuk, Michal Pilipczuk, and Saket Saurabh.
\newblock {\em Parameterized Algorithms}.
\newblock Springer, 2015.
\newblock \href {https://doi.org/10.1007/978-3-319-21275-3} {\path{doi:10.1007/978-3-319-21275-3}}.

\bibitem{DUNHAM1986139}
Douglas Dunham.
\newblock Hyperbolic symmetry.
\newblock {\em Computers \& Mathematics with Applications}, 12(1, Part B):139--153, 1986.
\newblock \href {https://doi.org/10.1016/0898-1221(86)90147-1} {\path{doi:10.1016/0898-1221(86)90147-1}}.

\bibitem{wordprocessing}
David B.~A. Epstein, James~W. Cannon, Derek~F. Holt, Silvio~V.F. Levy, Michael~S. Paterson, and William~P. Thurston.
\newblock {\em Word processing in groups}.
\newblock Jones and Bartlett Publishers, 1992.

\bibitem{FominSubexpRectSteiner}
Fedor~V. Fomin, Daniel Lokshtanov, Sudeshna Kolay, Fahad Panolan, and Saket Saurabh.
\newblock Subexponential algorithms for rectilinear steiner tree and arborescence problems.
\newblock 16(2), March 2020.
\newblock \href {https://doi.org/10.1145/3381420} {\path{doi:10.1145/3381420}}.

\bibitem{Graham1972AnEA}
Ronald~L. Graham.
\newblock An efficient algorithm for determining the convex hull of a finite planar set.
\newblock {\em Inf. Process. Lett.}, 1(4):132--133, 1972.
\newblock \href {https://doi.org/10.1016/0020-0190(72)90045-2} {\path{doi:10.1016/0020-0190(72)90045-2}}.

\bibitem{HananGraph}
M.~Hanan.
\newblock On {S}teiner's problem with rectilinear distance.
\newblock {\em SIAM Journal on Applied Mathematics}, 14(2):255--265, 1966.
\newblock URL: \url{http://www.jstor.org/stable/2946265}.

\bibitem{iversen1992hyperbolic}
Birger Iversen.
\newblock {\em Hyperbolic geometry}.
\newblock Number~25 in London Mathematical Society Student Texts. Cambridge University Press, 1992.
\newblock \href {https://doi.org/10.1017/CBO9780511569333} {\path{doi:10.1017/CBO9780511569333}}.

\bibitem{Kisfaludi-Bak20-Intersection}
S{\'{a}}ndor Kisfaludi{-}Bak.
\newblock Hyperbolic intersection graphs and (quasi)-polynomial time.
\newblock In {\em Proceedings of the 2020 {ACM-SIAM} Symposium on Discrete Algorithms, {SODA} 2020}, pages 1621--1638. {SIAM}, 2020.
\newblock \href {https://doi.org/10.1137/1.9781611975994.100} {\path{doi:10.1137/1.9781611975994.100}}.

\bibitem{hyperTSP20}
S{\'{a}}ndor Kisfaludi{-}Bak.
\newblock A quasi-polynomial algorithm for well-spaced hyperbolic {TSP}.
\newblock In Sergio Cabello and Danny~Z. Chen, editors, {\em 36th International Symposium on Computational Geometry, SoCG 2020}, volume 164 of {\em LIPIcs}, pages 55:1--55:15. Schloss Dagstuhl - Leibniz-Zentrum f{\"{u}}r Informatik, 2020.
\newblock \href {https://doi.org/10.4230/LIPICS.SOCG.2020.55} {\path{doi:10.4230/LIPICS.SOCG.2020.55}}.

\bibitem{Kisfaludi-BakML24}
S{\'{a}}ndor Kisfaludi{-}Bak, Jana Masar{\'{\i}}kov{\'{a}}, Erik~Jan van Leeuwen, Bartosz Walczak, and Karol Wegrzycki.
\newblock Separator theorem and algorithms for planar hyperbolic graphs.
\newblock In {\em 40th International Symposium on Computational Geometry, SoCG 2024}, volume 293 of {\em LIPIcs}, pages 67:1--67:17. Schloss Dagstuhl - Leibniz-Zentrum f{\"{u}}r Informatik, 2024.
\newblock \href {https://doi.org/10.4230/LIPICS.SOCG.2024.67} {\path{doi:10.4230/LIPICS.SOCG.2024.67}}.

\bibitem{Kisfaludi-BakW24}
S{\'{a}}ndor Kisfaludi{-}Bak and Geert van Wordragen.
\newblock A quadtree, a steiner spanner, and approximate nearest neighbours in hyperbolic space.
\newblock In Wolfgang Mulzer and Jeff~M. Phillips, editors, {\em 40th International Symposium on Computational Geometry, SoCG 2024}, volume 293 of {\em LIPIcs}, pages 68:1--68:15. Schloss Dagstuhl - Leibniz-Zentrum f{\"{u}}r Informatik, 2024.
\newblock \href {https://doi.org/10.4230/LIPICS.SOCG.2024.68} {\path{doi:10.4230/LIPICS.SOCG.2024.68}}.

\bibitem{KleinM14}
Philip~N. Klein and D{\'{a}}niel Marx.
\newblock A subexponential parameterized algorithm for subset {TSP} on planar graphs.
\newblock In Chandra Chekuri, editor, {\em Proceedings of the Twenty-Fifth Annual {ACM-SIAM} Symposium on Discrete Algorithms, {SODA} 2014}, pages 1812--1830. {SIAM}, 2014.
\newblock \href {https://doi.org/10.1137/1.9781611973402.131} {\path{doi:10.1137/1.9781611973402.131}}.

\bibitem{Kopczynski21}
Eryk Kopczynski.
\newblock Hyperbolic minesweeper is in {P}.
\newblock In {\em 10th International Conference on Fun with Algorithms, {FUN} 2021}, volume 157 of {\em LIPIcs}, pages 18:1--18:7. Schloss Dagstuhl - Leibniz-Zentrum f{\"{u}}r Informatik, 2021.
\newblock \href {https://doi.org/10.4230/LIPICS.FUN.2021.18} {\path{doi:10.4230/LIPICS.FUN.2021.18}}.

\bibitem{dynamicdistances}
Eryk Kopczyński and Dorota Celińska-Kopczyńska.
\newblock Dynamic distances in hyperbolic graphs, 2021.
\newblock \href {https://arxiv.org/abs/2111.01019} {\path{arXiv:2111.01019}}.

\bibitem{le2019ptassubsettspminorfree}
Hung Le.
\newblock A {PTAS} for subset {TSP} in minor-free graphs.
\newblock {\em CoRR}, abs/1804.01588, 2018.
\newblock \href {https://arxiv.org/abs/1804.01588} {\path{arXiv:1804.01588}}.

\bibitem{DBLP:conf/focs/MarxPP18}
D{\'{a}}niel Marx, Marcin Pilipczuk, and Michal Pilipczuk.
\newblock On subexponential parameterized algorithms for steiner tree and directed subset {TSP} on planar graphs.
\newblock In Mikkel Thorup, editor, {\em 59th {IEEE} Annual Symposium on Foundations of Computer Science, {FOCS} 2018}, pages 474--484. {IEEE} Computer Society, 2018.
\newblock \href {https://doi.org/10.1109/FOCS.2018.00052} {\path{doi:10.1109/FOCS.2018.00052}}.

\bibitem{pelayo2013geodesic}
Ignacio~M. Pelayo.
\newblock {\em Geodesic Convexity in Graphs}.
\newblock SpringerBriefs in Mathematics. Springer New York, NY, 1 edition, 2013.
\newblock \href {https://doi.org/10.1007/978-1-4614-8699-2} {\path{doi:10.1007/978-1-4614-8699-2}}.

\bibitem{treewidth}
Neil Robertson and P.D Seymour.
\newblock Graph minors. iii. planar tree-width.
\newblock {\em Journal of Combinatorial Theory, Series B}, 36(1):49--64, 1984.
\newblock \href {https://doi.org/10.1016/0095-8956(84)90013-3} {\path{doi:10.1016/0095-8956(84)90013-3}}.

\bibitem{thurston97three}
William~P. Thurston.
\newblock {\em Three-Dimensional Geometry and Topology, Volume 1}.
\newblock Princeton University Press, 1997.
\newblock \href {https://doi.org/10.1515/9781400865321} {\path{doi:10.1515/9781400865321}}.

\end{thebibliography}

\appendix

\section{Missing Proofs of \Cref{sec:shortestpaths}}\label{sec:appsec3}

\begingroup
\renewcommand\thelemma{\ref{lemma:shortestPath}}
\begin{lemma}
\begin{enumerate}[(ii)]
    \item \lemmaShortestPathPartTwo
\end{enumerate}
\end{lemma}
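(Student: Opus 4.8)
The plan is to run the same kind of hyperbolic-area accounting used in the proof of part~(i), now applied to the region $F:=F_{P_1\cup P_2}$ enclosed by the two shortest paths. The two ingredients I would lean on are the hyperbolic polygon area formula and the fact (Lemma~\ref{lemma:tileArea}) that every tile has area bounded below by a positive constant: a region that contains many tiles must have large area, whereas a region flanked by shortest paths can only absorb a bounded angle/area budget before one of its bounding paths can be shortened. Lemma~\ref{lemma:singleFaceShortestPath} supplies the local surgery primitive, namely exact control of shortest subpaths inside a single tile, which is what lets me replace a piece of $P_1$ or $P_2$ by a strictly shorter detour whenever $F$ is ``too fat''.

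First I would rule out interior vertices. Suppose some tiling vertex $w$ lies in $\interior(F)$; then all $q$ tiles incident to $w$ are contained in $F$. Consider the path, say $P_1$, that passes on one side of $w$, and build the reference curve $\gamma$ through $w$ and through the centres of the tiles separating $w$ from $P_1$, exactly as the curve through tile centres was built in part~(i). Forming the polygon bounded by $P_1$ and $\gamma$ and applying the area formula, the enclosed region now contains at least $\lceil q/2\rceil$ whole tiles on $P_1$'s side of $w$, hence strictly more than the $(1-\tfrac2p-\tfrac2q)\pi$ budget that already sufficed in part~(i) to force a contradiction once only two extra edges (four right triangles) were present. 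Since a path hugging the geodesic $\ell_{xy}$ exists inside $G_{\ell_{xy}}$ by part~(i), this exhibits a strictly shorter $(x,y)$-path, contradicting that $P_1$ is shortest. Thus $\interior(F)$ contains no vertex.

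With no interior vertices, $F$ is a disk in which every tile touches $\partial F$, and it remains to bound the number $j_v$ of tiles of $F$ incident to a fixed boundary vertex $v$. The interior angle of $F$ at $v$ is $j_v\cdot\tfrac{2\pi}{q}$, so these $j_v$ tiles form a fan sticking into $F$; because $F$ has no interior vertices, the vertex of the fan opposite to $v$ again lies on $\partial F$, i.e.\ on $P_1$ or $P_2$. I would then compare the short route from $v$ across this fan, whose length is pinned down tile-by-tile via Lemma~\ref{lemma:singleFaceShortestPath}, against the boundary route of $P_1\cup P_2$ joining the same two vertices. If $j_v$ is too large the across-fan route is strictly shorter, contradicting that both bounding paths are shortest. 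Packaging this as a local area budget exactly as above, the threshold at which the shortcut appears depends on how many tiles of a given (increasing-in-$p$) area fit into the fan before the budget is spent, which is what yields the bounds $j_v\le 4$, $3$, $2$ for $p=3$, $4$, $\geq 5$ respectively.

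The main obstacle I expect is the case analysis needed to pin down the exact constants $4,3,2$: unlike the clean ``two extra edges'' threshold of part~(i), here one must track precisely how the $j_v$-tile fan meets the two flanking geodesic paths and verify, using the exact tile geometry rather than just the lower bound on tile area, that the across-fan detour is genuinely strictly shorter in each residue regime of $p$. A secondary technical point is the careful treatment of the meeting points $x$ and $y$, where the two paths emanate and the incident fan is bounded by one edge of each path, and the bookkeeping that guarantees the constructed replacement is a simple path so that the area formula applies.
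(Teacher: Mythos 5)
Your treatment of the second claim is essentially the paper's argument, and your worries about it are overblown: once interior vertices are excluded, the $j_v$ tiles at a boundary vertex $v$ form a fan, and the outer boundary of the middle tiles (all edges not incident to $v$) must, by the first claim, be a subpath of $P_1$ or $P_2$ of length exactly $(j_v-2)(p-2)$. Comparing it with the two-edge route through $v$ gives $(j_v-2)(p-2)\le 2$, from which $j_v\le 4,3,2$ for $p=3,4,\ge 5$ falls out immediately --- no exact trigonometry, no area budget, and no residue-of-$p$ case analysis is needed.

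The first claim is where your proposal has a genuine gap. You import the budget $(1-\tfrac2p-\tfrac2q)\pi$ from part (i), but that bound was derived for a polygon bounded by the centre-curve together with the perpendicular feet on the line $\ell$, \emph{and} under the standing assumption that $P_{x,y}$ is strictly shorter than $Q_{x,y}$, which is what gave $\sum\psi_i\ge \pi m+2\pi/p$. In the setting of (ii) there is no line $\ell$ and no strictness available: both $P_1$ and $P_2$ are shortest, so all you can say is $\len(P_j)\le\len(Q_j)$, yielding only the non-strict $\sum_i\psi_i^{(j)}\ge \pi m^{(j)}$. Worse, a one-sided comparison of $P_1$ against a centre-curve cannot produce any contradiction: the region between $P_1$ and its centre-curve genuinely has positive area (it contains the half-tiles hugging $P_1$) whether or not an interior vertex $w$ exists, and your intended conclusion --- exhibiting a strictly shorter $(x,y)$-path --- is the wrong target, since nothing beats a shortest path and an area excess does not convert into a shorter path here. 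The idea you are missing is to run the tile-centre construction on \emph{both} sides simultaneously: after first reducing to the case that $P_1,P_2$ share no intermediate vertices (a reduction you also need so the two curves do not cross), the polygon $A$ bounded by $\gamma_1\cup\gamma_2$ satisfies $\area(A)\le \pi\bigl(m^{(1)}+m^{(2)}\bigr)-\pi m^{(1)}-\pi m^{(2)}=0$ using only the non-strict angle inequalities, while any vertex $w\in\interior(F_{P_1\cup P_2})$ must lie in $\interior(A)$ (the curves $\gamma_j$ meet vertices only at $x$ and $y$) and hence forces $\area(A)>0$ --- the contradiction is positivity versus non-positivity of area, not the existence of a shortcut.
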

\endgroup
\begin{proof}
    \noindent \textbf{(ii)} For the first claim, suppose for contradiction such a vertex $w$ exists. Let the sequence of shared vertices between $V(P_1)$ and $V(P_2)$ be $x,v_1,\dots,v_k,y$. Let $v_i,v_{i+1}$ be the two successive shared vertices such that $w$ lies in the the bounded face formed by the subpaths of $P_1$ and $P_2$ between $v_i$ and $v_{i+1}$. These subpaths are also shortest $(v_i,v_{i+1})$-paths. Therefore proving the claim on shortest paths that do not share intermediate vertices suffices to prove the claim on all shortest paths. For the remainder of the proof, we return to referring to the shortest path endpoints as $x,y$ and the paths as $P_1,P_2$. Furthermore, assume that $P_1$ proceeds clockwise around $P_1\cup P_2$ and $P_2$ proceeds counterclockwise. 
    
    Repeat the construction from (i) to obtain the sequence of tiles following $P_1$ and $P_2$, respectively. For the sequence of tiles following $P_1$, start with the first tile clockwise from the first edge of $P_1$ about $x$ and end at the first tile counterclockwise from the last edge of $P_1$ about $y$. For $P_2$, start with the first tile counterclockwise from the first edge of $P_2$ about $x$ and end at the first tile clockwise from the last edge of $P_2$ about $y$. For the path $P_j$, denote the tile centers $c_1^{(j)}, \dots, c_{m^{(j)}}^{(j)}$, the corresponding $(x,y)$-path $Q_j$, the $(x,y)$-curve $\gamma_j$ intersecting the tile centers and the angles $\phi_i^{(j)}$ and $\psi_i^{(j)}$ made by $\gamma_j$ at each tile center $c_i^{(j)}$. Since $\len(P_j)\leq \len(Q_j)$, therefore $\sum_{i=1}^{m^{(j)}}\psi_i^{(j)}\geq\sum_{i=1}^{m^{(j)}}\phi_i^{(j)}$. Additionally, since $\sum_{i=1}^{m^{(j)}}(\psi_i^{(j)} + \phi_i^{(j)})=2\pi m^{(j)}$, we have that $\sum_{i=1}^{m^{(j)}}\psi_i^{(j)}\geq \pi m^{(j)}$.

    The curve $\gamma_1 \cup \gamma_2$ defines a polygon $A$. Since $P_1$ and $P_2$ do not intersect at vertices other than $x$ and $y$, the curves $\gamma_1$ and $\gamma_2$ do not cross, though they may intersect if a tile appears in both sequences. Hence, $\area(A)$ is well-defined as the total area of the bounded face(s) of $A$. Furthermore, for each closed curve $\gamma_j\cup P_j$, the bounded face $F_{\gamma_j\cup P_j}$ does not contain any vertex in $\interior(F_{P_1\cup P_2})$, since the only vertices intersected by $\gamma_j$ are $x$ and $y$. Therefore, all vertices in $\interior(F_{P_1\cup P_2})$ are in $\interior(A)$, so $w\in\interior(A)$ and thus $\area(A)>0$. 
    
    The sum of the internal angles of $A$ is $\geq\sum_{i=1}^{m^{(1)}}\psi_i^{(1)} + \sum_{i=1}^{m^{(2)}}\psi_i^{(2)}$, excluding the internal angles at $u$ and $v$. Using the formula for the area of a hyperbolic polygon:
    \begin{equation*}
        \area(A) \leq \pi(m^{(1)}+m^{(2)}) - \sum_{i=1}^{m^{(1)}}\psi_i^{(1)} - \sum_{i=1}^{m^{(2)}}\psi_i^{(2)} \leq \pi(m^{(1)}+m^{(2)}) - \pi m^{(1)} - \pi m^{(2)} = 0
    \end{equation*}
    which is a contradiction.

    For the second claim, first note that $x$ and $y$ can each be incident to at most $1$ tile contained in $F_{P_1\cup P_2}$. Similarly, each $v\in V(P_1)\cup V(P_2)$ can only be incident to at most two tiles contained in $F_{P_1\cup P_2}$. This leaves $v$ which are only intersected by one of the two paths. Suppose $v$ is incident to $n$ tiles contained in $F_{P_1\cup P_2}$. Let $C$ be the union of the second and $(n-1)$th tiles. Due to the first claim, we are guaranteed that removing the edges incident to $v$, the remaining boundary of $C$ is a subpath of one of the two shortest paths of length $(n-2)(p-2)$. Since the $P_j$s are shortest, the two-edge long path from the start of the above subpath to its end which passes through $v$ cannot be shorter. Hence, we have that $(n-2)(p-2)\leq2$, which gives the second claim.
\end{proof}

\lemmaBottlenecks* 

\begin{proof}
\textbf{(i)}
    First, note that $q \geq 4$ means that the tiles have internal angle $2\pi / q \leq \pi/2$.
    As a consequence, the union $C$ of the tiles surrounding $v$ is a polygon with angles of at most $\pi$ and is thereby convex.
    The two edges of $t_1$ incident to $v$ separate $t_1$ from the rest of $C$, meaning that any line that intersects both $t_1$ and $t_2$ has to intersect one of those two edges.
    If, additionally, $t_1$ and $t_2$ share an edge $e$, then $t_1 \cup t_2$ is already convex and $e$ separates the two tiles.
    Thus, $\ell$ now has to specifically intersect $e$.

\smallskip
\noindent\textbf{(ii)} 
    Let $t_4$ denote the other shared neighbor of $t_1$ and $t_3$.
    We first prove that $\ell$ cannot intersect $t_4$ after $t_3$.
    Let $e$ be the shared edge between $t_1$ and $t_3$, then consider the lines $m_{12}$ and $m_{34}$ separating $t_1$ from $t_2$ and $t_3$ from $t_4$, respectively.
    Lines $m_{12}$ and $m_{34}$ both make the same angle $\frac{2}{3}\pi$ with $e$ and therefore they cannot intersect.
    Thus, after intersecting $m_{12}$ to go from $t_1$ to $t_2$, it is impossible for $\ell$ to intersect $m_{34}$, which is located on the other side of $m_{12}$.
    Consequently, $t_4$ cannot be the neighbor of $t_1$ that $\ell$ intersects after $t_3$.

    Now, consider the hyperbolic convex hull $C$ of the tiles adjacent to $t_1$.
    Each tile other than $t_1$ must have at least one vertex on the boundary of $C$, which means that $t_2 \cup t_1 \cup t_4$ separates $C$ into (at least) two components, one of which contains $t_3$ as the only tile adjacent to $t_1$.
    Since $\ell$ will not intersect any of $t_2 \cup t_1 \cup t_4$ after $t_3$, it will not reach the other component, proving the statement.

\smallskip
\noindent\textbf{(iii)} 
    Again consider the hyperbolic convex hull $C$ of the tiles adjacent to $t_1$.
    Let $t_3$ and $t_4$ be the shared neighbor tiles of $t_1$ and $t_2$.
    By the same reasoning as in (ii), $t_3 \cup t_1 \cup t_4$ separates $C$ into (at least) two components, one of which contains $t_2$ as the only tile adjacent to $t_1$.
    As $\ell$ does not intersect $t_3$ or $t_4$, and $t_1$ is convex, $\ell$ crosses $t_3 \cup t_1 \cup t_4$ exactly once, which means it cannot intersect any tile adjacent to $t_1$ other than $t_2$ after that.
\end{proof}

\lemmapain*

\begin{proof}
    We will additionally require the following formulae for triangle with side lengths $a,b,c$, where the opposite angles have measures $\alpha, \beta,\gamma$. For all triangles, the following four-parts formula holds: \[\cosh c = \cosh a \cosh b - \sinh a \sinh b \cos \gamma.\]For right-angled triangles with hypotenuse of length $c$: 

    \[\tan \alpha = \frac{\tanh a}{\sinh b},\qquad\cosh a = \frac{\cos \beta}{\sin \alpha}, \qquad \cosh c = \cosh a \cosh b = \cot \alpha \cot \beta.\]

    When $q\geq4$, suppose for contradiction there does not exist a shortest $(v,w)$-path that intersects all elements of $S_\ell$ between $s_v$ and $s_w$. By Lemma \ref{lemma:shortestPath} (i), there exists a shortest $(v,w)$-path which uses only edges from $G_\ell$, so this path does not intersect some element $s_i\in S_\ell$ between $s_v$ and $s_w$. This implies that there exists tiles $t_1,t_2$ with interiors intersected by $\ell$ that share a vertex not in $s_i$ and $\ell$ intersects $t_1$, $s_i$ and $t_2$ in that order. If $t_1$ and $t_2$ share an edge, by Lemma \ref{lemma:bottleneck} (i), $s_i$ must be this shared edge, which is a contradiction. Otherwise, $t_1$ and $t_2$ only share one vertex, and this vertex cannot be intersected by $\ell$. Then $s_i$ must be one of the edges intersected by $\ell$ between $t_1$ and $t_2$, but  all edges intersected by $\ell$ between $t_1$ and $t_2$ are incident to this shared vertex (c.f.\ the proof of Lemma \ref{lemma:bottleneck} (i)), which is also a contradiction. Hence $q<4$, so for the remainder of the proof assume that $q=3$.

    By Lemma \ref{lemma:shortestPath} (i), there exists a shortest $(v,w)$-path $P$ that uses only edges from $G_\ell$. Assume for contradiction that $P$ does not intersect all elements of $S_{vw}$. Without loss of generality let $P$ be the shortest $(v,w)$-path that intersects the maximum number of elements of $S_{vw}$.
    \begin{claim*}
        Let $s_i$ be the first element of $S_{vw}$ which $P$ does not intersect.
        Then, $s_i$ is either a vertex incident to three tiles intersected by $\ell$ or an edge with at least one incident vertex incident to three tiles intersected by $\ell$.
    \end{claim*}
    \begin{claimproof}
        Suppose $s_i$ was an edge such that each of its vertices was incident to a tile not intersected by $\ell$. Applying Lemma \ref{lemma:bottleneck} (iii) on both tiles which have $s_i$ as an edge (once in either direction) implies that any $(v,w)$-path must intersect one of the two vertices of $s_i$, which is a contradiction. Hence, we get the claim.
    \end{claimproof}

    Let $t_0,t_1$ refer to the tiles in $G_\ell$ incident to $s_{i-1}$, such that $t_0$ is intersected by $\ell$ before $t_1$ if we travel along $\ell$ in the direction from $s_v$ to $s_w$. If $s_i$ is a vertex, let $t_3$ refer to tile that $\ell$ next intersects the interior of and let $t_2$ refer to the remaining tile incident to $s_i$. If $s_i$ is an edge, let $t_2$ refer to the other tile with $s_i$ as an edge and let $t_3$ refer to the next tile with interior intersected by $\ell$. In both cases, let $u$ be the vertex incident to $t_1,t_2$ and $t_3$.  Without loss of generality, let $t_3$ be the tile immediately clockwise from $t_1$ about $u$.

    \begin{claim*}
        $P$ intersects $z$, the vertex other than $u$ that is incident to both $t_1$ and $t_3$.
    \end{claim*}
    \begin{claimproof}
        In the case where $s_i=u$ is a vertex, this follows directly from the fact that $t_2$ is not in $G_\ell$, so by Lemma \ref{lemma:bottleneck} (iii) any $(v,w)$ path must intersect either $u$ or $z$. In the case where $s_i$ is the edge shared by $t_1$ and $t_2$, by Lemma \ref{lemma:bottleneck} (ii), all $(v,w)$-paths must intersect some vertex in either $t_2$ or $t_3$. However, $P$ intersects $s_{i-1}$, so $P$ intersects a vertex on $t_1$, and any shortest path from $s_{i-1}$ to a vertex in either $t_2$ or $t_3$ must intersect either $s_i$ or $z$.
    \end{claimproof}

    \begin{claim*}
        Consider the first vertex $x$ in $P$ that is incident to $t_1$. One of the following holds:
        \begin{enumerate}
            \item $p$ is odd and $x$ is the vertex on $t_1$ incident to the edge opposite $u$ which is further clockwise about $t_1$.
            \item $p$ is odd and $x$ is the vertex on $t_1$ incident to the edge opposite $u$ which is further counter-clockwise about $t_1$.
            \item $p$ is even and $x$ is the vertex of $t_1$ opposite $u$. 
        \end{enumerate}
    \end{claim*}
    \begin{claimproof}
        This vertex must be incident to $s_{i-1}$. Let the line extension of the edge between $t_2$ and $t_3$ be $\hat{\ell}$. Since $\ell$ intersects the edge between $t_2$ and $t_3$, the point where it intersects $s_{i-1}$ cannot be on the same side of $\hat{\ell}$ as $z$. Therefore, $x$ is $\geq\lfloor p/2\rfloor-1$ edges away from $z$, since since it is either incident to an edge which contains a relative interior point which is on the opposite side of $\hat{\ell}$ as $z$, or is itself on the opposite side of $\hat{\ell}$ as $z$. Furthermore, since $P$ is shortest and intersects the maximum number of elements of $S_{vw}$, the path around $t_1$ from $x$ to $z$ that does not intersect $u$ must be strictly shorter than the path that does intersect $u$. Hence, $x$ is $\leq \lfloor(p-1)/2\rfloor$ edges away from $z$. These two inequalities together imply that we are in one of the three given cases.
    \end{claimproof}

\begin{figure}
    \centering
    \includegraphics{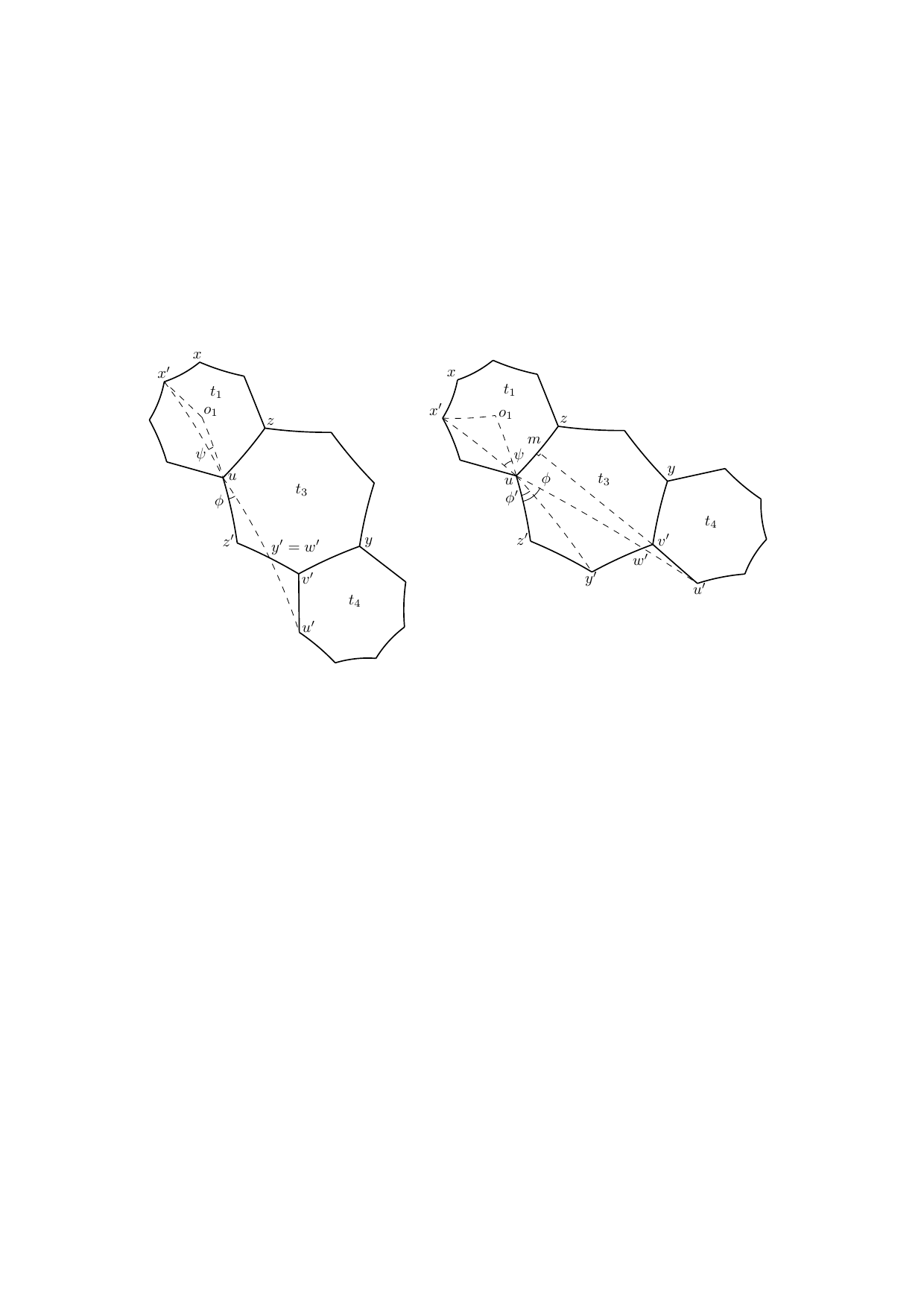}
    \caption{Case 1 (left) and Case 2 (right) when $q=3$ and $p=7$.}
\end{figure}
\begin{figure}
    \centering
    \includegraphics{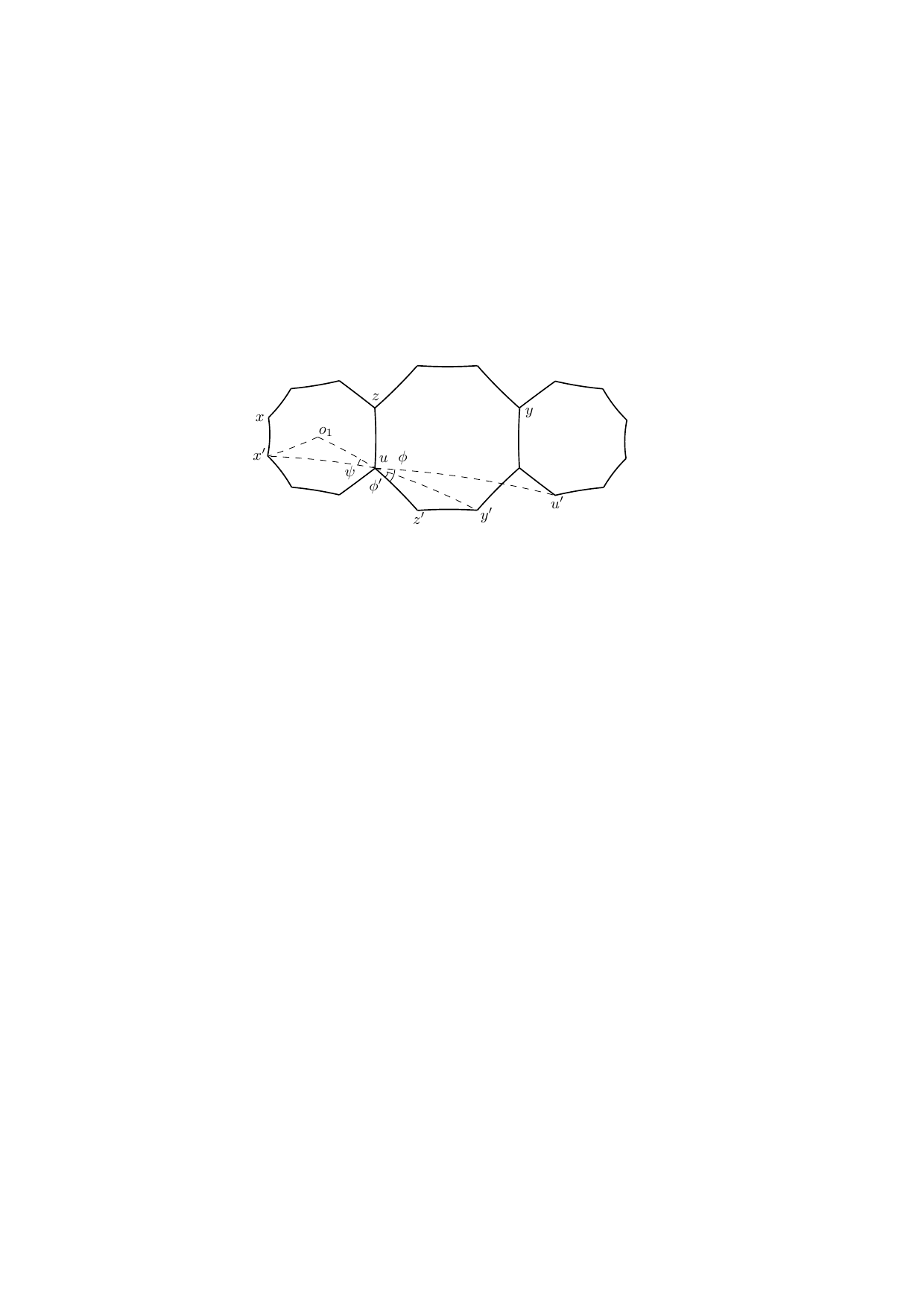}
    \caption{Case 3 when $q=3$ and $p=8$.}
\end{figure}
    
    In all cases, let $y$ be the last vertex of $P$ which is incident to $t_3$. Since $P$ is shortest and intersects the maximum elements of $S_{vw}$, the $zy$-path around $t_3$ is strictly shorter than the $zuy$-path. Furthermore, since $P$ is in $G_\ell$, $y$ must be incident to another tile which is intersected by $\ell$. In each of the cases above, in order for the $zy$-path around $t_3$ to be strictly shorter than the $zuy$-path and the $xzy$-path to be strictly shorter than the $xuy$ path, this next tile must be at most some number of edges away from $z$. We will show that in all cases, the tile which fits the above description cannot be intersected by $\ell$, which gives a contradiction.

    Concretely, in each case, let $t_4$ be the furthest tile clockwise around $t_3$ from $z$ such that $y$ is incident to $t_3$ and $t_4$, the $zy$-path is strictly shorter than the $zuy$-path and the $xzy$-path is strictly shorter than the $xuy$ path. Let the vertex one edge counter-clockwise from $s$ around $t_1$ be $x'$. Let the unique line that intersects $u$ and $x'$ be $\ell'$. It suffices for contradiction to show that $\ell'$ does not intersect $t_4$, since $\ell'$ intersects $\ell$ at some point in $t_1$, so if $\ell'$ does not intersect $t_4$ then neither can $\ell$. For notational convenience, the rest of the proof will refer to $\ell'$ as $\ell$. Without loss of generality, it suffices to show this for the case where $y$ is the furthest distance from $z$ clockwise about $t_3$. When $p$ this occurs when $\len(xzy) = \len(xuy)-2$.  
    
    Furthermore, let $u'$ be the vertex two edges counter-clockwise from $y$ around $t_4$ and let $v'$ be the vertex one edge counter-clockwise from $y$ around $t_4$. Note that $uu'$ intersects $u'v'$ at $u'$ and the line extension of $u'v'$ intersects the center of $t_3$. Hence, $u'v'$ must intersect at least one other tile between $t_3$ and $t_4$. Let $t_5$ be the final tile incident to $v'$. By Lemma \ref{lemma:bottleneck} (ii), since $t_3$ shares a vertex with $t_4$, $uu'$ exactly intersects $t_3,t_5,t_4$ in that order. Hence, $uu'$ intersects the edge shared by $t_3$ and $t_5$. Let this intersection point be $w'$. Since $\angle w'v'u' = 2\pi/3$ and considering $\triangle{w'v'u'}$, we have that $\angle v'u'w'<\pi/3$. Hence, considering the angles on the line extension of $uu'$ about $u'$, the entirety of $t_4$ is contained on or above this line extension. Since $uu'$ intersect $\ell$ at $u$, if $\angle x'uu'<\pi$ then since $\ell$ and the line extension of $uu'$ cannot intersect again, $\ell$ cannot intersect $t_4$. Hence, for a contradiction, it suffices to show that the angle $\angle x'uu'<\pi$ (where $\angle x'uu'$ is measured on the side facing $z$).

\begin{description}
    \item[Case 1.]  The vertex $x$ is the vertex on $t_1$ incident to the edge opposite $u$ and further clockwise about $t_1$. The tile $t_4$ is the tile edge-adjacent to $t_3$ and opposite $z$. Let the vertex incident to $t_3$ and one edge counter-clockwise from $u$ be $z'$. Note that the extension of $z'u$ intersects $o_1$, the center of $t_1$. Let $\psi=\angle x'uo_1$ and $\phi=\angle z'uu'$, such that $\phi>\psi$ implies $\angle x'uu'<\pi$. From the formulas for right-angled triangles, the distance from the center of a tile to its vertex is $r = \cosh^{-1}\left( \cot(\frac{\pi}{3})\cot(\frac{\pi}{p}) \right)$ and half the length of an edge is $a=\cosh^{-1}\left(\frac{\cos(\pi/p)}{\sin(\pi/3)}\right)$. Considering the isosceles triangle $\triangle{x'o_1u}$, we have $\angle x'o_1u = \frac{p-1}{p} \cdot \pi$ so that $\cot(\psi) = \frac{\cosh(r)}{\cot((p-1)\pi/2p)}$. 

    Let $y'$ be the midpoint of the edge incident to $t_3$ and $z'$ but not incident to $u$. Let $\phi'=\angle z'uy'$. Note that $\phi'\leq\phi$, with equality when $p=7$, because the polygon formed by connecting vertices (like $u'$) which are one edge away from $t_3$ is convex and the ray extension of $uy'$ intersects at exactly one of its vertices. By using the four-parts formula on triangle $\triangle{ux'y}$, we have that \[\cot(\phi')=\frac{\sinh(2a)\coth(a)-\cos(2\pi/3)\cosh(2a)}{\sin(2\pi/3)}.\]
    Since $0<\phi',\phi,\psi<\pi/2$ and $\cot$ is strictly decreasing over this interval, it suffices to show that $\cot(\phi)\leq\cot(\phi')<\cot(\psi)$. Using that $\sinh(2a)=2\cosh(a)\sinh(a)$, $\coth(a)=\cosh(a)/\sinh(a)$ and $\cosh(2a)=2\cosh^2(a)-1$, we have that 
    
    \[
        \cot(\phi') =\frac{2\cosh^2(a)-\frac{1}{2}(2\cosh^2(a)-1)}{\frac{1}{2}\sqrt{3}} .
    \]
    Substituting the definition of $a$
    \[
        \cot(\phi')=\frac{2}{\sqrt{3}} \cdot \left( 4\cos^2(\pi/p) - \frac{1}{2} \right) .
    \]
    Finally, using that $\cos^2(\pi/p)\leq1$, we obtain that $\cot(\phi')\leq(7/\sqrt{3})<4.05$. Furthermore, substituting the definition of $r$ into $\cot(\psi)$ gives us $\cot(\psi)=\frac{\cot(\pi/p)}{\sqrt{3}\cot((p-1)\pi/2p)}$. This function is strictly increasing over $p\geq7$, since its derivative \[\frac{\D}{\D p}\cot(\psi)= \pi\cdot \frac{\cot(\pi/p)\csc^2(\pi/2p)+2\cot(\pi/2p)\csc^2(\pi/p)}{2p^2}\] consists only of positive terms over $p\geq7$. When $p=7$, $\cot(\psi)\geq5.25>4.05$, which shows that for all odd $p\geq7$, $\ell$ does not intersect $t_4$.
    
    \item[Case 2.] The vertex $x$ is the vertex on $t_1$ incident to the edge opposite $u$ and further counter-clockwise about $t_1$. The tile $t_4$ is the tile edge-adjacent to $t_3$ and opposite $u$. Define $z',x'$ similarly to above, but define $y'$ to be the vertex incident to $t_3$ which is two edges counter-clockwise from $u$. Now, looking at the isosceles triangle $\triangle{x'o_1u}$ as in Case 1 yields $\cot(\psi)=\frac{\cosh(r)}{\cot((p-3)\pi/2p)}=\frac{\cot(\pi/p)}{\sqrt{3}\cot((p-3)\pi/2p)}$. Considering the isosceles triangle $\triangle{uz'y'}$ and using the four-parts formula, we have that 
    \begin{align*}
        \cot(\phi')&=\frac{\sinh(2a)\coth(2a)-\cos(2\pi/3)\cosh(2a)}{\sin(2\pi/3)} \\
        &=\sqrt{3}\cosh(2a) \\
        &=\frac{8}{\sqrt{3}}\cos^2(\pi/p)-\sqrt{3},
    \end{align*}
    using the same formulas as in the previous case. Similarly, $\phi'<\phi$ for all $p\geq7$ since $uu'$ must intersect the edge of $t_3$ one edge clockwise to the edge shared between $t_3$ and $t_4$. It suffices to show that $\cot(\psi)>\cot(\phi')$. As before, the derivative \[\frac{\D}{\D p}\cot(\psi)=\pi\cdot \frac{2\cot(3\pi/2p)\csc^2(\pi/p)+3\cot(\pi/p)\csc^2(3\pi/2p)}{2p^2},\]contains only positive terms for $p\geq7$. Hence, $\cot(\psi)$ is strictly increasing in $p$. Again bounding $\cos^2(\pi/p)\leq 1$, we have that $\cot(\phi')\leq 5/\sqrt{3}<2.89$. When $p=11$, $\cot(\psi)>4.30$, which proves that $\ell$ does not intersect $t_4$ for $p\geq11$. For $p=9$, the inequality $\cot(\phi')<\cot(\psi)$ can be checked without using the bound on $\cos^2(\pi/p)$ to yield the claim. 
    
    For $p=7$, an exact calculation of $\angle x'uu'$ is required. Let the midpoint of the edge $uz$ be $m$. Note that $\triangle{umu'}$ forms a right-angled triangle, where $\len(um)=a$ and $\len(mu')=2a+b+r$. Hence, $\angle muu'=\tan^{-1}\left(\frac{\tanh(2a+b+r)}{\sinh(a)}\right)$. Therefore, 
    \begin{align*}
    \angle x'uu' &= \psi +\angle o_1um +\angle muu' \\
    &= \cot^{-1}\left(\frac{\cot(\pi/p)}{\sqrt{3}\cot((p-3)\pi/2p}\right) + \pi/3 + \tan^{-1}\left(\frac{\tanh(2a+b+r)}{\sinh(a)}\right) \\
    &< 2.91 <\pi,
    \end{align*}
    which suffices to show a contradiction. 
    
    \item[Case 3.] When $p$ is even, $x$ is the vertex of $t_1$ opposite $u$. The tile $t_4$ is the tile edge-adjacent to $t_3$ that is opposite $t_1$. The calculations of $\cot(\phi')$ are exactly the same as Case 2. In this case, we instead have that  $\cot(\psi)=\frac{\cosh(r)}{\cot((p-2)\pi/2p)}=\frac{\cot(\pi/p)}{\sqrt{3}\cot((p-2)\pi/2p)}$, which again has a first derivative $\frac{\D\cot(\psi)}{\D p} = \frac{2\pi\cot(\pi/p)\csc^2(\pi/p)}{p^2}$ of all positive terms and so is strictly increasing over $p\geq8$. Using $\cos^2(\pi/p)\leq1$, we have that $\cot(\phi')<2.89$ as before and for $p=8$, $\cot(\psi)>3.36$, which suffices to show that for all even $p\geq8$, $\ell$ does not intersect $t_4$.
    \qedhere
\end{description}
\end{proof}

\section{Missing Proofs of \Cref{sec:isomsubgraph}}

\begin{figure}
    \centering
    \includegraphics[width=1.0\linewidth]{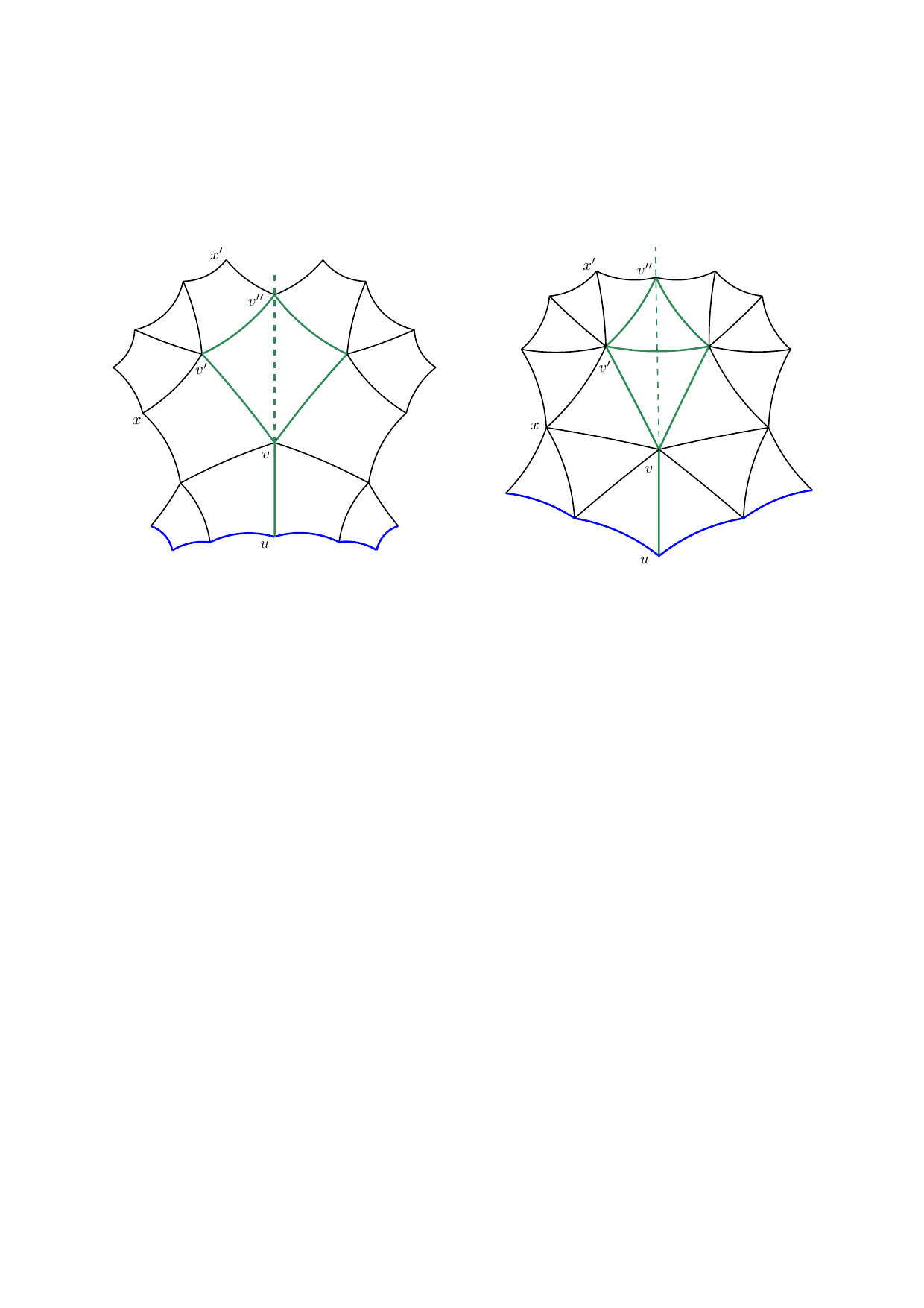}
    \caption{Examples of part of $\partial F$ (blue), line $\ell$ (dashed green) such that $uv\subseteq \ell$ and $G_\ell$ (green), on tiling graphs $G_{4,5}$ (left) and $G_{3,7}$ (right).}
    \label{fig:holefree}
\end{figure}

\lemmaHoleFree*

\begin{proof}
    Suppose there is a bounded face $F$ of $H$ that is not a tile, so there is at least one edge $e=uv$ whose relative interior is inside $\interior(F)$ and $u$ is in $H$. Consider the line $\ell$ of the edge $e$. The line must intersect the boundary of $F$ again either in some vertex $w$ or in some edge $e_w$; in the latter case, let $w$ be some endpoint of $w$. Now, by \Cref{lemma:shortestPath} (i) there is a shortest path $P_0$ from $u$ to $w$ whose first edge is $e=uv$, as depicted in \Cref{fig:holefree}. We claim that every shortest path from $u$ to $w$ uses the edge $uv$.
    
    By \Cref{lemma:shortestPath} (ii) it is sufficient to check detours along tiles that are at least vertex-incident to the shortest path. If $q$ is even, then $\ell$ decomposes into tiling edges (that is, $\ell$ does not intersect the interior of any tile).

    Suppose that $P$ is a shortest path that avoids $v$ by making a detour to the left of $uv$; let $v'$ be the neighbor of $v$ either on $\ell$ (when $q$ is even), or on the tile intersected by $\ell$ to the left of $\ell$. 
    Notice that $P$ must intersect the neighbor $x$ of $v'$ where $v,v',x$ are on the same tile. Thus $P[u,x]$ has length exactly $(p-2)\cdot (\lfloor q/2 \rfloor)-1$. When $p=5$ (and thus $q\geq 4$) or when $p\geq 6$, then this is longer than the path $u,v,v',x$ of length $3$, which is a contradiction. The same holds when $p=4$ and $q\geq 6$. If $p=4$ and $q=5$, then $(p-2)\cdot (\lfloor q/2 \rfloor)-1=3$ and thus stepping back from $x$ to $v'$ is longer than $u,v,v'$; we conclude that the detour must avoid both $v$ and $v'$.
    If $p=3$, then notice that $u,v'$ are not neighbors and not incident to edge-neighboring triangles, thus $u,v,v'$ is the unique shortest path connecting them, so again the detour must also avoid $v$ and $v'$. Let $v''\in \ell$ be the other neighbor of $v'$ on $\ell$. Then either $P$ contains $v'$ ---and we are done by the above argument--- or $P$ also avoids $v''$. Let $x'$ be the shared neighbor of $v'$ and $v''$ on the left of $\ell$ where $P$ must intersect. Then $P[u,x']$ has length exactly $7$ if $p=4$ and $q=5$, and at least $5$ when $p=3$, but this cannot be shortest as $u,v,v',x'$ provides an alternative path of length $3$.

    This concludes the proof that all shortest paths from $u$ to $w$ use the edge $uv$, thus the edge must be contained in $H$.
\end{proof}

\lemmaBoundaryBetweenTerminals*

\begin{proof}
    Suppose for the sake of contradiction that $W_{st}$ is not a shortest $(s,t)$-path. Let $P_{st}$ be a shortest $(s,t)$-path in $\convGK$. Let $\gamma$ be the closed curve formed by $W_{st}$ and $P_{st}$. Partition the vertices contained in $F_\gamma$ into the set of vertices in $V(P_{st})$ and the remaining vertices $A$. Removing $A$ from $\convGK$ leaves us with a geodesic subgraph, since any shortest path between the remaining vertices that previously used a vertex in $A$ intersects $P_{st}$ at least twice and hence can be a shortcut. Therefore, there must have been at least one terminal in $A$. 

    Let the maximal subgraph of $G_K$ with $\gamma$ as its boundary be $G_\gamma$. Without loss of generality, select $P_{st}$ so that $G_\gamma$ contains the fewest vertices. Hence, $P_{st}$ is the unique shortest $(s,t)$-path in $G_\gamma$. For all vertices $v\in A$, since $G_\gamma\subset \convGK$ and $P_{st}$ is a shortest path, there exists a shortest $(s,v)$-path in $G_\gamma$. Let $q\in V(A)$ be a terminal such that applying Lemma \ref{thm:geodesic extension} to $s$ and $q$ identifies a vertex $p\in V(\gamma)$ such that a shortest $(s,p)$-path intersects $q$. Note that $p\notin V(P_{st})$, since the unique shortest path between $s$ and any vertex in $V(P_{st})$ is a subpath of $P_{st}$. Furthermore, since we can select a shortest $(s,p)$-paths for each terminal $q\in V(A)$ such that these paths do not cross, there exists a $q$ such that all terminals in $G_\gamma$ lie (not strictly) on one side of the $(s,p)$-path. Let this path be $P_{sp}$.

    Consider the boundary walk from $s$ to $p$, $W_{sp}$. Note that $|V(W_{sp})|\geq|V(P_{sp})|$, since $P_{sp}$ is a shortest path. Furthermore, $q\notin V(W_{sp})$ since there are no terminals between $s$ and $t$ on $W_{st}$. Let $B$ be the set of vertices in $F_{W_{sp}\cup P_{sp}}$. Then $B\backslash V(P_{sp})$ is non-empty and does not contain any terminals. Removing $B\backslash V(P_{sp})$ from $\convGK$ reduces the number of vertices while maintaining a geodesic subgraph, which contradicts the minimality of $\convGK$.
\end{proof}

\lemmaBoundaryWalk*

\begin{proof}
\textbf{(i)} Suppose for contradiction that $b$ is not a geodesic extension of $W_{uv}$ and there exists a vertex $s$ which is a geodesic extension of $W_{uv}$ and $s$ is further clockwise about $v$ from $W_{uv}$ than $b$. Let $\dist(u,v) = d$, so $\dist(u,s)=d+1$. Since $b$ is not a geodesic extension of $W_{uv}$, $\dist(u,b)\leq d$. Since $\convGK$ is geodesically closed, let $P_{ub}$ be any shortest $(u,b)$-path in $\convGK$. Then $P_{ub}\cup W_{uv} \cup \{vb\}$ forms a closed curve with $s \in \interior(F_{P_{ub}\cup W_{uv} \cup \{vb\}})$. All vertices on this curve have a distance from $u$ that is $\leq d$. However, this contradicts Lemma \ref{thm:geodesic extension}, which identifies a vertex $c$ on this curve such that  $\dist(u,c)\geq d+1$.

\smallskip
\begin{figure}
    \centering
    \includegraphics{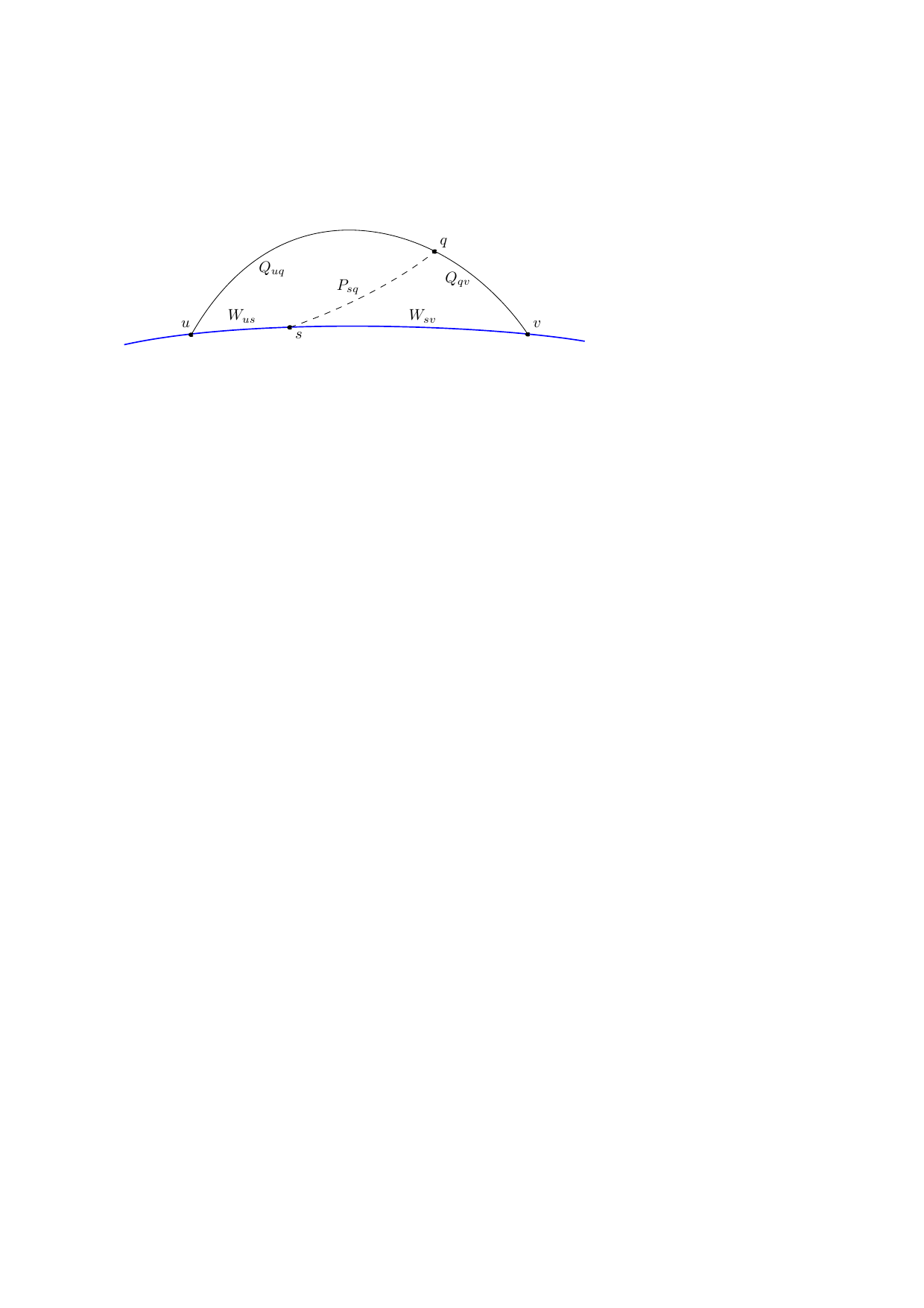}
    \caption{Path $Q_{uv}$ (solid black) connecting two vertices $u,v\in V(\partial\convGK)$. The boundary walk $W_u$ is shown in solid blue.} \label{fig:boundaryShortcut}
\end{figure}
\noindent\textbf{(ii)} We prove the claim by induction on the number of terminals between $u$ and $v$ on $W_{uv}$. If there are no terminals on $W_{uv}$ between $u$ and $v$, then by Lemma \ref{lemma:boundaryBetweenTerminal} $W_{uv}$ is a shortest $(u,v)$-path. This immediately gives the claim.

    If there are terminals on $W_{uv}$ between $u$ and $v$, let $s$ be the first such terminal, as depicted in Figure \ref{fig:boundaryShortcut}. By Lemma \ref{lemma:boundaryBetweenTerminal}, $W_{us}$ is a shortest $(u,s)$-path. Let $\gamma$ be the closed curve $W_{uv}\cup Q_{uv}$.

    \begin{claim} \label{claim:extendCut}
        There exists a vertex $q\in Q_{uv}$ such that the path $W_{us}$ can be extended to a shortest $(u,q)$-path $P_{uq}$ contained in $\gamma$.
    \end{claim}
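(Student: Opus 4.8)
The plan is to prove \Cref{claim:extendCut} by a single application of the geodesic extension lemma (\Cref{thm:geodesic extension}) to the shortest $(u,s)$-path $W_{us}$, using the cycle $\gamma = W_{uv}\cup Q_{uv}$. First I would check that $\gamma$ is a genuine simple cycle of $G$: its two arcs $W_{uv}$ and $Q_{uv}$ share only the endpoints $u,v$ (the path $Q_{uv}$ uses no edges of $\convGK$, while $W_{uv}\subseteq\partial\convGK$), so after discarding any repeated vertices we may view $\gamma$ as a simple closed curve. Since $s$ lies on the arc $W_{uv}$, it is a vertex of $\gamma$, so \Cref{thm:geodesic extension} applies with $a=u$, $b=s$, and $C=\gamma$, and immediately hands us a vertex $q\in V(\gamma)$ together with a shortest $(u,q)$-path $P_{uq}$ that contains $W_{us}$. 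The entire content of the claim then reduces to two facts about the location of this extension: that $P_{uq}$ stays inside the closed region $\overline{F_\gamma}$ bounded by $\gamma$, and that its endpoint $q$ lies on $Q_{uv}$ rather than on $W_{uv}$.

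To control the side on which the extension lives, I would first argue that $F_\gamma$ lies entirely in the exterior of $\convGK$. Indeed, by \Cref{lem:isometricholefree} every bounded face of $\convGK$ is a tile, so the region enclosed by $\partial\convGK$ is completely filled by edges of $\convGK$; any curve that enters this region must therefore traverse edges of $\convGK$. As $Q_{uv}$ uses no such edges, it runs through the exterior, and hence the region $F_\gamma$ caught between the boundary arc $W_{uv}$ and the detour $Q_{uv}$ is exterior to $\convGK$ as well. Next I would pin down the direction in which the extension leaves $s$: because $W_{us}$ reaches the boundary vertex $s$ while running along $\partial\convGK$ from the interior side, the straight geodesic continuation of the ray $\overrightarrow{us}$ past $s$ points away from the interior, i.e.\ into $F_\gamma$. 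Since the path $P_{uq}$ produced by \Cref{thm:geodesic extension} is obtained from \Cref{lemma:extensionByIntersection} and therefore tracks the line $\ell_{us}$, it enters $F_\gamma$ immediately after $s$ and stays in the bounded region $\overline{F_\gamma}$ until it first meets $\gamma$ again at $q$; in particular $P_{uq}\subseteq\overline{F_\gamma}$.

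It remains to show $q\in Q_{uv}$, and this is the step I expect to be the main obstacle. The endpoint $q$ lies on $\partial F_\gamma = W_{uv}\cup Q_{uv}$, so the only alternative to rule out is that the extension re-meets the arc $W_{uv}$ at some vertex beyond $s$. Here I would exploit that $P_{uq}$ follows the single hyperbolic line $\ell_{us}$: this line already crosses the boundary of $\convGK$ at $u$ and $s$, and since the portion of the boundary in question sits in convex position with respect to $F_\gamma$, the ray cannot touch $W_{uv}$ a third time and is forced to exit through the detour $Q_{uv}$. The delicate point is justifying this ``no third crossing'' behaviour at the level of the tiling graph rather than for an idealised convex curve; I would handle it by combining the exterior-ness of $F_\gamma$ established above with the local geometry at $s$ (the tile structure that \Cref{lemma:shortestPath}(ii) and \Cref{lemma:extensionByIntersection} provide), which bounds how the line $\ell_{us}$ can re-approach the hull boundary. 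Once $q\in Q_{uv}$ is secured, $P_{uq}$ is the desired shortest $(u,q)$-path extending $W_{us}$ and contained in $\gamma$, completing the claim.
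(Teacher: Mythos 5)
Your plan has a genuine gap, and it sits exactly where you flagged ``the main obstacle.'' First, a technical point: applying \Cref{thm:geodesic extension} with $b=s$ and $C=\gamma$ is degenerate, because $s$ already lies \emph{on} $\gamma$, so the lemma's conclusion is satisfied by $c=s$ and $Q=W_{us}$; the lemma as stated hands you no proper extension beyond $s$, and you would have to redo its internals to get one. Second, and more fundamentally, even a proper extension tracking $\ell_{us}$ past $s$ need not exit through $Q_{uv}$. The arc $W_{sv}$ is a concatenation of graph shortest paths between terminals, with corners at the terminals; it is neither a hyperbolic geodesic nor in convex position with respect to $F_\gamma$, so the line $\ell_{us}$ can perfectly well re-intersect $W_{uv}$ beyond $s$. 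Your ``no third crossing'' step is therefore not a delicate point to be patched locally --- it is false in general, and no argument from the exterior-ness of $F_\gamma$ plus local tile geometry at $s$ can rescue it. (A secondary issue: containment of $P_{uq}$ in $\overline{F_\gamma}$ does not follow from \Cref{lemma:extensionByIntersection} either, since that lemma only forces the path to touch one endpoint of each crossed element of $S_\ell$ and otherwise lets it run through $G_\ell$ on either side of the line.)

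The paper's proof is built precisely to absorb the case you cannot exclude. It is iterative: initialize $s'=s$, maintain the invariant that $W_{us'}\supseteq W_{us}$ is a shortest $(u,s')$-path, and induct on $\len(W_{s'v})$. At each step, \Cref{lemma:boundaryWalkLemmas}(i) gives a dichotomy: either the next boundary vertex $c$ after $s'$ is a geodesic extension of $W_{us'}$ (advance $s'$ to $c$), or every geodesic extension lies in $F_\gamma$; in the latter case one extends by one edge into $F_\gamma$ and only then invokes \Cref{thm:geodesic extension} --- now with an endpoint strictly inside the cycle, so the application is non-degenerate --- to land on some $a\in V(\gamma)$. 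If $a\in Q_{uv}$, done. Crucially, if instead $a\in V(W_{uv})$, the inductive hypothesis of Lemma \ref{lemma:boundaryWalkLemmas}(ii) itself (the claim and part (ii) are proved by a mutual induction) shows the extension from $s'$ to $a$ is no longer than $W_{s'a}$, hence $W_{ua}$ is again a shortest $(u,a)$-path, and one sets $s'=a$ and continues; termination holds because $\len(W_{s'v})$ strictly decreases. In other words, the event your proposal tries to forbid --- the extension falling back onto $W_{uv}$ --- is instead converted into progress along the boundary walk. Without that inductive mechanism (or some substitute for it), your single-application argument does not close.
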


    Given such a vertex $q$, let $Q_{uq}\cup Q_{qv}=Q_{uv}$ and $W_{us}\cup P_{sq}=P_{uq}$. Since $P_{uq}$ is a shortest path, we have that $\len(Q_{uq}) \geq \len(W_{us}) + \len(P_{sq})$. Since there are strictly fewer terminals on $W_{sv}$ between $s$ and $v$ than on $W_{uv}$ between $u$ and $v$, by the Inductive Hypothesis we have that $\len(P_{sq})+\len(Q_{qv})\geq\len(W_{sv})$. Combining both inequalities gives $\len(Q_{uq}) + \len(P_{sq})+\len(Q_{qv}) \geq \len(W_{us}) + \len(P_{sq}) + \len(W_{sv}) \Rightarrow \len(Q_{uv})\geq \len(W_{uv})$, which proves the lemma.

    To prove Claim \ref{claim:extendCut}, initialize $s'=s$. We maintain the invariants that $s'\in V(W_{sv})$ and $W_{us'}\supseteq W_{us}$ is a shortest $(u,s')$-path, and induct on $\len(W_{s'v})$. If $\len(W_{s'v})=0$ then $q=s'$ suffices. Otherwise, let $c$ be the next vertex after $s'$ on $W_u$. By Lemma \ref{lemma:boundaryWalkLemmas} (i) either $c$ is a geodesic extension of $W_{us'}$ or we can identify a geodesic extension $t$ in $F_{\gamma}$. In the former case, by the Inductive Hypothesis with $s'=c$, we have the claim. In the latter case, if $t\in V(Q_{uv})$ then we are done. Otherwise, by Lemma \ref{thm:geodesic extension} there exists a $a\in V(\gamma)$ such that $W_{us'}$ can be extended to a shortest $(u,a)$-path. If $a\in Q_{uv}$ then we are done. Otherwise, $a\in V(W_{uv})$, so by the Inductive Hypothesis on Lemma \ref{lemma:boundaryWalkLemmas} (ii), the path from $s'$ to $a$ is no longer than $W_{s'a}$. Hence, $W_{ua}$ is a shortest $(u,a)$-path and setting $s'=a$ suffices by the Inductive Hypothesis.
\end{proof}

\section{Missing Proofs of \Cref{sec:computing}}

\lemmaHatGKProperties*

\begin{proof}
    First, we show containment in $\conv_G(K)$. The boundary of $\hat{G}_K$ comprises a union of several shortest paths between terminals. All shortest paths between terminals are included in $\conv_G(K)$. By \Cref{lem:isometricholefree} it follows that $\hat{G}_K$ is a subgraph of $\conv_G(K)$.

    To see that $\hat{G}_K$ is isometrically closed, observe that by construction, any vertices $u,v\in V(\partial\hat{G}_K)$ must be incident to a tile intersected by $\partial \conv_\Hyp(K)$. If $u$ and $v$ are on the same shortest path between terminals that define $\partial\hat{G}_K$ then there immediately exists a shortest ($u,v$)-path in $\hat{G}_K$. Otherwise, let $b_u$ and $b_v$ be any points on $\partial\conv_\Hyp(K)$ that are also on a tile incident to $u$ and $v$, respectively. By Lemma \ref{lemma:shortestPath} (i), there is a shortest $(u,v)$-path in $G_{b_ub_v}$, which we denote $P$. Since $\conv_\Hyp(K)$ is a convex polygon, the line segment $b_ub_v$ is a subset of $\conv_\Hyp(K)$, so all tiles and edges of $G_{b_ub_v}$ intersect $\conv_\Hyp(K)$. The vertices in $G_{b_ub_v}$ incident to tiles and edges in $\interior(\conv_\Hyp(K))$ are guaranteed to be in $\hat{G}_k$. For the tiles in $G_{b_ub_v}$ that intersect $\partial\conv_\Hyp(K)$, consider each subpath $Q_{st}$ of $\partial\hat{G}_K$ corresponding to a shortest path between two terminals $s$ and $t$. If the endpoints of $Q_{st}$ are on the same side of $b_ub_v$, then $Q_{st}$ either does not cross $P$ or crosses $P$ at least twice. For each $Q_{st}$ that crosses $P$ at least twice, the corresponding path between the first and last crossings on $P$ can be replaced to give a new shortest $(u,v)$-path $P'$ that stays within $\hat{G}_K$. Since $b_ub_v$ divides $\conv_\Hyp(K)$ into exactly two parts, the only $Q_{st}$ that can have endpoints on different sides of $b_ub_v$ are the ones that intersect either $u$ or $v$. For these $Q_{st}$, the subpaths of $P$ corresponding to the first and last vertices that intersect each $Q_{st}$ can also be replaced such that $P'$ that stays within $\hat{G}_K$. Hence, $P'$ is a shortest $(u,v)$-path using only edges from $\hat{G}_K$ and thus $\hat{G}_K$ is isometrically closed.
\end{proof}

\end{document}